\newtheorem{theorem}{Theorem}[section]
\newtheorem{lemma}[theorem]{Lemma}
\newtheorem{definition}[theorem]{Definition}
\newtheorem{corollary}[theorem]{Corollary}
\newtheorem{claim}[theorem]{Claim}
\newtheorem{fact}[theorem]{Fact}
\newtheorem{remark}[theorem]{Remark}
\newcommand{\setdiff}{\Delta}
\newcommand{\NP}{\mbox{\rm NP}}
\newcommand{\val}{\mathsf{val}}
\newcommand{\Var}{\mathsf{Var}}
\DeclareMathOperator{\E}{E}
\def\calS{{\cal S}}
\def\calI{{\cal I}}
\def\hf{\hat{f}}
\def\eps{\epsilon}
\def\hh{\hat{h}}
\def\hg{\hat{g}}
\def\time_h{3}
\def\maxcut{{\sc MaxCut}}
\def\SSE{{\sc SmallSet Expansion}}
\def\maxthreesat{{\sc Max3SAT}}
\def\maxxor{{\sc MaxXOR}}
\def\maxthreexor{{\sc Max3XOR}}
\def\mincut{{\sc MinCUT}}
\def\maxbisect{{\sc MaxBisection}}
\def\UG{{\sc UniqueGames}}
\def\NP{{\sc NP}}
\newcommand{\sq}{\hbox{\rlap{$\sqcap$}$\sqcup$}}
\newcommand{\qed}{\hspace*{\fill}\sq}
\newenvironment{proof}{\noindent {\it Proof.}\ }{\qed\par\vskip 4mm\par}
\newenvironment{proofof}[1]{\bigskip \noindent {\it Proof of #1.}\quad }
{\qed\par\vskip 4mm\par}
\begin{document}

\title{Parameterized Algorithms for Constraint Satisfaction Problems Above Average with Global Cardinality Constraints}
\author{Xue Chen\thanks{Supported by NSF Grant CCF-1526952.}
\\
Computer Science Department\\
University of Texas at Austin\\
{\tt xchen@cs.utexas.edu}
\and Yuan Zhou\\
Computer Science Department\\
Indiana University at Bloomington\\
{\tt yzhoucs@indiana.edu}
}

\maketitle

\begin{abstract}
Given a constraint satisfaction problem (CSP) on $n$ variables, $x_1, x_2, \dots, x_n \in \{\pm 1\}$, and $m$ constraints, a global cardinality constraint has the form of $\sum_{i = 1}^{n} x_i = (1-2p)n$, where $p \in (\Omega(1), 1 - \Omega(1))$ and $pn$ is an integer. Let $AVG$ be the expected number of constraints satisfied by randomly choosing an assignment to $x_1, x_2, \dots, x_n$, complying with the global cardinality constraint. The CSP above average with the global cardinality constraint problem asks whether there is an assignment (complying with the cardinality constraint) that satisfies more than $(AVG+t)$ constraints, where $t$ is an input parameter.

In this paper, we present an algorithm that finds a valid assignment satisfying more than $(AVG+t)$ constraints (if there exists one) in time $(2^{O(t^2)} + n^{O(d)})$. Therefore, the CSP above average with the global cardinality constraint problem is fixed-parameter tractable.
\end{abstract}
\thispagestyle{empty}
\setcounter{page}{0}
\pagebreak

\section{Introduction}
In a $d$-ary constraint satisfaction problem (CSP), we are given a set of boolean variables $\{x_1, x_2, \cdots,x_n\}$ over $\{\pm 1\}$ and $m$ constraints $C_1,\cdots,C_m$, where each constraint $C_i$ consists of a predicate on at most $d$ variables. A constraint is satisfied if and only if the assignment of the related variables is in the predicate of the constraint. The task is to find an assignment to $\{x_1,\cdots,x_n\}$ so that the greatest (or the least) number of constraints in $\{C_1,\cdots,C_m\}$ are satisfied. Simple examples of binary CSPs are the \maxcut~problem and the \mincut~problem, where each constraint includes 2 variables, and the predicate is always $\{(-1, +1), (+1, -1)\}$. The \maxthreesat~problem is an example of a ternary CSP. In a \maxthreesat~problem, each constraint includes 3 variables, and the predicate includes $7$ out $8$ possible assignments to the 3 variables. Another classical example of a ternary CSP is called the \maxthreexor~problem, where each predicate is either $\{(-1, -1, -1), (-1, +1, +1), (+1, -1, +1),$ $ (+1, +1, -1)\}$ or \\$\{(+1, +1, +1), (+1, -1, -1), (-1, +1, -1), (-1, -1, +1)\}$.

\paragraph{Constraint satisfaction problem above average.} For each CSP problem, there is a trivial randomized algorithm which chooses an assignment uniformly at random from all possible assignments. In a seminal work \cite{Hastad01}, H{\aa}stad showed that for \maxthreesat~and \maxthreexor,  it is \NP-hard to find an assignment satisfying $\epsilon m$ more constraints than the trivial randomized algorithm (in expectation) for an arbitrarily small constant $\epsilon > 0$. In other words, there is no non-trivial approximation algorithm (i.e. better approximation than the trivial randomization) for \maxthreesat~or \maxthreexor~assuming {{\sc P}}$\neq${\sc NP}.  Recently, this result was extended to many more CSPs by Chan \cite{Chan}.  Assuming the Unique Games Conjecture \cite{khot}, Austrin and Mossel \cite{AM09} provided a sufficient condition for CSPs to admit non-trivial approximation algorithms. Later, Khot, Tulsiani and Worah\cite{KTW} gave a complete characterization for these CSPs. Guruswami et al. \cite{GHMRC} showed that all ordering CSPs (a variant of CSP) do not admit non-trivial approximation algorithm under the Unique Games Conjecture.

Therefore, it is natural to set the expected number of constraints satisfied by the trivial randomized algorithm as a baseline, namely $AVG$, and ask for better algorithms. In the \emph{constraint satisfaction problem above average}, we are given a CSP instance $\calI$ and a parameter $t$, and the goal is to decide whether there is an assignment satisfying $t$ constraints more than the baseline $AVG$.

The CSP above average problem has been extensively studied in the parameterized algorithms designing research community. Gutin et al. \cite{GKS09}
showed that the \maxthreexor~above average (indeed {\sc Max$d$XOR} for arbitrary $d$) is fixed-parameter tractable (FPT), i.e., there is an algorithm that makes the correct decision in time $f(t) \mathrm{poly}(n)$. Later, Alon et al. \cite{AGKSY} showed that every CSP above average admits an algorithm with runtime $\left(2^{O(t^2)} + O(m)\right)$, and therefore is fixed-parameter tractable. Later, Crowston et al. \cite{CFGJKRRT} improved the running time of \maxxor~to $2^{O(t \log t)} \cdot \mathrm{poly}(nm)$. Recently, Makarychev, Makarychev and Zhou \cite{MMZ15} studied a variant, and showed that the ordering CSP above average is fixed-parameter tractable.

\paragraph{Constraint satisfaction problem with a global cardinality constraint.} Given a boolean CSP instance $\calI$, we can impose a global cardinality                                                                                                                                                                                                                                                              constraint $\sum_{i = 1}^{n} x_i = (1-2p)n$ (we assume that $pn$ is an integer). Such a constraint is called the \emph{bisection constraint} if $p = 1/2$. For example, the \maxbisect~problem is the \maxcut~problem with the bisection constraint. Constraint satisfaction problems with global cardinality constraints are natural generalizations of boolean CSPs. Researchers have been studying approximation algorithms for CSPs with global cardinality constraints for decades, where the \maxbisect~problem \cite{FJ97,Ye01,HZ02,FL06,GMRSZ,RT12,ABG13} and the \SSE~problem \cite{RS10,RST10,RST12} are two prominent examples.

Adding a global cardinality constraint could strictly enhance the hardness of the problem. The \SSE~problem can be viewed as the \mincut~problem with the cardinality of the selected subset to be $\rho|V|$ ($\rho \in (0, 1)$).  While \mincut~admits a polynomial-time algorithm to find the optimal solution, we do not know a good approximation algorithm for \SSE. Raghavendra and Steurer\cite{RS10} suggested that the \SSE~problem is where the hardness of the notorious \UG~problem \cite{khot} stems from.


For many boolean CSPs, via a simple reduction described in \cite{GMRSZ,RT12}, if such a CSP does not admit a non-trivial approximation algorithm, neither does the CSP with the bisection constraint. Therefore, it is natural to set the performance of the trivial randomized algorithm as the baseline, and ask for better algorithms for CSPs with a global cardinality constraint. Specifically, given a CSP instance $\calI$ and the cardinality constraint $\sum_{i = 1}^{n} x_i = (1-2p)n$, we define our baseline $AVG$ to be the expected number of constraints satisfied by uniform randomly choosing an assignment complying with the cardinality constraint. The task is to decide whether there is an assignment (again complying with the cardinality constraint), satisfying at least $AVG + t$ constraints. We call this problem \emph{CSP above average with a global cardinality constraint}, and our goal is to design an FPT algorithm for the problem.



Recently, Gutin and Yeo \cite{GY10} showed that it is possible to decide whether there is an assignment satisfying more than $\lceil m/2 + t \rceil$ constraints in time $\left(2^{O(t^2)} + O(m)\right)$ for the \maxbisect~problem with $m$ constraints and $n$ variables. The running time was later improved to $\left(2^{O(t)} + O(m)\right)$ by Mnich and Zenklusen \cite{MZ12}. However, observe that in the \maxbisect~problem, the trivial randomized algorithm satisfies $AVG = \left(\frac{1}{2} + \frac{1}{2(n-1)}\right)m $ constraints in expectation. Therefore, when $m \gg n$, our problem \maxbisect~above average asks more than what was proved in \cite{GY10, MZ12}. For the \maxcut~problem without any global cardinality constraint, Crowston et al. \cite{CJM15} showed that optimizing above the Edwards-Erd{\H{o}}s bound is fixed-parameter tractable, which is comparable to the bound in our work, while our algorithm outputs a solution strictly satisfying the global cardinality constraint $\sum_{i = 1}^{n} x_i = (1-2p)n$.

\subsection{Our results} In this paper, we show FPT algorithms for boolean CSPs above average with a global cardinality constraint problem. Our main theorem is stated as follows.

\begin{version:10pages}
\begin{theorem}[Informal version of Theorem~\ref{CSP_bisection}]\label{inform_FPT}
For any integer constant $d$ and real constant $p_0 \in (0, 1/2]$, given a $d$-ary CSP with $n$ variables and $m = n^{O(1)}$ constraints, a global cardinality constraint $\sum_{i = 1}^{n} x_i = (1-2p)n$ such that $p \in [p_0, 1 - p_0]$, and an integer parameter $t$, there is an algorithm that runs in time $(n^{O(1)} + 2^{O(t^2)})$ and  decides whether there is an assignment complying with the cardinality constraint to satisfy at least $(AVG + t)$ constraints or not. Here $AVG$ is the expected number of constraints satisfied by uniform randomly choosing an assignment complying with the cardinality constraint.
\end{theorem}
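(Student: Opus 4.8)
The plan is to adapt the second-moment dichotomy of Alon et al.~\cite{AGKSY} from the Boolean cube to the \emph{slice} $U_p := \{x\in\{\pm1\}^n:\sum_i x_i=(1-2p)n\}$. Write the number of satisfied constraints as a degree-$\le d$ multilinear polynomial $f(x)=\sum_{|S|\le d}c_S\chi_S(x)$ with $\chi_S:=\prod_{i\in S}x_i$; Fourier-expanding each constraint's $\{0,1\}$-indicator shows every $c_S$ is an integer multiple of $2^{-d}$, so $c_S\ne0\Rightarrow|c_S|\ge 2^{-d}$. Letting $\mu_p$ be uniform on $U_p$, we have $AVG=\E_{\mu_p}[f]$, the excess $X:=f-AVG$ is a degree-$\le d$ function on $U_p$ with $\E_{\mu_p}[X]=0$, and $V:=\Var_{\mu_p}[X]$ is computable in time $n^{O(1)}$ from the closed-form expression for $\E_{\mu_p}[\chi_W]$ in $n$ and $pn$. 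We may assume $n\ge n_0(d,p_0,t)$ for a suitable $n_0=\Theta_{d,p_0}(t^2)$; otherwise $\binom{n}{pn}\le 2^n=2^{O_{d,p_0}(t^2)}$ and we just enumerate all valid assignments.

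The dichotomy is on $V$. If $V\ge C_1(d,p_0)\,t^2$ for a suitable $C_1$, I would invoke a hypercontractive inequality for degree-$\le d$ functions on $U_p$ (valid with constants uniform over $p\in[p_0,1-p_0]$ for $n$ large) to get a fourth-moment bound $\E_{\mu_p}[X^4]\le B(d,p_0)\,V^2$; a Paley--Zygmund-type estimate then gives $\Pr_{\mu_p}[X>t]>0$, so some valid assignment beats $AVG+t$ and we answer YES (derandomizing by conditional expectations for the search version). The main case is $V<C_1 t^2$, where I want a \emph{kernel}: a set $R$ of size $k=O_{d,p_0}(t^2)$ such that $X|_{U_p}$ is a function of $x|_R$ alone. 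Given such an $R$ (so $k\le p_0 n$), every sign pattern on $R$ extends to some $x\in U_p$ -- the range condition since $|(1-2p)n|\le(1-2p_0)n$, the parity condition since $pn\in\mathbb Z$ -- so one enumerates the $2^k=2^{O_{d,p_0}(t^2)}$ patterns, evaluates $X$ on each (only $k^{O(d)}=t^{O(d)}$ monomials survive), and compares to $t$; after $n^{O(1)}$ preprocessing this gives total running time $n^{O(1)}+2^{O(t^2)}$.

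Everything hinges on a kernelization lemma: $V<C_1 t^2$ forces $k=O_{d,p_0}(t^2)$. Without the cardinality constraint this is immediate, as then $V=\sum_{\emptyset\ne S}c_S^2\ge 2^{-2d}\,|\{S:c_S\ne0\}|$. Over the slice the constraint $\sum_i x_i\equiv(1-2p)n$ is a syzygy, so the multilinear representation is not unique and ``the coordinates $X$ depends on'' is a delicate notion: a variable can appear in $\{S:c_S\ne0\}$ without $X$ depending on it ($(\sum_i x_i)^2$ is constant on $U_p$ yet has $c_{\{i,j\}}=1$ for all $i\ne j$), and there may be several inclusion-minimal relevant sets (both $\{1,2\}$ and $\{3,\dots,n\}$ work for $X=x_1+x_2$). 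The right device is the relation $i\sim j$ meaning ``$X$ is invariant under swapping coordinates $i$ and $j$'', which is an equivalence relation (transitivity follows by conjugating transpositions); since $X|_{U_p}$ is a function of $x|_R$ iff $R^c$ lies in a single $\sim$-class, a minimum-size such $R$ is the complement of the largest $\sim$-class, and it is computable in time $n^{O(1)}$.

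It remains to prove $V\ge\delta(d,p_0)\,k$, which I would do by induction on $d$. Observe $i\sim j$ iff $\partial_{ij}X\equiv0$ on the $(n-2)$-slice $U'$, where $\partial_{ij}X$, defined by $X-\sigma_{ij}X=(x_i-x_j)\,\partial_{ij}X$, is a degree-$\le(d-1)$ polynomial in the remaining coordinates with coefficients still in $2^{-d}\mathbb Z$. Hence for every inter-class pair the inductive bound gives $\E_{\mu_p}[(X-\sigma_{ij}X)^2]=\Theta(1)\cdot\Pr_{\mu_p}[x_i\ne x_j]\cdot\E_{U'}[(\partial_{ij}X)^2]\ge c(d,p_0)>0$; since there are at least $nk/2$ inter-class pairs, $\sum_{i<j}\E_{\mu_p}[(X-\sigma_{ij}X)^2]\ge c(d,p_0)\,nk/2$. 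On the other hand, degree-$\le d$ functions span the first $d+1$ eigenspaces of the Bernoulli--Laplace walk on $U_p$, whose level-$\ell$ eigenvalue is $1-\Theta_d(1/n)$, so its Poincar\'e inequality gives $\sum_{i<j}\E_{\mu_p}[(X-\sigma_{ij}X)^2]\le C(d)\,nV$; comparing the two bounds yields $k=O_{d,p_0}(V)$, as desired. I expect the main obstacle to be precisely this comparison -- pinning down the ``relevant set'' correctly under the cardinality syzygy, and calibrating the two sides of the slice Dirichlet-form estimate (while threading the $2^{-d}$-quantization through the induction) so that they match up to constants and no spurious factor of $n$ survives.
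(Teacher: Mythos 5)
Your proposal is correct in outline but reaches the kernel by a genuinely different route from the paper, so a comparison is in order. For the large-variance case you and the paper do the same thing (hypercontractivity on the slice plus the fourth-moment method), except that you invoke the $2\to4$ inequality on $U_p$ as a black box whereas the paper proves it from scratch (for $p=1/2$ via a matching-based reduction to the product case, for general $p$ via an induction whose hardest term is controlled by the spectral norm of a set-symmetric matrix); since such an inequality is available in the literature (e.g.\ via the Lee--Yau log-Sobolev constant for Bernoulli--Laplace, or the independent work of Filmus and Mossel that the paper acknowledges), this is a legitimate shortcut rather than a gap. The real divergence is in the small-variance case. The paper identifies the null space of the quadratic form $\Var_{D_p}(f)$ with the syzygy ideal $\mathrm{span}\{(\sum_i x_i)\chi_S\}$, shows the remaining eigenvalues are $\Theta(1)$ via the Johnson scheme, projects $f$ onto the orthogonal complement, and then runs a delicate level-by-level \emph{rounding} of the projection $h_f$ to a polynomial with coefficients in $\frac{\gamma}{d!(d-1)!\cdots 2!}\mathbb{Z}$ so that the surviving representation has few nonzero quantized coefficients; for general $p$ it additionally routes through a random restriction to the bisection case. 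You instead define the relevant set intrinsically via the swap-equivalence relation $i\sim j$, lower-bound each inter-class Dirichlet term by induction on degree using the discrete derivative $\partial_{ij}X$ (whose coefficients stay in $\gamma\mathbb{Z}$, so quantization is preserved for free), and upper-bound the total Dirichlet form by $O(dn)\Var$ using the transposition-walk eigenvalues on Johnson levels $\le d$. This replaces the paper's rounding machinery and the random-restriction reduction with a single spectral comparison, handles all $p\in[p_0,1-p_0]$ uniformly, and sidesteps the non-uniqueness of the multilinear representation entirely; both arguments ultimately rest on the same Johnson-scheme spectral facts. The one place you should spell out carefully is the inductive statement itself: you need simultaneously ``$\Var\ge\delta(d')\gamma^2 k$'' and the second-moment bound ``$g\not\equiv 0$ on the slice and coefficients in $\gamma\mathbb{Z}$ imply $\E[g^2]\ge c(d')\gamma^2$'' (the latter to cover the case where $\partial_{ij}X$ is a nonzero \emph{constant} on the subslice, which is handled by noting that pointwise values of a $\gamma$-quantized polynomial lie in $\gamma\mathbb{Z}$), together with the mild bookkeeping that the slice parameter drifts by $O(d/n)$ down the induction; none of this threatens the argument.
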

\end{version:10pages}
\begin{version:full}
\begin{theorem}[Informal version of Theorem~\ref{CSP_bisection} and Theorem~\ref{CSP_global}]\label{inform_FPT}
For any integer constant $d$ and real constant $p_0 \in (0, 1/2]$, given a $d$-ary CSP with $n$ variables and $m = n^{O(1)}$ constraints, a global cardinality constraint $\sum_{i = 1}^{n} x_i = (1-2p)n$ such that $p \in [p_0, 1 - p_0]$, and an integer parameter $t$, there is an algorithm that runs in time $(n^{O(1)} + 2^{O(t^2)})$ and  decides whether there is an assignment complying with the cardinality constraint to satisfy at least $(AVG + t)$ constraints or not. Here $AVG$ is the expected number of constraints satisfied by uniform randomly choosing an assignment complying with the cardinality constraint.
\end{theorem}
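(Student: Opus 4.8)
The plan is to transport the Fourier-analytic ``above average'' framework of Alon et al.~\cite{AGKSY} from the Boolean cube to the \emph{slice} $\mathcal{D}$ -- the uniform distribution over assignments $x \in \{\pm 1\}^n$ with $\sum_{i} x_i = (1-2p)n$ -- and to verify that in every branch where the algorithm cannot immediately certify a YES answer the instance has a kernel on $O(t^2)$ relevant variables whose remaining (``free'') variables can always be set so as to meet the cardinality constraint.

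First I would represent the objective as a bounded-degree multilinear polynomial. Writing each predicate in the $\pm 1$ Fourier basis gives $N(x) := \#\{\text{satisfied constraints}\} = \sum_{S:\,|S|\le d} \hat N(S)\,\chi_S(x)$, where $\chi_S(x) = \prod_{i \in S} x_i$ and $\hat N(S) = \sum_i \hat c_i(S)$; all $n^{O(d)}$ coefficients are computable in time $n^{O(d)}$, which accounts for the first term of the running time. By symmetry $\E_{\mathcal{D}}[\chi_S]$ depends only on $|S|$ and is an explicit ratio of binomial coefficients, so $AVG = \sum_S \hat N(S)\,\E_{\mathcal{D}}[\chi_S]$ and the mean-zero excess $Z := N - AVG$, as well as $\Var_{\mathcal{D}}[N] = \E_{\mathcal{D}}[Z^2]$, are computed \emph{exactly} in time $n^{O(d)}$. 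The problem is then to decide whether $\max_{x \in \mathrm{supp}(\mathcal{D})} Z(x) \ge t$.

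The heart of the argument is a dichotomy driven by $\Var_{\mathcal{D}}[N]$. \emph{If $\Var_{\mathcal{D}}[N] \ge c_d\, t^2$} for a suitable constant $c_d$: since $Z$ has degree at most $d$ and $p \in [p_0,1-p_0]$, a hypercontractive inequality for low-degree functions on the slice yields $\|Z\|_4 \le C_d \|Z\|_2$, and then the mean-zero condition together with H\"older's inequality forces some valid assignment to achieve $Z \ge \|Z\|_2/(2C_d^2) \ge t$, so we output YES. \emph{If $\Var_{\mathcal{D}}[N] < c_d\, t^2$}: I would run the \cite{AGKSY}-style reduction -- iteratively deleting constraints and isolated variables that do not change $\max Z$ -- obtaining a sub-instance on a variable set $R$ with $|R| = O(t^2)$ whose optimum equals the original optimum. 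If $n = O(t^2)$, brute-force over all $\binom{n}{pn}$ valid assignments in time $2^{O(t^2)}$. Otherwise, for any assignment to $R$ the residual target $(1-2p)n - \sum_{i\in R} x_i$ has the correct parity and absolute value at most $(1-2p_0)n + |R| < n - |R|$, so the variables outside $R$ (which do not affect $N$) can always be set to satisfy the cardinality constraint; hence it suffices to enumerate the $2^{|R|} = 2^{O(t^2)}$ assignments to $R$. The total running time is $n^{O(d)} + 2^{O(t^2)}$.

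The main obstacle is that the global cardinality constraint breaks the two facts one would otherwise use off the shelf: the characters $\chi_S$ are neither orthogonal nor mean-zero under $\mathcal{D}$, and no variable is truly free. For the analytic part I would pass to the orthonormal harmonic basis of the Johnson scheme; on degree-$\le d$ functions the change of basis from monomials has condition number bounded in terms of $d$ and $p_0$ (this is where $p \in [p_0, 1-p_0]$ enters), which is what makes both the hypercontractive bound and the ``small variance $\Rightarrow$ small kernel'' step go through with $d$-dependent constants and lets the combinatorial kernelization be imported essentially unchanged. For the feasibility part the key point is the $n$-small versus $n$-large split above together with the ``absorbing'' argument, valid precisely because $p$ and $1-p$ are bounded below by $p_0$, leaving $\Omega(n)$ slack in the cardinality constraint. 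Establishing a usable hypercontractivity/anti-concentration statement on the slice, and checking that the kernelization remains compatible with the cardinality constraint (in particular if the reduction ever substitutes variables rather than only deleting them, so that the residual count must be tracked on a quotient), are the steps I expect to require the most care.
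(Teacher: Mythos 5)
Your overall architecture matches the paper's: compute the degree-$d$ multilinear representation $f_{\calI}$, split on whether $\Var_{D_p}(f_{\calI})$ is $\Omega(t^2)$ or $O(t^2)$, certify YES in the first case via a slice hypercontractive inequality plus the fourth-moment method, and kernelize in the second. The large-variance branch is essentially the paper's (the paper proves the needed $2\to4$ inequality on the slice via a perfect-matching decomposition for $p=1/2$ and an induction on variables for general $p$ -- you correctly flag this as a step requiring care). The feasibility/absorption argument for the free variables is also fine.

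The genuine gap is in the small-variance branch, where you claim the \cite{AGKSY} kernelization can be ``imported essentially unchanged'' after a change of basis with bounded condition number. The AGKSY argument needs \emph{two} facts simultaneously: (i) the variance equals (or controls) $\sum_{S\neq\emptyset}\hf(S)^2$, and (ii) every nonzero coefficient is bounded below (a multiple of $2^{-d}$), so that a small sum of squares forces few nonzero terms. On the slice, (i) fails badly -- e.g.\ the complete graph and the star graph for \maxbisect\ have $\Var_D(f_{\calI})=0$ but $\sum_S \hf(S)^2 = \Omega(n)$ -- because the quadratic form $\Var_D$ has a large null space, namely $\mathrm{span}\{(\sum_i x_i)\chi_S : |S|\le d-1\}$. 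The paper's spectral analysis (Johnson scheme) shows this is \emph{exactly} the null space and all other eigenvalues are $\Theta(1)$, so $\|f_{\calI} - \hf_{\calI}(\emptyset) - (\sum_i x_i)h_f\|_2^2 = O(\Var_D(f_{\calI}))$ for the projection $h_f$. But now (ii) fails: the coefficients of $f_{\calI} - (\sum_i x_i)h_f$ are real numbers with no integrality, so a small 2-norm does not bound the number of nonzero coefficients, and no kernel follows. Closing this requires the paper's rounding procedure (Theorem~\ref{rounding_h_dCSP}): an iterative, degree-by-degree rounding of $h_f$ to an $h$ whose coefficients are multiples of $\gamma/(d!(d-1)!\cdots 2!)$ while keeping $\|f_{\calI} - \hf_{\calI}(\emptyset) - (\sum_i x_i)h\|_2^2$ within a $7^d$ factor. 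A well-conditioned change of basis cannot substitute for this, because the issue is not the basis but the choice of polynomial representative of $f_{\calI}$ modulo the ideal generated by the cardinality constraint: one must find a representative that is simultaneously short in 2-norm and discretized. For general $p\neq 1/2$ there is a further obstruction you don't address -- the paper cannot round directly (the shift $(1-2p)n$ ruins the error analysis) and instead reduces to the bisection case by randomly restricting $(1-2p)n$ variables to $1$ and stitching the resulting rounded polynomials together via a uniqueness claim for the ``inactivating'' polynomial $h$. Without these two ingredients your small-variance branch does not yield a kernel.
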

\end{version:full}

One important ingredient in the proof of our main theorem is the $2\to4$ hypercontractivity of low-degree multilinear polynomials in a correlated probability space. Let $D_p$ be the uniform distribution on all assignments to the $n$ variables complying with the cardinality constraint $\sum_{i = 1}^{n} x_i = (1-2p)n$. We show the following inequality.
\begin{version:10pages}
\begin{theorem}\label{inform_hyper}
For any degree $d$ multilinear polynomial $f$ on variables $x_1, x_2, \dots, x_n$, we have
\[
\E_{D_p} [f^4] \leq \mathrm{poly}(d) \cdot C_{p}^d \cdot \E_{D_p} [f^2]^2 ,
\]
where the constant $C_{p} = \mathrm{poly}(\frac{1}{1-p}, \frac{1}{p})$.
\end{theorem}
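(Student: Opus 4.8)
The plan is to reduce the inequality on the slice $D_p$ to ordinary hypercontractivity on the $p$-biased product cube. Let $\mu_p$ denote the product distribution on $\{\pm1\}^n$ with each coordinate independently equal to $-1$ with probability $p$; then $D_p$ is precisely $\mu_p$ conditioned on the event $\sum_i x_i=(1-2p)n$, which has probability $\Theta_p(1/\sqrt n)$ by Stirling's formula. On $\mu_p$ the inequality is classical: tensorising the $(2\to4)$ two-point inequality for a single $p$-biased bit $\phi$ (for which $\E[\phi^4]=\Theta(1/\min(p,1-p))$) produces a noise operator $T_{\rho_p}$ with $\rho_p^{-1}=\mathrm{poly}(\tfrac1p,\tfrac1{1-p})$ and $\|T_{\rho_p}g\|_{L^4(\mu_p)}\le\|g\|_{L^2(\mu_p)}$; applying this to $T_{\rho_p}^{-1}f$ for a multilinear $f$ of degree $\le d$ yields $\E_{\mu_p}[f^4]\le \rho_p^{-4d}\,\E_{\mu_p}[f^2]^2$.

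The crux is to transfer each moment between $D_p$ and $\mu_p$ at a cost of at most $\mathrm{poly}(d)$. Two points need care. First, a multilinear polynomial is not determined by its restriction to the slice (there $\sum_i x_i$ is constant), so one first replaces $f$ by its canonical harmonic representative $f^\star$ — the unique multilinear polynomial of degree $\le d$ that agrees with $f$ on the slice and is annihilated by $\sum_i\partial_i$ — which leaves every $\E_{D_p}[f^k]$ unchanged. Second, $f^4$ has degree up to $4d$, so one needs the moment comparisons $\E_{D_p}[(f^\star)^2]\asymp\E_{\mu_p}[(f^\star)^2]$ and $\E_{D_p}[(f^\star)^4]\asymp\E_{\mu_p}[(f^\star)^4]$ for the harmonic polynomial $f^\star$ of degree $\le d$; both hold up to a factor that is $1+o(1)$ once $n\gg d^2$ and is never worse than $\mathrm{poly}(d)$ as long as $n$ exceeds a $\Theta(d/\min(p,1-p))$ threshold. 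These comparisons are precisely the statement that the $L^2$- and $L^4$-norms of low-degree functions are preserved, up to a degree-dependent constant, when one passes between the slice and the product measure; they are established through the harmonic (Johnson-scheme) decomposition, with attention to the fact that $\E_{\mu_p}$ sees the off-slice values of a representative, so the error must be tracked at the level of the harmonic components of $(f^\star)^2$ and $(f^\star)^4$ rather than coefficient-wise. Chaining
\[
\E_{D_p}[f^4]=\E_{D_p}[(f^\star)^4]\le \mathrm{poly}(d)\,\E_{\mu_p}[(f^\star)^4]\le \mathrm{poly}(d)\,\rho_p^{-4d}\,\E_{\mu_p}[(f^\star)^2]^2\le \mathrm{poly}(d)\,\rho_p^{-4d}\,\E_{D_p}[f^2]^2
\]
then gives the theorem with $C_p=\rho_p^{-4}=\mathrm{poly}(\tfrac1p,\tfrac1{1-p})$. (Alternatively one could prove the inequality on the slice directly from a log-Sobolev inequality for the Bernoulli--Laplace chain, but going through the product cube keeps the dependence on $p$ most transparent.)

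Below the $\Theta(d/\min(p,1-p))$ threshold the comparison above is vacuous, but that case is trivial: the slice then has only $\binom{n}{pn}\le(e/\min(p,1-p))^{O(d)}\le C_p^{O(d)}$ points, each of probability at least $C_p^{-O(d)}$ under $D_p$, so $\E_{D_p}[f^4]\le\max_x f(x)^2\cdot\E_{D_p}[f^2]\le C_p^{O(d)}\,\E_{D_p}[f^2]^2$ already suffices. I expect the real obstacle to be the degree-$O(d)$ moment comparison between the slice and the product measure in the second step — in particular, keeping the multiplicative error at $\mathrm{poly}(d)$ uniformly in $n$ (rather than $d^{O(d)}$), which forces one to work with the harmonic structure of $(f^\star)^2$ and $(f^\star)^4$ and is exactly what produces the $\mathrm{poly}(d)$ prefactor in the statement.
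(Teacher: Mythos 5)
Your proposal is correct in its essential structure, but it takes a genuinely different route from the paper. The paper proves the $p=1/2$ case by writing $D$ as a random perfect matching followed by a product distribution (ordinary Bonami conditioned on the matching, then a ``$1\to2$'' step handled by a Johnson-scheme eigenvalue analysis of a matrix $A'$), and proves the general-$p$ case by an induction on the number of variables in the style of Mossel--O'Donnell--Oleszkiewicz, writing $f=\phi_1h_0+h_1$; the key technical lemma there is an $O\bigl(d^{3/2}/(p(1-p)\sqrt{n})\bigr)$ bound on the spectral norm of the operator $\E_{D_{\phi_1>0}}[fg]-\E_{D_{\phi_1<0}}[fg]$. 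You instead transfer moments directly between the slice $D_p$ and the product measure $U_p$. This works, and the two comparisons you actually need are exactly the paper's eigenvalue bounds for the quadratic form $\E_{D_p}[g^2]$ (Corollaries~\ref{biased_eigenvalue_2nd_E} and~\ref{2nd_E_L2_norm}): the lower bound $\E_{D_p}[(f^\star)^2]\ge\tfrac12\E_{U_p}[(f^\star)^2]$ uses harmonicity of $f^\star$ and the fact that all nonzero eigenvalues are at least $1/2$, while the upper bound $\E_{D_p}[(f^\star)^4]\le O(d)\,\E_{U_p}[(f^\star)^4]$ follows by applying the degree-$2d$ eigenvalue upper bound to the multilinear reduction of $(f^\star)^2$ (using $\phi_i^2=q\phi_i+1$), with no harmonicity required. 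Chained with the biased Bonami lemma on $U_p$, this gives the theorem with a better-organized dependence on $d$ and $p$ than the paper's induction, at the price of leaning entirely on the Section~\ref{eigenvalues} spectral analysis.

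Two small corrections. First, you claim the two-sided comparison $\E_{D_p}[(f^\star)^4]\asymp\E_{U_p}[(f^\star)^4]$; the reverse direction is not needed and would in fact be problematic, since $(f^\star)^2$ need not be harmonic and can have a large component in the null space of the degree-$2d$ form. Only the one-sided inequalities appearing in your final chain are required, and those are the easy directions. Second, the $n\gtrsim d/\min(p,1-p)$ threshold and the separate small-$n$ argument are unnecessary: the eigenvalue bounds $[1/2,\,d]$ hold uniformly (the $O(1/n)$ corrections are harmless for all $n$ large enough that the slice is nonempty and $n\ge 4d$, say), so no case split is needed. Neither issue affects the validity of the displayed chain.
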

\end{version:10pages}
\begin{version:full}
\begin{theorem}[Informal version of Corollary~\ref{cor:hypercontractivity_bisection} and Corollary~\ref{hypercontractivity_global_2nd}]\label{inform_hyper}
For any degree $d$ multilinear polynomial $f$ on variables $x_1, x_2, \dots, x_n$, we have
\[
\E_{D_p} [f^4] \leq \mathrm{poly}(d) \cdot C_{p}^d \cdot \E_{D_p} [f^2]^2 ,
\]
where the constant $C_{p} = \mathrm{poly}(\frac{1}{1-p}, \frac{1}{p})$.
\end{theorem}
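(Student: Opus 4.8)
The plan is to reduce the hypercontractive estimate in the correlated space $D_p$ to the standard hypercontractivity inequality in the product space $\{\pm 1\}^n$ under the $p$-biased measure $\mu_p$. The key observation is that $D_p$ is the $\mu_p^{\otimes n}$ measure \emph{conditioned} on the event $E = \{\sum_i x_i = (1-2p)n\}$, whose probability is $\Theta(1/\sqrt{n})$ by a local central limit theorem (using $p \in [p_0, 1-p_0]$, so the per-coordinate variance is $\Theta(1)$). So for any nonnegative random variable $Z$, $\E_{D_p}[Z] = \E_{\mu_p^{\otimes n}}[Z \mid E] = \E_{\mu_p^{\otimes n}}[Z \cdot \mathbf{1}_E] / \Pr[E]$, which only gives an upper bound of the form $\E_{D_p}[Z] \le O(\sqrt{n}) \E_{\mu_p^{\otimes n}}[Z]$ in general — too lossy by itself. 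The fix is the standard two-step argument: bound $\E_{D_p}[f^4]$ from above by passing to $\mu_p^{\otimes n}$ (losing a $\sqrt{n}$ factor), apply $(2,4)$-hypercontractivity there to get $\E_{\mu_p^{\otimes n}}[f^4] \le \mathrm{poly}(d)\, C_p^d\, \E_{\mu_p^{\otimes n}}[f^2]^2$ — note the degree of $f$ as a multilinear polynomial is preserved when we reinterpret it over $\mu_p$, after re-expanding in the $\mu_p$-biased Fourier basis it can only decrease — and then bound $\E_{\mu_p^{\otimes n}}[f^2]$ from above by $\E_{D_p}[f^2]$ going the other direction, which is where I need to work harder.

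First I would set up the $p$-biased Fourier/Efron–Stein decomposition $f = \sum_{S} f_S$ where $f_S$ depends on the coordinates in $S$ and $\deg(f) \le d$ forces $|S| \le d$, and recall that over the product measure $\E_{\mu_p^{\otimes n}}[f^2] = \sum_S \E[f_S^2]$ (orthogonality). Second, I would prove a comparison inequality $\E_{\mu_p^{\otimes n}}[f^2] \le \mathrm{poly}(d)\, C_p^{\,O(d)} \, \E_{D_p}[f^2]$: the point is that conditioning on $E$ perturbs the Gram matrix of the characters $\{\chi_S\}_{|S|\le d}$ in the $L^2(D_p)$ inner product only by $O(1/n)$-sized off-diagonal entries relative to its $L^2(\mu_p^{\otimes n})$ values (which form the identity), and since the number of characters of degree $\le d$ supported on any fixed small set is controlled, the smallest eigenvalue of the conditioned Gram matrix stays bounded below by a constant depending only on $d$ and $p$ — I would make this quantitative via the local CLT for $\sum_{i \notin S} x_i$, computing $\E_{D_p}[\chi_S \chi_T]$ for $|S\triangle T|$ small. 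Third, chaining the three inequalities gives $\E_{D_p}[f^4] \le O(\sqrt n)\,\E_{\mu_p^{\otimes n}}[f^4] \le O(\sqrt n)\,\mathrm{poly}(d) C_p^d \,\E_{\mu_p^{\otimes n}}[f^2]^2 \le O(\sqrt n)\,\mathrm{poly}(d)\,C_p^{O(d)}\,\E_{D_p}[f^2]^2$ — but this still carries the fatal $\sqrt n$.

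To kill the $\sqrt n$, I would instead run the argument at the level of the \emph{noise operator} rather than the raw moment: replace the crude step $\E_{D_p}[g] \le O(\sqrt n)\E_{\mu_p}[g]$ by the sharper fact that when $g = f^4$ and $f$ has degree $\le d$, the function $g$ has degree $\le 4d$, and for \emph{low-degree} $g$ the conditional expectation $\E_{D_p}[g]$ differs from $\E_{\mu_p^{\otimes n}}[g]$ only multiplicatively by $1 + O(d^2/n)$ — this is again the local CLT, now applied to show $\E_{\mu_p^{\otimes n}}[\chi_S \mid E] = \chi$-mass that decays in $|S|$ but is $O(1)$-bounded for $|S| = O(d)$. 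Concretely: expand $f^4 = \sum_{|U| \le 4d} c_U \chi_U$ over $\mu_p$, so $\E_{D_p}[f^4] = \sum_U c_U \E_{\mu_p^{\otimes n}}[\chi_U \mid E]$ and $\E_{\mu_p^{\otimes n}}[f^4] = c_\emptyset$; the deviation $\big|\E_{D_p}[f^4] - c_\emptyset\big| \le \max_{U\neq\emptyset}\big|\E[\chi_U\mid E]\big| \cdot \sum_U |c_U|$, and I bound $\sum_U|c_U|$ by $\mathrm{poly}(\cdot)\cdot\E_{\mu_p}[f^2]^2$ using hypercontractivity plus the degree bound (each $|c_U| \le \|f^4\|_1 \le \|f^2\|_2^2$ up to the biased-basis normalization $\mathrm{poly}(1/p(1-p))^{O(d)}$). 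The main obstacle is exactly this last estimate — controlling $\E_{\mu_p^{\otimes n}}[\chi_U \mid E]$ and $\sum_U |c_U|$ simultaneously with the right dependence on $d$ and $p$; everything else is bookkeeping around a local central limit theorem for a sum of $\Theta(1)$-variance bounded independent variables. I expect the clean statement to be: for $|U| \le 4d$ and $n \ge \mathrm{poly}(d)$, $\big|\E_{\mu_p^{\otimes n}}[\chi_U \mid E] - \mathbf{1}[U=\emptyset]\big| \le C_p^{O(d)}/\sqrt n$, which when combined with the $\mathrm{poly}$-bounded $\ell_1$ mass of the Fourier expansion of $f^4$ yields $\E_{D_p}[f^4] \le (1 + o(1))\,\E_{\mu_p^{\otimes n}}[f^4] + C_p^{O(d)}\E_{\mu_p}[f^2]^2$, and then the product-space hypercontractivity and the $\mu_p$-vs-$D_p$ $L^2$ comparison finish the proof.
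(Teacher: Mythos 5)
Your reduction to the product measure $U_p$ has two concrete failures, and the second one is fatal to the strategy as written. First, the step bounding $\bigl|\E_{D_p}[f^4]-c_\emptyset\bigr|$ by $\max_{U\neq\emptyset}\bigl|\E[\phi_U\mid E]\bigr|\cdot\sum_U|c_U|$ is quantitatively insufficient: $\sum_U|c_U|$ is \emph{not} $\mathrm{poly}(d)\cdot\E_{U_p}[f^2]^2$. Take $f=n^{-1/2}\sum_i\phi_i$, so $\E_{U_p}[f^2]=1$; then $f^4$ has $\Theta(n^4)$ coefficients of size $\Theta(n^{-2})$ on the weight-$4$ level, so $\sum_U|c_U|=\Theta(n^2)$, and even multiplying by $\max_U|\delta_U|=O(1/n)$ leaves a factor of $n$. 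The $\ell_\infty$--$\ell_1$ split must be replaced by Cauchy--Schwarz against $\bigl(\sum_{U\neq\emptyset}\delta_U^2\bigr)^{1/2}$: since $\delta_U=\E_{D_p}[\phi_U]$ decays like $n^{-|U|/2}$ (the paper's Claim~\ref{bound_delta}) and there are $\binom{n}{j}$ sets of each size $j\le 4d$, one gets $\sum_{U\neq\emptyset}\delta_U^2=O(d)$, and $\sum_U c_U^2=\E_{U_p}[f^8]$ is controlled by $(2\to 8)$-hypercontractivity. That repair is available, but it is not what you wrote.

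The fatal gap is your Step 2, the comparison $\E_{U_p}[f^2]\le \mathrm{poly}(d)\,C_p^{O(d)}\,\E_{D_p}[f^2]$. This is false: the Gram matrix of $\{\phi_S\}_{|S|\le d}$ under $D_p$ is exactly the paper's matrix $A$, whose null space is the $\binom{n}{\le d-1}$-dimensional subspace $\mathrm{span}\{(\sum_i\phi_i)\phi_S\}$, because $\sum_i\phi_i\equiv 0$ on $\mathrm{supp}(D_p)$. Concretely, $f=\sum_i\phi_i$ has $\E_{U_p}[f^2]=n$ but $\E_{D_p}[f^2]=0$. Your heuristic that the off-diagonal perturbations are $O(1/n)$ and hence the smallest eigenvalue stays bounded below is a diagonal-dominance argument that cannot work: each row has $n^{\Theta(d)}$ nonzero off-diagonal entries of size $n^{-\Theta(1)}$, and the matrix is genuinely singular. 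The missing idea is precisely the paper's projection step: replace $f$ by $f-(\sum_i\phi_i)h_f$, its projection onto the orthogonal complement of the null space, which changes neither $\E_{D_p}[f^4]$ nor $\E_{D_p}[f^2]$; then one needs the nontrivial spectral fact (Theorem~\ref{eigenvalue_V_k}, proved via the Johnson-scheme eigenspace analysis) that all \emph{nonzero} eigenvalues of $A$ lie in $[1/2,d]$, which gives $\E_{U_p}[(f-(\sum_i\phi_i)h_f)^2]\le 2\,\E_{D_p}[f^2]$. With that component added and the Cauchy--Schwarz repair above, your conditioning route would become a legitimate alternative to the paper's proof (which instead runs an MOO-style induction on variables, writing $f=\phi_1h_0+h_1$ and bounding the cross term $\E_{D_p}[\phi_1h_0h_1^3]$ by a spectral-norm estimate); but as proposed, both the $L^2$ comparison and the $\ell_1$ bound are false as stated.
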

\end{version:full}

The ordinary $2\to 4$ hypercontractive inequality (see Section~\ref{sec:pre-fourier} for details of the inequality) has wide applications in computer science, e.g., invariance principles \cite{MOO10}, a lower bound on the influence of variables on Boolean cube \cite{KKL}, and an upper bound on the fourth moment of low degree functions \cite{AGKSY,MMZ15} (see \cite{OD} for a complete introduction and more applications with the reference therein). The inequality admits an elegant induction proof, which was first introduced in \cite{MOO05}; and the proof was later extended to different settings (e.g. to the low-level sum-of-squares proof system \cite{BBH12}, and to more general product distributions \cite{MMZ15}). All the previous induction proofs, to the best of our knowledge, rely on the local independence of the variables (i.e. the independence among every constant-sized subset of random variables). In the $2\to4$ hypercontractive inequality we prove, however, every pair of the random variables is correlated.

Because of the lack of pair-wise independence, our induction proof (as well as the proof to the main theorem (Theorem~\ref{inform_FPT})) crucially relies on the analysis of the eigenvalues of several $n^{O(d)} \times n^{O(d)}$ set-symmetric matrices. We will introduce more details about this analysis in the next subsection.

\paragraph{Independent work on the $2\to 4$ hypercontractive inequality.} Independently, Filmus and Mossel\cite{FM} provided a hypercontractive inequality over $D_p$ based on the log-Sobolev inequality due to Lee and Yau \cite{LY98}. They utilized the property that harmonic polynomials constitute an orthogonal basis in $D_p$. In this work, we use parity functions and their Fourier coefficients to analyze the eigenspaces of $\Var_{D_p}$ and prove the hypercontractivity in $D_p$. Parity functions do not constitute an orthogonal basis in $D_p$, e.g., the $n$ variables are not independent under any global cardinality constraint $\sum_{i=1}^n x_i=(1-2p)n$. However, there is another important component in the proof of our main theorem -- we need to prove the variance of a random solution is high if the optimal solution is much above average. Parity functions also play an important role in this component. More specifically, our analysis relies on the fact that the null space of $\Var_{D_p}$ is equivalent to the subspace spanned by the global cardinality constraint and all non-zero eigenvalues of $\Var_{D_p}$ are $\Theta(1)$ in terms of their Fourier coefficients (respect to parity functions).

\subsection{Techniques and proof overview}
Our high level idea is similar to the framework introduced by Gutin et al. \cite{GKSY09}. However, we employ very different techniques to deal with the fixed-parameter tractability above average under a global cardinality constraint. Specifically, we extensively use the analysis of the eigenspaces of the association schemes. For simplicity, we will first use the bisection constraint $\sum_{i = 1}^n x_i=0$ as the global cardinality constraint to illustrate our main ideas. Then we discuss how to generalize it to a global cardinality constraint specified by $p \in (0,1)$.

We first review the basic ideas underlying the work \cite{AGKSY,MMZ15}. Let $f_{\cal I}$ be the degree $d$ multilinear polynomial counting the number of satisfied constraints in a $d$-ary CSP instance $\cal I$ and $\hf_{\cal I}(S)$ be the coefficient of $\prod_{i \in S} x_i$, i.e., the Fourier coefficient of $\chi_S=\prod_{i \in S} x_i$. We will also view the multilinear polynomial $f$ as a vector in the linear space $\mathrm{span}\left\{\chi_S | S \in {[n] \choose \le d}\right\}$.

We consider the variance of $f_{\calI}$ in the uniform distribution of $\{\pm 1\}^n$, namely $\Var(f_{\cal I})$, and discuss the following 2 cases.
\begin{enumerate}
\item $\Var(f_{\cal I}) = O(t^2)$.  In this case, we would like to reduce the problem to a problem whose size depends only on $t$ (i.e. a small kernel of variables). In  \cite{AGKSY}, this was done because the coefficients in $f_{\cal I}$ are multiples of $2^{-d}$ and the variance $\Var(f_{\cal I}) = \sum_{S \neq \emptyset} \hf_{\cal I}(S)^2$ in the uniform distribution. Then there are at most $O(t^2 2^{2d})$ terms with non-zero coefficients in $f_{\cal I}$, so the size of the kernel is at most  $O(d \cdot t^2 2^{2d})$. One can simply enumerate all possible assignments to the kernel.
\item $\Var(f_{\cal I}) = \Omega(t^2)$. We will claim that the optimal value is at least $AVG+t$. This is done via the $2\to4$ hypercontractive inequality (Theorem~\ref{inform_hyper}), which shows that the low-degree polynomial $f_{\cal I}$ is smooth under the  uniform distribution over all valid assignments. Therefore $f_{\cal I}$ is greater than its expectation plus its standard deviation with positive probability, via the fourth moment method \cite{Berger,AGKSY,OD}.
\end{enumerate}

Now we take a detailed look at the first case. For the uniform distribution $D$ on all assignments complying with the bisection constraint, we no longer have $\Var_D(f_{\cal I}) = \sum_{S \neq \emptyset}\hf_{\cal I}(S)^2$. Indeed, $\Var_D(f_{\cal I})$ might be very different from $\sum_{S \neq \emptyset}\hf_{\cal I}(S)^2$. Let us consider a complete graph in the \maxbisect~problem for example -- the value of a bisection in the complete graph with $n$ vertices is always $\frac{n}{2} \cdot \frac{n}{2}$ in $D$, which indicates that $\Var_D(f_{\cal I}) = 0$. However, $\sum_{S \neq \emptyset}\hf_{\cal I}(S)^2$ is $\Omega(n^2)$. Another example for the \maxbisect~problem is a star graph, i.e., a graph with  $(n-1)$ edges connecting an arbitrary vertex $i$ to the rest of $(n-1)$ vertices. The variance for this instance is also $0$. However, one can check that $\sum_{S \neq \emptyset}\hf_{\cal I}(S)^2 = \Omega(n)$.

To solve the problem above, we first observe that in the star graph instance, the degree 2 polynomial $f_{\calI} =\frac{n}{2} - (\sum_i x_i)\frac{x_1}{2}$ (say the center of the star is at Vertex $1$), which always equals $\frac{n}{2}$ in the support of $D$. This is because $\sum_i x_i = 0$ is a requirement for all valid assignments in $D$. In general, we have $(\sum_i x_i)h = 0$ for all polynomials $h$ under the bisection constraint. Therefore, \[\calS = \left\{(\sum_i x_i)h + c : \mbox{$h$ a degree-at-most $(d-1)$ multilinear polynomial, $c$ constant}\right\}\] constitutes a linear subspace where all the functions in the subspace are equivalent to a constant function in the support of $D$.  Let $B$ be a ${[n] \choose \leq d} \times {[n] \choose \leq d}$ matrix so that $f^T B f = \Var_D(f)$ for all degree-at-most-$d$ multilinear polynomials $f$ (this can be done because $\Var_D(f)$ is just a quadratic polynomial on the Fourier coefficients $\left\{\hf(S)|S \in {[n] \choose \le d}\right\}$). From the discussion above, we know that $\calS$ is in the nullspace of $B$.

\paragraph{Projection onto the orthogonal complement of $\calS$, and the spectral analysis of $B$ via association schemes.} Somewhat surprisingly, we show that  $\calS$ indeed \emph{coincides} with the nullspace of $B$, and the eigenvalues of $B$ of the orthogonal complement of $\calS$ are $\Theta(1)$ (when we treat $d$ as a constant). Once we have this, we let $(\sum_i x_i)h_f $ be the projection of $(f_{\cal I} - \hf_{\calI}(\emptyset))$  onto $\calS = \mathrm{span}\left\{1, (\sum_i x_i)\chi_S | S \in {[n] \choose \le d-1}\right\}$, and have that the 2-norm$^2$ of the projection onto the orthogonal complement of $\calS$,  $\|f_{\cal I} - \hf_{\calI}(\emptyset)-(\sum_i x_i)h_f\|_2^2 = \Theta(\Var_D(f_{\cal I}))$.

To analyze the eigenvalues and eigenspaces of $B$, we crucially use the fact that $B$ is set-symmetric, i.e. the value of $B_{S, T}$ (where $S, T \in {n \choose \leq d}$) only depends on $|S|$, $|T|$, and $|S \cap T|$. If we only consider the homogeneous submatrices of $B$ (i.e. the submatrix of all rows $S$ and columns $T$ so that $|S| = |T| = d'$ for some $0 \leq d' \leq d$), their eigenvalues and eigenspaces are well understood and easy to characterize as they correspond to a special association shceme called the Johnson scheme \cite{Godsil}.  Then we follow the approach of \cite{Grigoriev} to provide a self-contained description of the eigenspaces and eigenvalues of $B$, which extensively use the property $\sum_i x_i=0$ in the support of $D$. We will finally show that all eigenvalues of $B$ in the orthogonal complement of $\calS$ are between $.5$ and $d$.

We will also let $A$ be the matrix corresponding to the quadratic form $\E_D[f^2]$ and analyze the eigenvalues and eigenvectors of $A$ in a similar fashion, for the purpose of our proof.

\paragraph{Rounding procedure for the projection vector $h_f$.} We would like to use the argument that since $\|f_{\cal I} - \hf_{\calI}(\emptyset)-(\sum_i x_i)h_f\|_2^2 = \Theta(\Var_D(f_{\cal I}))  = O(t^2)$ is small,  there are not so many non-zero Fourier coefficients for the function $\left(f_{\cal I} - (\sum_i x_i)h_f\right)$ (which is equivalent to $f_{\cal I}$ on all valid assignments). Therefore we could reduce the whole problem to a small kernel. However, since $h_f$ might not have any integrality property, we cannot directly say that the Fourier coefficients of $(f_{\cal I} - \hf_{\calI}(\emptyset)-(\sum_i x_i)h_f)$ are multiples of $2^{-d}$, and follow the approach of \cite{AGKSY}.

To solve this problem, we need an extra rounding process. We ``round'' the projection vector $h_f$ to a new vector $h$ so that the coefficients of $h$ are multiples of $1/\Gamma$ (for some big constant $\Gamma$ which only depends on $d$). At the same time, the rounding process guarantees that $\|f_{\cal I} - \hf_{\calI}(\emptyset)-(\sum_i x_i)h\|_2^2 \leq O(\|f_{\cal I} - \hf_{\calI}(\emptyset)-(\sum_i x_i)h_f\|_2^2)$, so that we can use the argument of \cite{AGKSY} with $(f_{\cal I} -(\sum_i x_i)h)$ (which is also equivalent to $f$ on all valid assignments) instead.

The rounding process proceeds in a iterative manner. At each iteration, we only round the degree-$d'$ homogeneous part of $h_f$ (starting from $d' = d - 1$, and decrease $d'$ after each iteration). We would like to prove the rounding error $\|(\sum_i x_i)h - (\sum_i x_i)h_f\|_2^2$ is bounded by $O(\|f_{\cal I} - \hf_{\calI}(\emptyset)-(\sum_i x_i)h_f\|_2^2)$, so that $\|f_{\cal I} - \hf_{\calI}(\emptyset)-(\sum_i x_i)h\|_2^2 \leq 2(\|f_{\calI}  - \hf_{\calI}(\emptyset)- (\sum_i x_i)h_f\|_2^2 + \|(\sum_i x_i)h - (\sum_i x_i)h_f\|_2^2) \leq O(\|f_{\cal I} - \hf_{\calI}(\emptyset)-(\sum_i x_i)h_f\|_2^2)$. We will first claim that the coefficients of $h_f$ are ``quite'' close to those of $h$, where, however, such closeness does not directly guarantee that the rounding error is well bounded. Then, using this closeness, we will bound the rounding error via a different argument.

\paragraph{The $2\to4$ hypercontractive inequality for low-degree multilinear polynomials under $D$.} For the second case, we need to prove the $2\to4$ hypercontractive inequality for low-degree multilinear polynomials in $D$, i.e. $\E_D [f^4] = O(1) \E_D [f^2]^2$ for any low-degree multilinear polynomial $f$. We will use the special property of the bisection constraint and reduce the task to the ordinary $2\to4$ hypercontractive inequality (on the uniform production distribution).

Specifically, we view the sampling process in $D$ as first uniformly sampling a perfect matching $M$ among the $n$ variables then assigning $+1$ and $-1$ to the two vertices in each pair of matched variables independently. Observe that once we have the perfect matching, the second sampling step is a product distribution over $n/2$ unbiased coins. Let $f_M$ be the function on $n/2$ variables so that each pair of matched variables always take opposite values, we have $\E_D[f^4] = \E_M \E [f_M^4]$, where the second expectation is over the uniform distribution.

Now we can directly apply the ordinary hypercontractive inequality to $\E[f_M^4]$ and have $\E_D[f^4] = O(1) \E_M \E[f_M^2]^2$. Since $\E_D[f^2]^2 = (\E_M \E[f_M^2])^2$, it remains to show that $\E_M \E[f_M^2]^2 = O(1) (\E_M \E[f_M^2])^2$. This final step, can be viewed as proving the ``$1\to2$ hypercontractivity of $\E[f_M^2]$''. We prove this by analyzing the Fourier coefficients of $f$.

\paragraph{Generalization to the general cardinality constraint $\sum_i x_i = (1 - 2p)n$ via random restriction.} We now discuss how to generalize the parameterized algorithm to the general cardinality constraint $\sum_i x_i = (1 - 2p)n$ for any $p \in [p_0, 1 - p_0]$ (where $p_0 > 0$ is a constant). Let $D_p$ be the uniform distribution on all valid assignments complying with the global cardinality constraint. We first focus on Case 1: $\Var_{D_p} (f_{\calI}) = O(t^2)$.

There are two natural approaches to generalize our previous argument to $D_p$. The first idea is to define $\phi_i = \sqrt{p/(1-p)}$ when $x_i = 1$ and $\phi_i = -\sqrt{(1-p)/p}$ when $x_i =- 1$ so that $\E \phi_i = 0$ and $\E \phi_i^2 = 1$, and do the analysis for $g(\phi_1, \dots, \phi_n) = f_{\calI}(x_1, \dots, x_n)$. The second idea is to work with $x_i = \pm 1$, and try to generalize the analysis. However, we adopt neither of the two approaches because of the following reasons.

For the first idea, indeed one can prove that there exists some polynomial $h_g$, so that $\|g - \hat{g}(\emptyset)-(\sum_i \phi_i)h_g\|_2^2 = \Theta(\Var_{D_p}(f_{\cal I}))$. However, since the Fourier coefficients of $g$ have factors such as ${p}^{1/2}, p, p^{3/2}$, etc., it is not clear why the Fourier coefficients of $(g- \hat{g}(\emptyset)-(\sum_i \phi_i)h)$ are multiples of some number even when we round $h_g$ to $h$. For the second idea, the super-constant on the right-hand-side of the constraint $\sum_i x_i = (1 - 2p)n$ imposes a technical difficulty for bounding the rounding error.

Our final approach for the general cardinality constraint is via a reduction to the existing algorithm for the bisection constraint, using random restriction. Let us assume $p = .49$ to illustrate the high-level idea (and it works for any $p$). Given the constraint $\sum_i x_i = .02n$, we randomly choose a set $Q$ of $.02n$ variables and fix these variables to 1. For any valid assignment, we see that the remaining $.98n$ variables satisfy the bisection constraint $\sum_{i \in \overline{Q}} x_i = 0$. Let $f_Q$ be the function on the remaining $.98n$ variables derived by fixing the variables in $Q$ to $1$ for $f_{\calI}$, i.e. let $f_Q(x_{\overline{Q}}) = f_{\calI}(x_{\overline{Q}}, x_{Q} = \vec{1})$, where $x_Q$ means $\{x_i : i \in Q\}$ and $x_{\overline{Q}}$ means $\{x_i : i \in \overline{Q}\}$.

Since sampling $x$ from $D_p$ is equivalent to sampling $Q$ first then sampling $x_{\overline{Q}} \sim D$ and setting $x_{Q} = \vec{1}$, we have $\E_Q \Var_{D} [f_Q] \leq \Var_{D_p} [f_{\calI}] \leq O(t^2)$. Therefore, for most $Q$'s, we have $ \Var_{D} [f_Q] \leq O(t^2)$. Using our parameterized algorithm for the bisection constraint, we know that for most $Q$'s, $f_Q$ depends on merely $O(t^2)$ variables (on all assignments complying with the bisection constraint). Then we manage to show that to make this happen, a low-degree polynomial $f$ itself has to depend on only $O(t^2)$ variables (on all valid assignments in $D_p$).

\paragraph{The $2\to4$ hypercontractive inequality under distribution $D_p$.} To deal with Case 2, we need the $2\to4$ hypercontractive inequality for low-degree multilinear polynomials in $D_p$. We let  $\phi_i = \sqrt{p/(1-p)}$ when $x_i = 1$ and $\phi_i = -\sqrt{(1-p)/p}$ when $x_i =- 1$ so that $\E \phi_i = 0, \E \phi_i^2 = 1$, and $\sum_i \phi_i=0$ in the support of $D_p$ then prove the inequality for multilinear polynomials on $\phi_1, \phi_2, \dots, \phi_n$. We follow the paradigm introduced in \cite{MOO05} and apply induction on the number of variables and the degree of the polynomial. For any multilinear polynomial $f$, we write it in the form of $f = \phi_1 h_0 + h_1$ (say the first variable $f$ depends on is $\phi_1$), expand $\E_{D_p} [f^4] = \E_{D_p} [( \phi_1 h_0 + h_1)^4]$, and control each term using induction hypothesis separately.

Unlike the proof in \cite{MOO05}, we no longer have $\E_{D_p} [\phi_1 h_0 h_1^3] = 0$ or $\E_{D_p} [\phi_1^3 h_0^3 h_1 = 0]$ because of the lack of independence. While the latter one is easy to deal with, the main technical difficulty comes from the first term, namely upper-bounding $\E_{D_p} [\phi_1 h_0 h_1^3]$ in terms of $\|h_0\|_2$ and $\|h_1\|_2$. We reduce this problem to upper-bounding the spectral norm of a carefully designed set-symmetric matrix $L$, which again utilizes the analysis developed in Section~\ref{eigenvalues}.

\subsection{Organization}
\begin{version:full}
This paper is organized as follows. We introduce some preliminaries in Section~\ref{sec_pre}. In Section~\ref{eigenvalues}, we provide a self-contained description of the eigenspaces and eigenvalues of the matrices corresponding to $E_{D_p}[f^2]$ and $\Var_{D_p}(f)$. In Section~\ref{sec_bisection}, we prove the fixed-parameter tractability of CSPs above average with the bisection constraint, and the $2 \to 4$ hypercontractive inequality under the distribution conditioned on the bisection constraint. In Section~\ref{sec_hyper}, we prove the more general $2 \to 4$ hypercontractive inequality under distribution $D_p$. At last, we show the fixed-parameter tractability of any $d$-ary CSP above average with a global cardinality constraint  in Section~\ref{sec_FPT_global}.
\end{version:full}
\begin{version:10pages}
This paper is organized as follows. We introduce some preliminaries in Section~\ref{sec_pre}. In Section~\ref{eigenvalues}, we provide a self-contained description of the eigenspaces and eigenvalues of the matrices corresponding to $\Var_{D_p}(f)$. In Section~\ref{sec_bisection}, we prove the fixed-parameter tractability of CSPs above average with the bisection constraint, and the $2 \to 4$ hypercontractive inequality under the distribution conditioned on the bisection constraint.

Due to space constraints, we omit much content of Section~\ref{sec_pre}, and most content of Section~\ref{eigenvalues} to Section~\ref{sec_bisection}. We defer the more general $2 \to 4$ hypercontractive inequality under distribution $D_p$ and the fixed-parameter tractability of any $d$-ary CSP above average with a global cardinality constraint to the full version of this paper. 
\end{version:10pages}

\section{Preliminaries}\label{sec_pre}
Let $[n]=\{1,2,\cdots,n\}$. For convenience, we always use ${[n] \choose d}$ to denote the set of all subsets of size $d$ in $[n]$ and ${[n] \choose \le d}$ to denote the set of all subsets of size at most $d$ in $[n]$ (including $\emptyset$). For two subsets $S$ and $T$, we use $S \setdiff T$ to denote the symmetric difference of $S$ and $T$.  Let $n!$ denote the product $\prod_{i=1}^n i$ and $n!!=\prod_{i=0}^{[\frac{n-1}{2}]}(n-2i)$. We use $\vec{0}$ ($\vec{1}$ resp.) to denote the all 0 (1 resp.) vector. We also use $1_{E}$ to denote the indicator variable of an event $E$, i.e. $1_E = 1$ when $E$ is true, and $1_E = 0$ otherwise.

In this work, we only consider $f:\{\pm 1\}^n \rightarrow \mathbb{R}$. Let $U$ denote the uniform distribution on $\{\pm 1\}^n$ and $U_p$ denote the biased product distribution on $\{\pm 1\}^n$ such that each bit equals to $-1$ with probability $p$ and equals to 1 with probability $1-p$. For a distribution $V$, we use $supp(V)$ to denote the support of $V$.

For a random variable $X$ with standard deviation $\sigma$, it is known that the fourth moment is necessary and sufficient to guarantee that there exists $x \in supp(X)$ greater than $\E[X]+\Omega(\sigma)$ from \cite{Berger,AGKSY,OD}. We state this result as follows.
\begin{lemma}\label{4th_moment_method}
Let $X$ be a real random variable. Suppose that $\E[X]=0, \E[X^2]=\sigma^2>0$, and $\E[X^4]<b \sigma^4$ for some $b>0$. Then $\Pr[X \ge \sigma/(2\sqrt{b})]>0$.
\end{lemma}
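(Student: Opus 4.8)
The plan is to argue by contradiction. Suppose $\Pr[X \ge \sigma/(2\sqrt b)] = 0$; writing $\theta := \sigma/(2\sqrt b) > 0$, this means $X \le \theta$ almost surely, and I would derive a contradiction with the moment hypotheses. After rescaling I may take $\sigma = 1$; I would also first record that $b > 1$ is forced, since $\E[X^4] \ge (\E[X^2])^2 = \sigma^4$ while $\E[X^4] < b\sigma^4$, so that $\theta < 1/2$. A naive attempt via a Paley--Zygmund type (second-and-fourth moment) inequality only shows that $|X| = \Omega(\sigma)$ with positive probability, which is two-sided; the real content is to use $\E[X]=0$ to upgrade this to a statement about the \emph{positive} tail.

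To that end I would split $X = X^{+} - X^{-}$ into its nonnegative parts $X^{+} = \max(X,0)$, $X^{-} = \max(-X,0)$, and set $\mu := \E[X^{+}] = \E[X^{-}]$ (the equality coming from $\E[X] = 0$; also $\mu > 0$, for otherwise $X^{-} \equiv 0$ would force $X \equiv 0$, contradicting $\sigma^2 > 0$). Since $0 \le X^{+} \le \theta$ almost surely, we get $\mu \le \theta$ and $\E[(X^{+})^2] \le \theta\mu \le \theta^2$, hence $\E[(X^{-})^2] = \E[X^2] - \E[(X^{+})^2] \ge 1 - \theta^2$: almost all of the variance is carried by the negative part, which is not bounded.

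The last ingredient is the elementary bound $(\E[W^2])^3 \le \E[W^4]\,(\E[W])^2$ valid for any nonnegative random variable $W$, obtained by applying H\"older to $\E[W^2] = \E[W^{4/3}W^{2/3}]$ with exponents $3$ and $3/2$. Applying it with $W = X^{-}$ and using $\mu \le \theta$,
\[
\E[X^4] \ge \E[(X^{-})^4] \ge \frac{(\E[(X^{-})^2])^3}{\mu^2} \ge \frac{(1-\theta^2)^3}{\theta^2}.
\]
Substituting $\theta^2 = 1/(4b)$, the hypothesis $\E[X^4] < b$ rearranges to $1 > 4\bigl(1 - \tfrac{1}{4b}\bigr)^3$; but $b > 1$ gives $1 - \tfrac{1}{4b} > \tfrac34$, so $4\bigl(1 - \tfrac{1}{4b}\bigr)^3 > 4 \cdot \tfrac{27}{64} = \tfrac{27}{16} > 1$, a contradiction. (In particular the constant $\tfrac12$ in the statement has slack.)

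I expect the only real subtlety to be the one-sidedness step --- recognizing that ``$X^{+}$ bounded by $\theta$'' together with ``$\E[X] = 0$'' is exactly what forces essentially all the variance, and hence (via the H\"older inequality) a large fourth moment, onto $X^{-}$ --- together with the minor bookkeeping of the degenerate cases $\mu = 0$ and $b \le 1$. Each estimate is a one-liner, so there is no heavy computation to grind through.
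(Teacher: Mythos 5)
Your proof is correct: every step checks out, including the reduction $b>1$ (hence $\theta<1/2$), the facts $\E[(X^{+})^2]\le\theta\mu\le\theta^2$ and $\E[(X^{-})^2]\ge 1-\theta^2$, the H\"older bound $(\E[W^2])^3\le \E[W^4](\E[W])^2$, and the final numerical contradiction $4(1-\tfrac{1}{4b})^3>\tfrac{27}{16}>1$. Note, however, that the paper does not prove this lemma at all; it imports it from \cite{Berger,AGKSY,OD}, and the classical argument there is shorter and runs \emph{forward} rather than by contradiction: apply exactly your H\"older inequality to $W=|X|$ to get $\E[|X|]\ge \E[X^2]^{3/2}/\E[X^4]^{1/2}>\sigma/\sqrt{b}$, then use $\E[X]=0$ to write $\E[X^{+}]=\tfrac12\E[|X|]>\sigma/(2\sqrt{b})$, and conclude that $X^{+}$ (hence $X$) must exceed $\sigma/(2\sqrt{b})$ with positive probability, since a nonnegative variable bounded above by $c$ has mean at most $c$. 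Your version isolates the same two ingredients (H\"older on a one-sided part, and $\E[X^{+}]=\E[X^{-}]$ as the bridge from two-sided to one-sided control) but routes them through the contrapositive, pushing the variance and fourth moment onto $X^{-}$; this costs you the bookkeeping of the degenerate cases $\mu=0$ and $b\le 1$, which the direct argument avoids entirely, though it does make explicit where the slack in the constant $\tfrac12$ comes from. Your opening observation that Paley--Zygmund alone only yields a two-sided conclusion is exactly the right diagnosis of why $\E[X]=0$ is essential.
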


A global cardinality constraint defined by a parameter $0 < p < 1$ is $\sum_{i \in [n]}x_i=(1-2p)n$, which indicates that $(1-p)$ fraction of $x_i$'s are $1$ and the rest $p$ fraction are $-1$. For convenience, we call the global cardinality constraint with $p=1/2$ as the bisection constraint. In this paper, we always use $D$ to denote the uniform distribution on all assignments to the $n$ variables complying with the bisection constraint $\sum_{i = 1}^{n} x_i =0$ and $D_p$ to denote the uniform distribution on all assignments complying with the cardinality constraint $\sum_{i = 1}^{n} x_i = (1-2p)n$.

\subsection{Basics of Fourier Analysis of Boolean functions}\label{sec:pre-fourier}
We state several basic properties of the Fourier transform for Boolean functions those will be useful in this work. We follow the notations in \cite{OD} except for $q=\frac{2p-1}{\sqrt{p(1-p)}}$ instead of $1-p$.

We first introduce the standard Fourier transform in $\{\pm 1\}^n$. We will also use the $p$-biased Fourier transform in several proofs especially for the $2 \to 4$ hypercontractive inequality under $D_p$. More specifically, in Section~\ref{subsec_distributions}, Section~\ref{eigenvalues}, and Section~\ref{sec_hyper}, we will use the Fourier transform with the $p$-biased basis $\{\phi_S\}$. In Section~\ref{sec_bisection} and Section~\ref{sec_FPT_global}, we will use the standard Fourier transform in the basis $\{\chi_S\}$.

For the uniform distribution $U$, we define the inner-product on a pair of functions $f,g : \{\pm 1\}^n \rightarrow \mathbb{R}$ by $\langle f,g \rangle=\E_{x \sim U}[f(x)g(x)]$. Hence $\chi_S(x)=\prod_{i \in S}x_i$ over all subsets $S \subseteq [n]$ constitute an orthonormal basis for the functions from $\{\pm 1\}^n$ to $\mathbb{R}$. For simplicity, we abuse the notation by writing $\chi_S$ instead of $\chi_S(x)$. Hence every Boolean function has a unique multilinear polynomial expression $f=\sum_{S \subseteq [n]}\hf(S) \chi_S$, where $\hf(S)=\langle f, \chi_S \rangle$ is the coefficient of $\chi_S$ in $f$. In particular, $\hf(\emptyset)=\E_{x\sim U}[f(x)]$. An important fact about Fourier coefficients is Parseval's identity, i.e., $\sum_{S}\hf(S)^2=\E_{x \sim U}[f(x)^2]$, which indicates $\Var_{U}(f)=\sum_{S \neq \emptyset}\hf(S)^2$.

Given any Boolean function $f$, we define its degree to be the largest size of $S$ with non-zero Fourier coefficient $\hf(S)$. In this work, we focus on the multilinear polynomials $f$ with degree-at-most $d$. We use the Fourier coefficients of weight $i$ to denote all Fourier coefficients $\{\hf(S)|S \in {[n] \choose i}\}$ of size $i$ character functions. For a degree-at-most $d$ polynomial $f$, we abuse the notation $f$ to denote a vector in the linear space $span\{\chi_S|S \in {[n] \choose \le d}\}$, where each coordinate corresponds to a character function $\chi_S$ of a subset $S$.

We state the standard Bonami Lemma for Bernoulli $\pm 1$ random variables \cite{Bonami,OD}, which is also known as the $2\to4$ hypercontractivity for low-degree multilinear polynomials.
\begin{lemma}
Let $f:\{-1,1\}^n \rightarrow \mathbb{R}$ be a degree-at-most $d$ multilinear polynomial. Let $X_1,\cdots,X_n$ be independent unbiased $\pm 1$-Bernoulli variables. Then $$\E[f(X_1,\cdots,X_n)^4]\le 9^d \cdot \E[f(X_1,\cdots,X_n)^2]^2.$$
\end{lemma}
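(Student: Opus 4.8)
The statement to prove is the Bonami Lemma (the ordinary $2\to4$ hypercontractive inequality for degree-$d$ multilinear polynomials over unbiased $\pm 1$ bits). This is a classical result, so my plan is to present the standard induction-on-$n$ proof, which the paper itself alludes to (``The inequality admits an elegant induction proof, which was first introduced in \cite{MOO05}'').

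\textbf{The plan.} I will induct on the number of variables $n$. The base case $n=0$ (or $n=1$) is immediate: a degree-$\le d$ polynomial in one variable $X_1$ can be written $a + bX_1$, and one checks $\E[(a+bX_1)^4] = a^4 + 6a^2b^2 + b^4 \le a^4 + 2a^2 b^2 + b^4 + 4a^2b^2 = (a^2+b^2)^2 + 4a^2b^2 \le 9(a^2+b^2)^2 = 9\,\E[(a+bX_1)^2]^2$ — and in fact the degree-$1$ constant $9$ can be sharpened but $9$ suffices. For the inductive step, write $f(X_1,\dots,X_n) = g + X_n h$, where $g = g(X_1,\dots,X_{n-1})$ has degree $\le d$ and $h = h(X_1,\dots,X_{n-1})$ has degree $\le d-1$; both depend only on the first $n-1$ variables. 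Expanding and using $\E[X_n]=\E[X_n^3]=0$, $\E[X_n^2]=\E[X_n^4]=1$, independence of $X_n$ from $g,h$, we get $\E[f^4] = \E[g^4] + 6\,\E[g^2h^2] + \E[h^4]$ (the odd-power-of-$X_n$ terms vanish), while $\E[f^2] = \E[g^2] + \E[h^2]$.

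\textbf{Key steps.} After the expansion, I apply the induction hypothesis to $g$ (degree $\le d$): $\E[g^4] \le 9^d \E[g^2]^2$; and to $h$ (degree $\le d-1$): $\E[h^4] \le 9^{d-1}\E[h^2]^2$. For the cross term $\E[g^2 h^2]$ I use Cauchy–Schwarz, $\E[g^2h^2] \le \sqrt{\E[g^4]\E[h^4]} \le \sqrt{9^d 9^{d-1}}\,\E[g^2]\E[h^2] = 9^{d-1/2}\,\E[g^2]\E[h^2] = 3\cdot 9^{d-1}\,\E[g^2]\E[h^2]$. Writing $a = \E[g^2]$, $b = \E[h^2]$, it then remains to verify the numerical inequality
\[
9^d a^2 + 6\cdot 3\cdot 9^{d-1} ab + 9^{d-1} b^2 \le 9^d (a+b)^2,
\]
i.e. $9a^2 + 18ab + b^2 \le 9(a+b)^2 = 9a^2 + 18ab + 9b^2$, which holds since $b^2 \le 9b^2$. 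This completes the induction. I should also remark at the outset that it suffices to prove the claim for multilinear $f$, which is exactly the hypothesis, and that the ordering of variables is arbitrary (so ``$X_n$'' may be taken to be any variable $f$ genuinely depends on, or simply the last one; the degenerate case where $f$ does not depend on $X_n$ reduces to fewer variables anyway).

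\textbf{Main obstacle.} There is no serious obstacle — this is a textbook argument. The only point requiring a little care is the bookkeeping of the exponents of $9$ across the two different degrees ($d$ for $g$, $d-1$ for $h$) and making sure the $9^{1/2} = 3$ factor from Cauchy–Schwarz on the cross term lands exactly so that the final numerical inequality closes with room to spare; the slack is entirely in the $b^2 \le 9b^2$ step, which is why the constant $9^d$ (rather than something smaller) is the natural output of this particular induction. I will present the proof in this order: reduce to the inductive setup, expand $\E[f^4]$ and $\E[f^2]$, invoke the induction hypotheses on $g$ and $h$, bound the cross term by Cauchy–Schwarz, and close with the elementary inequality.
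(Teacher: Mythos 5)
Your proof is correct. The paper itself gives no proof of this lemma --- it is stated as the standard Bonami Lemma with citations to Bonami and O'Donnell, and the induction-on-$n$ argument you present (splitting $f = g + X_n h$, expanding $\E[f^4]$ using the vanishing odd moments of $X_n$, applying the inductive hypothesis to $g$ at degree $d$ and to $h$ at degree $d-1$, and bounding the cross term by Cauchy--Schwarz) is exactly the classical proof the paper alludes to when it mentions the ``elegant induction proof''; all of your expansions and the closing numerical inequality check out.
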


For the $p$-biased distribution $U_p$, we define the inner product on pairs of function $f,g : \{\pm 1\}^n \rightarrow \mathbb{R}$ by $\langle f,g \rangle=\E_{x \sim U_p}[f(x)g(x)]$. Then we define $\phi_i(x)=\sqrt{\frac{p}{1-p}} 1_{x_i=1}- \sqrt{\frac{1-p}{p}}1_{x_i=-1}$ and $\phi_S(x)=\prod_{i \in S} \phi_i(x)$. We abuse the notation by writing $\phi_S$ instead of $\phi_S(x)$. It is straightforward to verify $\E_{U_p}[\phi_i]=0$ and $\E_{U_p}[\phi^2_i]=1$. Notice that $\phi_S \phi_T \neq \phi_{S \Delta T}$ unlike $\chi_S \chi_S = \chi_{S \setdiff T}$ for all $x$. However, $\langle \phi_S,\phi_T \rangle=0$ for different $S$ and $T$ under $U_p$. Thus we have the biased Fourier expansion $f(x)=\sum_{S \subseteq [n]}\hf(S) \phi_S(x)$, where $\hf(S)=\langle f,\phi_S \rangle$ in $U_p$. We also have $\hf(\emptyset)=\E_{U_p}[f]$ and Parseval's identity $\sum_{S}\hf(S)^2=\E_{ U_p}[f^2]$, which demonstrates $\Var_{U_p}(f)=\sum_{S \neq \emptyset}\hf(S)^2$. We state two facts of $\phi_i$ that will be useful in the later section.
\begin{enumerate}
\item $x_i=2\sqrt{p(1-p)} \cdot \phi_i + 1-2p$. Hence $\sum_i \phi_i(x)=0$ for any $x$ satisfying $\sum_i x_i=(1-2p)n$.
\item $\phi^2_i=q \cdot \phi_i + 1$ for $q=\frac{2p-1}{\sqrt{p(1-p)}}$. Thus we always write $f$ as a multilinear polynomial of $\phi_i$.
\end{enumerate}

Observe that the largest size of $|T|$ with non-zero Fourier coefficient $\hf(T)$ in the basis $\{\phi_S|S \in {[n] \choose \le d}\}$ is equivalent to the degree of $f$ defined in $\{\chi_S|S \in {[n] \choose \le d}\}$. Hence we still define the degree of $f$ to be $\max_{S:\hf(S)\neq 0} |S|$.  We abuse the notation $f$ to denote a vector in the linear space $span\{\phi_S|S \in {[n] \choose \le d}\}$.

For the biased distribution $U_p$, we know $\E_{U_p}[\phi_i^4]=\frac{p^2}{1-p}+\frac{(1-p)^2}{p}\ge 1$. Therefore we state the $2 \to 4$ hypercontractivity in the biased distribution $U_p$ as follows.
\begin{lemma}
Let $f:\{-1,1\}^n \rightarrow \mathbb{R}$ be a degree-at-most $d$ multilinear polynomial of $\phi_1,\cdots,\phi_n$. Then $$\E_{U_p}[f(X_1,\cdots,X_n)^4]\le \left(9 \cdot \frac{p^2}{1-p}+ 9 \cdot \frac{(1-p)^2}{p}\right)^d \cdot \E_{U_p}[f(X_1,\cdots,X_n)^2]^2.$$
\end{lemma}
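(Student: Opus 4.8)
The plan is to prove this standard $p$-biased Bonami inequality by induction on the number of variables $n$, in the style of \cite{MOO05}; the only place that genuinely differs from the unbiased Bonami Lemma stated above is one extra term coming from the fact that the $\phi_i$ are not symmetric. (One could alternatively derive it from the general hypercontractivity theorem for product probability spaces, but the induction is short and self-contained given the unbiased case.) Set $M := \E_{U_p}[\phi_i^4] = \frac{p^2}{1-p} + \frac{(1-p)^2}{p}$, so that $M \ge 1$ and the claimed constant is $C^d$ with $C = 9M$; recall also $\E_{U_p}[\phi_i^3] = q$ with $q^2 = M-1$, which follows from $\phi_i^2 = q\phi_i + 1$. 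The base case $n = 0$ ($f$ constant) is trivial, and if $f$ does not depend on the $n$-th variable the claim follows from the inductive hypothesis for $n-1$ variables.

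For the inductive step, write $f = \phi_n g + h$, where $g = \sum_{S \ni n} \hat f(S)\,\phi_{S \setminus \{n\}}$ has degree at most $d-1$, $h = \sum_{S \not\ni n} \hat f(S)\,\phi_S$ has degree at most $d$, and both $g, h$ are multilinear polynomials in $\phi_1, \dots, \phi_{n-1}$ only. Since $\phi_n$ is independent of $g$ and $h$ and $\E_{U_p}[\phi_n] = 0$, $\E_{U_p}[\phi_n^2] = 1$, we get $\E_{U_p}[f^2] = \E_{U_p}[g^2] + \E_{U_p}[h^2]$; expanding the fourth power and using $\E_{U_p}[\phi_n^3] = q$, $\E_{U_p}[\phi_n^4] = M$ gives
\[
\E_{U_p}[f^4] = M\,\E_{U_p}[g^4] + 4q\,\E_{U_p}[g^3 h] + 6\,\E_{U_p}[g^2 h^2] + \E_{U_p}[h^4] .
\]
The term $\E_{U_p}[\phi_n]\,\E_{U_p}[g h^3]$ still vanishes, so the only new term relative to the unbiased computation is $4q\,\E_{U_p}[g^3 h]$.

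To handle it I would use $|q| = \sqrt{M-1}$ together with the Cauchy--Schwarz bounds $|\E_{U_p}[g^3 h]| \le \sqrt{\E_{U_p}[g^4]}\,\sqrt{\E_{U_p}[g^2 h^2]}$ and $\E_{U_p}[g^2 h^2] \le \sqrt{\E_{U_p}[g^4]}\,\sqrt{\E_{U_p}[h^4]}$, and then split the mixed quantity $\sqrt{(M-1)\,\E_{U_p}[g^4]}\cdot\sqrt{\E_{U_p}[g^2 h^2]}$ via a weighted AM--GM inequality with a parameter $\epsilon$. Plugging in the inductive hypothesis $\E_{U_p}[g^4] \le C^{d-1}(\E_{U_p}[g^2])^2$ and $\E_{U_p}[h^4] \le C^d(\E_{U_p}[h^2])^2$ reduces the whole estimate to two scalar inequalities, $M + \epsilon(M-1) \le C$ and $6 + 4/\epsilon \le 2\sqrt{C}$; with $C = 9M$ both can be met by choosing $\epsilon$ in the nonempty range $[\tfrac{2}{3(\sqrt M - 1)},\, \tfrac{8M}{M-1}]$ for $M > 1$, while the case $M = 1$ is exactly the unbiased Bonami Lemma and involves no new term. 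This closes the induction.

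The main obstacle is precisely this new cross term $4q\,\E_{U_p}[g^3 h]$: in the unbiased case the factor $9$ is already consumed with no slack in the step bounding $6\,\E_{U_p}[g^2 h^2]$, so the extra term must be absorbed entirely into the room supplied by the factor $M \ge 1$ in $C = 9M$. That is why the choice of the AM--GM parameter, together with the sharp estimate $|q| = \sqrt{M-1}$ (rather than the weaker $|q|\le\sqrt M$) and a brief case distinction on the size of $M$, are needed; the remaining manipulations are routine.
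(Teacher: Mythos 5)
Your proposal is correct. Note that the paper itself offers no proof of this lemma: it is stated in the preliminaries as a standard fact (the general Bonami lemma for a mean-zero, variance-one random variable with fourth moment $M=\E_{U_p}[\phi_i^4]$), so you are supplying an argument where the paper supplies none. Your induction is the standard one, and the one genuinely nonstandard point --- absorbing the cross term $4q\,\E_{U_p}[g^3h]$ --- is handled correctly: from $\phi_i^2=q\phi_i+1$ one indeed gets $\E_{U_p}[\phi_i^3]=q$, $\E_{U_p}[\phi_i^4]=q^2+1=M$, hence $|q|=\sqrt{M-1}$; your two Cauchy--Schwarz steps and the weighted AM--GM $4ab\le \epsilon a^2+(4/\epsilon)b^2$ reduce the induction to $M+\epsilon(M-1)\le C$ and $6+4/\epsilon\le 2\sqrt{C}$, and with $C=9M$ the required window $\bigl[\tfrac{2}{3(\sqrt M-1)},\tfrac{8M}{M-1}\bigr]$ is nonempty since $\sqrt M+1\le 12M$ for $M\ge 1$ (with $M=1$ degenerating to the unbiased case). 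The only cosmetic caveat is that the induction should formally be on the pair (number of variables, degree) or on $n$ with the claim quantified over all $d$, since you invoke the hypothesis for $g$ at degree $d-1$; as you apply it only to functions of $n-1$ variables this is immediate.
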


At last, we notice that the definition of $\phi_S$ is consistent with the definition of $\chi_S$ when $p=1/2$. When the distribution $U_p$ is fixed and clear, we use $\|f\|_2=E_{x \sim U_p}[f(x)^2]^{1/2}$ to denote the $L_2$ norm of a Boolean function $f$. From Parseval's identity, $\|f\|_2$ is also $(\sum_{S} \hf(S)^2)^{1/2}$. From the Cauchy-Schwarz inequality, one useful property is $\|fg\|_2 \le \|f^2\|_2^{1/2} \|g^2\|_2^{1/2}$.

\subsection{Distributions conditioned on global cardinality constraints}\label{subsec_distributions}
We will study the expectation and the variance of a low-degree multilinear polynomial $f$ in $D_p$. Because $\phi_S$ is consistent with $\chi_S$ when $p=1/2$, we fix the basis to be $\phi_S$ of the $p$-biased Fourier transform. Because $p \in [p_0,1-p_0]$, we treat $q=\frac{2p-1}{\sqrt{p(1-p)}}$ as a constant and hide it in the big-Oh notation.

We first discuss the expectation of $f$ under $D_p$. Because $\E_{D_p}[\phi_S]$ is not necessary 0 for any non-empty subset $S$, $\E_{D_p}[f] \neq \hf(\emptyset)$. Let $\delta_S=\E_{D_p}[\phi_S]$. From symmetry, $\delta_S=\delta_{S'}$ for any $S$ and $S'$ with the same size. For convenience, we use $\delta_k=\delta_{S}$ for any $S \in {[n] \choose k}$. From the definition of $\delta$, we have $\E_{D_p}[f]=\sum_{S} \hf(S) \cdot \delta_S$.

For $p=1/2$ and $D$, $\delta_k=0$ for all odd $k$ and $\delta_k=(-1)^{k/2}\frac{(k-1)!!}{(n-1) \cdot (n-3) \cdots (n-k+1)}$ for even $k$. We calculate it this way: pick any $T \in {[n] \choose k}$ and consider $\E_D[(\sum_i x_i)\chi_T]=0$. This indicates
$$k \cdot \delta_{k-1} + (n-k)\delta_{k+1}=0.$$
From $\delta_0=1$ and $\delta_1=0$, we could obtain $\delta_k$ for every $k>1$.

For $p\neq 1/2$ and $D_p$ under the global cardinality constraint $\sum_{i \in n}x_i=(1-2p)n$, we consider $\E_{D_p}[\phi_S]$, because $\sum_{i \in n}x_i=(1-2p)n$ indicates $\sum_i \phi_i=0$. Thus we use $\delta_S=\E_{D_p}[\phi_S]$ and calculate it as follows: pick any $T \in {[n] \choose k}$ and consider $\E_{D_p}[(\sum_i \phi_i)\phi_T]=0$. $\phi_i \phi_T=\phi_{T \cup i}$ for $i \notin T$; and $\phi_i \phi_T = q \cdot \phi_T + \phi_{T \setminus i}$ for $i\in T$ from the fact $\phi_i^2=q \cdot \phi_i + 1$. We have \begin{equation}\label{biased_delta}
k \cdot \delta_{k-1} + k \cdot q \cdot\delta_k + (n-k) \delta_{k+1}=0
\end{equation}
\begin{remark}
For $p=1/2$ and the bisection constraint, $q=0$ and the recurrence relation becomes $k \cdot \delta_{k-1} + (n-k)\delta_{k+1}=0$, which is consistent with the above characterization. Thus we abuse the notation $\delta_k$ when $U_p$ is fixed and clear.
\end{remark}

From $\delta_0=1, \delta_1=0,$ and the relation above, we can determine $\delta_k$ for every $k$. For example, $\delta_2=-\frac{1}{n-1}$ and  $\delta_3 = - \frac{\delta_2}{n-2} \cdot 2 \cdot q=\frac{2 q}{(n-1)(n-2)}$. We bound $\delta_i$ as follows:
\begin{claim}\label{bound_delta}
For any $i \ge 1$, $\delta_{2i-1}=(-1)^i O(n^{-i})$ and $\delta_{2i}=(-1)^i \frac{(2i-1)!!}{n^i}+O(n^{-i-1})$.
\end{claim}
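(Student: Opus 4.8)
The plan is to prove the two estimates together, by strong induction on the subscript $k$, driven entirely by the recurrence~\eqref{biased_delta}. Rewriting it as
$$\delta_{k+1} \;=\; -\,\frac{k}{n-k}\,\delta_{k-1}\;-\;\frac{kq}{n-k}\,\delta_{k},$$
and using $\delta_0 = 1$, $\delta_1 = 0$, this determines all the $\delta_k$. The induction hypothesis is exactly the assertion of the claim: for an even subscript, $\delta_{2i} = (-1)^i (2i-1)!!\, n^{-i} + O(n^{-i-1})$, and for an odd subscript, $\delta_{2i-1} = O(n^{-i})$. I allow the implied constants to depend on $i$ and on $q$ — this is legitimate because $d$ is a constant (so only finitely many $\delta_k$ are ever needed) and $q$ is treated as a constant since $p\in[p_0,1-p_0]$. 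The estimate is asymptotic in $n$, so I only claim it for $n \ge n_0(i)$, which also guarantees $n-k = \Theta(n)$ for every $k$ that appears. The base cases $k = 0,1$ are immediate ($\delta_0 = 1$, $\delta_1 = 0$), and one can sanity-check $\delta_2 = -1/(n-1) = -1/n + O(n^{-2})$, which is the $i=1$ even case.

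For the inductive step I would split on the parity of $k+1$. If $k+1 = 2i$ is even, the recurrence gives $\delta_{2i} = -\tfrac{2i-1}{n-2i+1}\,\delta_{2i-2} - \tfrac{(2i-1)q}{n-2i+1}\,\delta_{2i-1}$. Using $\tfrac{1}{n-c} = \tfrac{1}{n} + O(n^{-2})$ for a constant $c$, the hypothesis $\delta_{2i-2} = (-1)^{i-1}(2i-3)!!\,n^{-(i-1)} + O(n^{-i})$, and the identity $(2i-1)(2i-3)!! = (2i-1)!!$, the first term equals $(-1)^i (2i-1)!!\,n^{-i} + O(n^{-i-1})$, while the second term is $O(n^{-1})\cdot O(n^{-i}) = O(n^{-i-1})$ by the odd-case hypothesis applied to $\delta_{2i-1}$; adding these gives the claimed form for $\delta_{2i}$. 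If $k+1 = 2i-1$ is odd, the recurrence gives $\delta_{2i-1} = -\tfrac{2i-2}{n-2i+2}\,\delta_{2i-3} - \tfrac{(2i-2)q}{n-2i+2}\,\delta_{2i-2}$; since both $\delta_{2i-3}$ and $\delta_{2i-2}$ are $O(n^{-(i-1)})$ by the hypothesis and the two prefactors are $O(1/n)$, this yields $\delta_{2i-1} = O(n^{-i})$, completing the induction.

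There is no real obstacle here — the whole argument is this routine two-track induction — but two small points deserve attention in the write-up. The only genuinely substantive check is that in the even case the correction term $\tfrac{(2i-1)q}{n-2i+1}\delta_{2i-1}$ lands at strictly lower order than the main term coming from $\delta_{2i-2}$; this is precisely why one needs the odd-case bound at the refined rate $O(n^{-i})$ (so that the extra $O(1/n)$ from the prefactor pushes it into the $O(n^{-i-1})$ error). Second, the sign $(-1)^i$ attached to $\delta_{2i-1}$ in the statement is not an honest constraint once the error is written as $O(n^{-i})$ (for instance $\delta_3 = 2q/((n-1)(n-2))$ has the sign of $q$, which can be either), so I would either drop it or replace it with the magnitude bound $|\delta_{2i-1}| = O(n^{-i})$; everything that uses this claim later only relies on the magnitude.
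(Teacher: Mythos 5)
Your proof is correct and follows essentially the same route as the paper's: a parity-split induction on the recurrence $k\delta_{k-1}+kq\delta_k+(n-k)\delta_{k+1}=0$, with the even-index main term inherited from $\delta_{2i-2}$ and the odd-index terms absorbed into the error. Your side remark that the sign $(-1)^i$ on the odd-index bound is vacuous inside an $O(n^{-i})$ is also a fair reading of the statement (the paper's own proof even has a harmless typo, writing $O(n^{-i-2})$ where $O(n^{-i-1})$ is meant for $\delta_{2i+1}$).
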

\begin{proof}
We use induction on $i$. Base Case: $\delta_0=1$ and $\delta_1=0$.

Because $\delta_{2i-2}=(-1)^{i-1} \Theta(n^{-i+1})$ and $\delta_{2i-1}=(-1)^i O(n^{-i})$, the major term of $\delta_{2i}$ is determined by $\delta_{2i-2}$. We choose $k=2i-1$ in the equation \eqref{biased_delta} to obtain $$\delta_{2i}=(-1)^i \frac{(2i-1)!!}{(n-1)(n-3)\cdots (n-2i+1)}+O(\frac{1}{n^{i+1}})=(-1)^i \frac{(2i-1)!!}{n^i}+O(\frac{1}{n^{i+1}}).$$ At the same time, from $\delta_{2i}$ and $\delta_{2i-1}$, $$\delta_{2i+1}=(-1)^{i+1}q \cdot \frac{(2i)(2i-1)!!+(2i)(2i-2)(2i-3)!! + \cdots+ (2i)!!}{n^{i+1}}+O(\frac{1}{n^{i+2}})=(-1)^{i+1} O(\frac{1}{n^{i+2}}).$$
\end{proof}

Now we turn to $\E_{D_p}[f^2]$ and $\Var_{D_p}[f]$ for a degree-at-most-$d$ multilinear polynomial $f$. From the definition and the Fourier transform $f=\sum_{S}\hf(S) \phi_S$,
$$\E_{D_p}[f^2]=\sum_{S,T}\hf(S)\hf(T)\delta_{S \setdiff T}, \quad \Var_{D_p}(f)=\E_{D_p}[f^2]-\E_{D_p}[f]^2=\sum_{S,T}\hf(S)\hf(T)(\delta_{S \setdiff T}-\delta_S \delta_T).$$
We associate a ${n \choose \le d} \times {n \choose \le d}$ matrix $A$ with $\E_{D_p}[f^2]$ that $A(S,T)=\delta_{S \setdiff T}$. Hence $\E_{D_p}[f^2] = f^T \cdot A \cdot f$ from the definition when we think $f$ is a vector in the linear space of $span\{\phi_S|S \in {[n] \choose \le d}\}$.

Similarly, we associate  a ${n \choose \le d} \times {n \choose \le d}$ matrix $B$ with $\Var_{D_p}(f)$ that $B(S,T)=\delta_{S \setdiff T}-\delta_S \cdot \delta_T$. Hence $\Var_{D_p}(f) = f^T \cdot B \cdot f$. Notice that an entry $(S,T)$ in $A$ and $B$ only depends on the size of $S,T,$ and $S \cap T$.
\begin{remark}
Because $B(\emptyset,S)=B(S,\emptyset)=0$ for any $S$ and $\Var_{D_p}(f)$ is independent with $\hf(\emptyset)$, we could neglect $\hf(\emptyset)$ in $B$ such that $B$ is a $({[n] \choose d}+\cdots {[n] \choose 1})\times ({[n] \choose d}+\cdots {[n] \choose 1})$ matrix. $\hf(\emptyset)$ is the only difference between the analysis of eigenvalues in $A$ and $B$. Actually, the difference $\delta_S \cdot \delta_T$ between $A(S,T)$ and $B(S,T)$ will not effect the analysis of their eigenvalues except the eigenvalue induced by $\hf(\emptyset)$.
\end{remark}
In Section \ref{eigenvalues}, we study the eigenvalues of $\E_{D_p}[f^2]$ and $\Var_{D_p}(f)$ in the linear space $span\{\phi_S|S \in {[n] \choose \le d}\}$, i.e., the eigenvalues of $A$ and $B$.

\subsection{Eigenspaces in the Johnson Schemes}\label{association_schemes}
We shall use a few characterizations about the eigenspaces of the Johnson scheme to analyze the eigenspaces and eigenvalues of $A$ and $B$ in Section \ref{eigenvalues} (please see \cite{Godsil} for a complete introduction).

We divide $A$ into $(d+1) \times (d+1)$ submatrices where $A_{i,j}$ is the matrix of $A(S,T)$ over all $S \in {[n] \choose i}$ and $T \in {[n] \choose j}$. For each diagonal matrix $A_{i,i}$, observe that $A_{i,i}(S,T)$ only depends on $|S \cap T|$ because of $|S|=|T|=i$, which indicates $A_{i,i}$ is in the association schemes, in particular, Johnson scheme.
\begin{definition}
A matrix $M \in \mathbb{R}^{{[n] \choose r}\times {[n] \choose r}}$ is set-symmetric if for every $S,T \in {[n] \choose r}$, $M(S,T)$ depends only on the size of $|S \cap T|$.

For $n,r\le n/2$, let $\mathcal{J}_r \subseteq \mathbb{R}^{{[n] \choose r}\times {[n] \choose r}}$ be the subspace of all set-symmetric matrices. $\mathcal{J}_r$ is called the Johnson scheme.
\end{definition}
Let $M \in \mathbb{R}^{{[n] \choose r}\times {[n] \choose r}}$ be a matrix in the Johnson scheme $\mathcal{J}_r$. We treat a vector in $\mathbb{R}^{[n] \choose r}$ as a homogeneous degree $r$ polynomial $f=\sum_{T \in {[n] \choose r}}\hf(T)\phi_T$, where each coordinate corresponds to a $r$-subset. Although the eigenvalues of $M$ depend on the entries of $M$, the eigenspaces of $M$ are independent with $M$ as long as $M$ is in the Johnson scheme.
\begin{fact}
There are $r+1$ eigenspaces $V_0,V_1,\cdots,V_r$ in $M$. For $i \in [r]$, the dimension of $V_i$ is ${n \choose i}-{n \choose i-1}$; and the dimension of $V_0$ is 1. We define $V_i$ through $\hf(S)$ over all $S \in {[n] \choose i}$, although $M$ and $f$ only depend on $\{\hf(T)|T\in {[n] \choose r}\}$. $V_i$ is the linear space spanned by $\{\hf(S)\phi_S| S \in {[n] \choose i}\}$ with the following two properties:
\begin{enumerate}
\item For any $T' \in {[n] \choose i-1}$, $\{\hf(S)|S \in {[n] \choose i}\}$ satisfies that $\sum_{j \notin T'}\hf(T' \cup j)=0$ (neglect this property for $V_0$).
\item For any $T \in {[n] \choose r}$, $\hf(T)=\sum_{S \in {T \choose i}} \hf(S)$.
\end{enumerate}
It is straightforward to verify that the dimension of $V_i$ is ${n \choose i}-{n \choose i-1}$ and $V_i$ is an eigenspace in $M$. Notice that the homogeneous degree $i$ polynomial $\sum_{S \in {[n] \choose i}} \hf(S)\phi_S$ is an eigenvector of matrices in $\mathcal{J}_i$.
\end{fact}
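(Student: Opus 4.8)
The plan is to establish the two claimed properties of the subspace $V_i$ and to show it is an eigenspace. First I would set up the combinatorial framework: fix $r \le n/2$ and work with vectors indexed by ${[n] \choose r}$, but parameterize such vectors by ``lifting'' from a lower level $i$. Concretely, given any assignment of values $g(S)$ to the $i$-subsets $S \in {[n] \choose i}$, define the level-$r$ vector $f$ by $\hf(T) = \sum_{S \in {T \choose i}} g(S)$ (this is property 2, taken as the definition of the lift). The subspace $V_i$ is then the image under this lift of the space of $g$'s satisfying property 1, namely $\sum_{j \notin T'} g(T' \cup j) = 0$ for all $T' \in {[n] \choose i-1}$. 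I would first verify the dimension count: the constraints in property 1 are exactly the image of the ``down'' map from level $i-1$ to level $i$ (sending a level-$(i-1)$ vector $w$ to the level-$i$ vector $S \mapsto \sum_{j \in S} w(S \setminus j)$), so the space of valid $g$'s is the orthogonal complement (within $\mathbb{R}^{[n] \choose i}$) of the image of that map; since the down map from level $i-1$ has full rank ${n \choose i-1}$ for $i \le n/2$, the dimension is ${n \choose i} - {n \choose i-1}$. Then I would check that the lift is injective on this subspace (the lift composed with an averaging operator recovers $g$ up to a nonzero scalar, or one can argue via the eigenvalue computation below that nonzero $g$ gives nonzero $f$), so $\dim V_i = {n \choose i} - {n \choose i-1}$ as claimed.

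Next, the main point: $V_i$ is an eigenspace of every $M \in \mathcal{J}_r$. The standard way is to observe that $\mathcal{J}_r$ is a commutative algebra (the Bose--Mesner algebra of the Johnson scheme) spanned by the adjacency matrices $A_0, A_1, \dots, A_r$ where $A_\ell(S,T) = 1_{|S \cap T| = r - \ell}$. Since all $M \in \mathcal{J}_r$ are simultaneously diagonalized by the same orthogonal decomposition $\mathbb{R}^{[n] \choose r} = V_0 \oplus V_1 \oplus \dots \oplus V_r$, it suffices to show each $V_i$ (defined as above via the lift) is invariant under, say, the single operator $A_1$ (or equivalently under the ``up--down'' operator). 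I would do this by a direct computation: apply $M$ to a lifted vector $f$ coming from $g$, and show $Mf$ is again a lifted vector coming from $\lambda_i(M) \cdot g$ for some scalar $\lambda_i(M)$ depending only on $i$, $r$, $n$ and the entries of $M$. The computation $(Mf)(T) = \sum_{T'} M(T,T') \sum_{S \in {T' \choose i}} g(S)$ can be reorganized by summing first over $S$: for each $i$-subset $S$, the coefficient of $g(S)$ in $(Mf)(T)$ depends only on $|S \cap T|$ and $|T \cap S \cap \cdots|$ — more precisely one shows this coefficient, as a function of $S$ and $T$, factors through an expression that, after using property 1 to kill cross terms, equals (a constant depending only on $|S \cap T \cap \text{stuff}|$) and reduces to $\lambda_i \cdot 1_{S \subseteq T}$-type contributions, yielding $(Mf)(T) = \lambda_i \sum_{S \in {T \choose i}} g(S) = \lambda_i \hf(T)$. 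This ``the cross terms vanish because of property 1'' step is where the harmonic condition earns its keep.

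I expect the main obstacle to be precisely that combinatorial reorganization — showing that when $M$ is set-symmetric and $f$ is a lifted harmonic vector, the action of $M$ sends $f$ to another lifted vector (i.e., that $V_i$ is $M$-invariant) with a clean scalar. The bookkeeping involves counting, for fixed $i$-subsets $S, S'$ and a fixed $r$-set $T$, how many $r$-sets $T'$ contain $S'$ and meet $T$ in a prescribed size; these are Vandermonde-type sums. The harmonic constraint (property 1) is needed to collapse the dependence on $S'$ down to the single case $S' \subseteq T$. I would either carry out this count directly, or — cleaner — invoke the known fact that the Johnson scheme's Bose--Mesner algebra has exactly $r+1$ common eigenspaces whose projections are the ``level spaces'', and then identify our explicitly-defined $V_i$ with the $i$-th such space by checking (a) the dimensions match and (b) $V_i$ is orthogonal to $V_{i'}$ for $i' \ne i$ (orthogonality follows because a lifted harmonic vector from level $i$ is orthogonal to the image of the down-map from any level $i' < i$, and by symmetry the spaces at distinct levels are mutually orthogonal). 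Since the decomposition into mutually orthogonal $M$-invariant subspaces of the right dimensions is unique, this pins down $V_i$ as an eigenspace. Finally I would record that the homogeneous degree-$i$ polynomial $\sum_{S \in {[n] \choose i}} \hf(S) \phi_S$ is itself an eigenvector of every matrix in $\mathcal{J}_i$ — this is the $r = i$ special case of what was just proved — which is the form in which the fact will be used in Section~\ref{eigenvalues} to analyze $A$ and $B$.
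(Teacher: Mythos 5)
Your proposal is correct and is essentially the argument the paper intends: the paper states this as a known fact (citing Godsil), verifies only the orthogonality ingredient (Claim~\ref{orthogonality}), and your direct-computation route --- writing $(Mf)(T)=\sum_{S} c_{|S\cap T|}\,g(S)$ and using the harmonic condition of property 1 to collapse all cross terms onto $\sum_{S \in {T \choose i}} g(S)$ --- is precisely the $\tau$-collapsing computation the paper carries out explicitly in the proof of Theorem~\ref{eigenvalue_V_k} for the matrices $A$ and $B$. Your dimension count (kernel of the down operator equals the orthogonal complement of the image of the up operator, which is injective below the middle level) and the injectivity of the lift are the standard facts the paper leaves as ``straightforward to verify,'' so the plan is complete.
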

To show the orthogonality between $V_i$ and $V_j$, it is enough to prove that
\begin{claim}\label{orthogonality}
For any $j\le r$ and any $ S \in {[n] \choose <j}$, $\sum_{T \in {[n] \choose j}:S \subset T}\hf(T)=0$ for any $f \in V_j$.
\end{claim}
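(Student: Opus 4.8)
The plan is to establish the claim by a \emph{downward} induction on the size of $S$, for a fixed $j \le r$. It suffices to work with the degree-$j$ Fourier coefficients $\{\hf(S) : S \in {[n] \choose j}\}$ of $f \in V_j$, which by Property~1 of the Fact satisfy $\sum_{k \notin T'} \hf(T' \cup \{k\}) = 0$ for every $T' \in {[n] \choose j-1}$. When $|S| = j-1$ the set of $j$-subsets $T$ with $S \subset T$ is precisely $\{S \cup \{k\} : k \notin S\}$, so $\sum_{T \in {[n] \choose j},\, S \subset T} \hf(T) = \sum_{k \notin S} \hf(S \cup \{k\}) = 0$; this is the base case. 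The induction then descends to sets of size $i = j-2, j-3, \dots, 0$, the last case being $S = \emptyset$.

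For the inductive step, suppose the identity holds for all sets of size $i+1 \le j-1$, and fix $S$ with $|S| = i$. I would double-count the sum $\Sigma = \sum_{b \notin S} \sum_{T \in {[n] \choose j},\, S \cup \{b\} \subset T} \hf(T)$. Each inner sum ranges over the $j$-subsets containing the $(i+1)$-set $S \cup \{b\}$, hence vanishes by the inductive hypothesis, so $\Sigma = 0$. Exchanging the order of summation, a fixed $j$-subset $T \supset S$ contributes $\hf(T)$ once for every $b \in T \setminus S$, i.e. exactly $|T \setminus S| = j - i$ times, so $\Sigma = (j - i) \sum_{T \in {[n] \choose j},\, S \subset T} \hf(T)$. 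Since $i < j$, the factor $j - i$ is positive, and we conclude $\sum_{T \in {[n] \choose j},\, S \subset T} \hf(T) = 0$, completing the induction and the proof of the claim.

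I do not expect a real obstacle here; the argument is elementary combinatorics. The two places that need attention are: (i) the \emph{direction} of the induction --- it starts from the level-$(j-1)$ relation supplied by Property~1 and moves to strictly smaller levels, not upward; and (ii) the multiplicity count in the double sum, namely that each $j$-subset $T \supseteq S$ is counted $|T \setminus S| = j-i$ times, which is what produces the factor $j-i$ that one divides out. It is also worth noting explicitly that only Property~1 of $V_j$ is used --- Property~2 merely lifts the degree-$j$ coefficients to degree $r$ --- and that this combinatorial identity, harmonicity of the coefficient array at \emph{all} levels below $j$, is exactly what is then needed to deduce the orthogonality $V_i \perp V_j$ for $i \neq j$.
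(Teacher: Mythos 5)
Your proof is correct and is essentially identical to the paper's: the paper's one-line identity $\sum_{T \supset S}\hf(T)=\frac{1}{j-|S|}\sum_{i \notin S}\sum_{T \supset S\cup\{i\}}\hf(T)$ is exactly your double-counting with multiplicity $j-|S|$, combined with the same downward induction from the level-$(j-1)$ base case supplied by Property~1. No differences worth noting.
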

\begin{proof}
We use induction on the size of $S$ to show it is true.

Base Case $|S|=j-1$: from the definition of $f$, $\sum_{T:S \subset T}\hf(T)=0$.

Suppose $\sum_{T: S \subset T} \hf(T)=0$ for any $S \in {[n] \choose k+1}$. We prove it is true for any $S\in{[n] \choose k}$:
$$\sum_{T: S \subset T}\hf(T)=\frac{1}{j-|S|}\sum_{i \notin S} \sum_{T: (S \cup i) \subset T}\hf(T)=0.$$
\end{proof}

\subsection{CSPs with a global cardinality constraint}
In this work, we consider the constraint satisfaction problem on $\{-1,1\}^n$ with a global cardinality constraint. We allow different constraints using different predicates. Because we can add dummy variables in each constraint, we assume the number of variables in each constraint is $d$ for simplicity.
\begin{definition}
An instance $\cal I$ of a constraint satisfaction problem of arity $d$ consists of a set of variables $V=\{x_1,\cdots,x_n\}$ and a set of $m$ constraints $C_1,\cdots,C_m$. Each constraint $C_i$ consists of $d$ variables $x_{i_1},\cdots,x_{i_d}$ and a predicate $P_i \subseteq \{-1,1\}^d$. An assignment on $x_{i_1},\cdots,x_{i_d}$ satisfies $C_i$ if and only if $(x_{i_1},\cdots,x_{i_d})\in P_i$.  The value $\val_{\cal I}(\alpha)$ of an assignment $\alpha$ is the number of constraints in $C_1,\cdots,C_m$ that are satisfied by $\alpha$. The goal of the problem is to find an assignment with maximum possible value.

An instance $\cal I$ of a constraint satisfaction problem with a global cardinality constraint consists of an instance $\cal I$ of a CSP and a global cardinality constraint $\sum_{i \in [n]}x_i=(1-2p)n$ specified by a parameter $p$. The goal of the problem is to find an assignment of maximum possible value complying with the global cardinality constraint $\sum_{i \in [n]}x_i=(1-2p)n$. We denote the value of the optimal assignment by $$OPT=\max_{\alpha:\sum_{i} \alpha_i=(1-2p)n} \val_{\cal I}(\alpha).$$ The average value $AVG$ of $\cal I$ is the expected value of an assignment chosen uniformly at random among all assignments complying the global cardinality constraint $$AVG=\E_{\alpha:\sum_{i} \alpha_i=(1-2p)n}[\val_{\cal I}(\alpha)].$$
\end{definition}
Given an instance $\cal I$ of a constraint satisfaction problem of arity $d$, we associate a degree-at-most $d$ multilinear polynomial $f_{\cal I}$ with $\cal I$ such that $f_{\cal I}(\alpha)=\val_{\cal I}(\alpha)$ for any $\alpha \in \{\pm 1\}^n$ as in \cite{AGKSY}.
$$f_{\cal I}(x)=\sum_{i \in [m]}\sum_{\sigma \in P_i}\frac{\prod_{j \in [d]}(1+\sigma_j \cdot x_{i,j})}{2^d}.$$
\begin{remark}
The degree of $f_{\cal I}$ is at most $d$; and the coefficients of $f_{\cal I}$ in the standard basis $\{\chi_S|S \in {[n] \choose \le d}\}$ are always multiples of $2^{-d}$.
\end{remark}
Thus we focus on the study of degree-$d$ polynomial $f$ with coefficients of multiples of $2^{-d}$ instead of the $m$ constraints $C_1,\cdots,C_m$. From the discussion above, given an instance $\cal I$ and a global cardinality constraint $\sum_{i \in n}x_i=(1-2p)n$, the expectation of $\cal I$ under the global cardinality constraint is different than its expectation in the uniform distribution, even for CSPs of arity 2 in the bisection constraint:$$AVG=\E_{D}[f_{\cal I}]=\sum_{S}\hf(S)\E_{D}[\chi_S]=\hf(\emptyset)+\sum_{S \neq \emptyset}\hf(S)\delta_S.$$

\begin{definition}
In the satisfiability above Average Problem, we are given an instance of a CSP of arity $d$, a global cardinality constraint $\sum_{i \in n}x_i=(1-2p)n$, and a parameter $t$. We need to decide whether $OPT\ge AVG+t$ or not.
\end{definition}
In this work, we show that it is fixed-parameter tractable. Namely, given a parameter $t$ and an instance of a CSP problem of arity $d$ under a global cardinality constraint $\sum_{i \in n}x_i=(1-2p)n$, we design an algorithm that either finds a kernel on $O(t^2)$ variables or certifies that $OPT\ge AVG+t$.

\section{Eigenspaces and Eigenvalues of $E_{D_p}[f^2]$ and $\Var_{D_p}(f)$}\label{eigenvalues}
In this section we analyze the eigenvalues and eigenspaces of $A$ and $B$, following the  approach of Grigoriev \cite{Grigoriev}.


We fix any $p \in (0,1)$ with the global cardinality constraint $\sum_i x_i=(1-2p)n$ and use the $p$-biased Fourier transform in this section, i.e., $\{\phi_S|S \in {[n] \choose \le d}\}$. Because $\chi_S$ is consistent with $\phi_S$ for $p=1/2$, it is enough to study the eigenspaces of $A$ and $B$ in $span\{\phi_S|S \in {[n] \choose \le d}\}$. Since $A$ can be divided into $(d+1) \times (d+1)$ submatrices where we know the eigenspaces of the diagonal submatrices from the Johnson scheme, we study the eigenspaces of $A$ through the global cardinality constraint $\sum_i \phi_i=0$ and the relations between eigenspaces of these diagonal matrices characterized in Section \ref{association_schemes}, which is motivated by Grigoriev \cite{Grigoriev}. We will focus on the analysis of $A$ in most time and discuss about $B$ in the end of this section.

We first show the eigenspace $V'_{null}$ with an eigenvalue 0 in $A$, i.e., the null space of $A$. Because $\sum_i x_i=(1-2p)n$ in the support of $D_p$, $\sum_i \phi_i(x)=0$ for any $x$ in the support of $D_p$. Thus $(\sum_i \phi_i)h=0$ for all polynomial $h$ of degree-at-most $d-1$, which is in the linear subspace $span\{(\sum_i \phi_i)\phi_S|S \in {[n] \choose \le d-1}\}$. This linear space is the eigenspace of $A$ with an eigenvalue 0; and its dimension is ${n \choose \le d-1}={[n] \choose d-1} + {[n] \choose d-2} + \cdots {[n] \choose 0}$. By the same reason, $V'_{null}$ is the eigenspace in $B$ with an eigenvalue 0.

Let $V_d$ be the largest eigenspace in $A_{d,d}$ on ${[n] \choose d}\times {[n] \choose d}$. We demonstrate how to find an eigenspace of $A$ based on $V_d$. From the definition of $V_d$, for any $f_d \in V_d$, $f_d$ satisfies that $\sum_{j \notin T}\hf_d(T \cup j)=0$ for any $T \in {[n] \choose d-1}$ from the property of the Johnson scheme. Thus, from Claim~\ref{orthogonality} and the fact that $A(S,T)$ only depends on $|S\cap T|$ given $S \in {[n] \choose i}$ and $T \in {[n] \choose d}$, we know $A_{i,d}f_d=\vec{0}$ for all $i \le d-1$. We construct an eigenvector $f$ in $A$ from $f_d$ as follows: $\hf(S)=0$ for all $S \in {[n] \choose <d}$ and $\hf(T)=\hf_d(T)$ for all $T \in {[n] \choose d}$, i.e., $f=(\vec{0},f_{d})$. It is straightforward to verify that $A (\vec{0},f_d)=\lambda_d (\vec{0},f_d)$, where the eigenvalue $\lambda_d$ is the eigenvalue of $V_d$ in $A_{d,d}$.

Then we move to $V_{d-1}$ in $A_{d,d}$ and illustrate how to use an eigenvector in $V_{d-1}$ to construct an eigenvector of $A$. For any $f_d \in V_{d-1}$, let $f_{d-1}=\sum_{S \in {[n] \choose d-1}}\hf_{d-1}(S)\phi_S$ be the homogeneous degree $d-1$ polynomial such that $f_d=\sum_{T \in {[n] \choose d}} \left( \sum_{S \in {T \choose d-1}}\hf_{d-1}(S) \right) \phi_T$. From Claim~\ref{orthogonality}, $A_{i,d}f_d=0$ for all $i<d-1$ and $A_{i,d-1}f_{d-1}=0$ for all $i<d-2$. Observe that $f_{d-1}$ is an eigenvector of $A_{d-1,d-1}$, although it is possible that the eigenvalue of $f_{d-1}$ in $A_{d-1,d-1}$ is different than the eigenvalue of $f_d$ in $A_{d,d}$. At the same time, from the symmetry of $A$ and the relationship between $f_d$ and $f_{d-1}$, $A_{d-1,d}f_d=\beta_0 f_{d-1}$ and $A_{d,d-1}f_{d-1}=\beta_1 f_d$ for some constants $\beta_0$ and $\beta_1$ only depending on $\delta$ and $d$. Thus we can find a constant $\alpha_{d-1,d}$ such that $(\vec{0},f_{d-1},\alpha_{d-1,d}f_d)$ becomes an eigenvector of $A$.

More directly, we determine the constant $\alpha_{d-1,d}$ from the orthogonality between $(\vec{0},f_{d-1},\alpha_{d-1,d} \cdot f_d)$ and the null space $span\{(\sum_i \phi_i)\phi_S|S \in {[n] \choose \le d-1}\}$. We pick any $T \in {[n] \choose d-1}$ and rewrite $(\sum_i \phi_i)\phi_T=\sum_{j \in T}\phi_{T \setminus j}+ q \cdot |T| \cdot \phi_T + \sum_{j\notin T}\phi_{T \cup j}$. From the orthogonality,
\begin{multline*}
q |T| \cdot \hf_{d-1}(T) + \sum_{j \notin T} \alpha_{d-1,d} \big(\sum_{T' \in {T \cup j \choose d-1}}\hf_{d-1}(T')\big)=0\\\Rightarrow \bigg( q|T|+(n-|T|)\alpha_{d-1,d} \bigg)\hf_{d-1}(T) + \alpha_{d-1,d} \bigg(\sum_{T'' \in {T \choose d-2}} \sum_{j\notin T}\hf_{d-1}(T'' \cup j)\bigg)=0.
\end{multline*}
From the property of $f_{d-1}$ that $\sum_{j \notin T''}\hf_{d-1}(T'' \cup j)=0$ for all $T'' \in {[n] \choose d-2}$, we simplify it to $$\big(q|T| + (n- 2|T|)\alpha_{d-1,d}\big)\hf_{d-1}(T)=0,$$ which determines $\alpha_{d-1,d}=\frac{-(d-1)q}{n-2d+2}$ directly.

Following this approach, we figure out all eigenspaces of $A$ from the eigenspaces $V_0,V_1,\cdots,V_d$ in $A_{d,d}$. For convenience, we use $V'_k$ for $k \le d$ to denote the $k$th eigenspace in $A$ extended by $V_k$ in $A_{d,d}$. We first choose the coefficients in the combination. Let $\alpha_{k,i}=0$ for all $i<k$, $\alpha_{k,k}=1$, and $\alpha_{k,k+1},\cdots,\alpha_{k,d}$ satisfy the recurrence relation (we will show the choices of $\alpha$ later):
\begin{equation}\label{biased_alpha}
i \cdot \alpha_{k,k+i-1} + (k+i)\cdot q \cdot\alpha_{k,k+i} + (n-2k-i)\alpha_{k,k+i+1}=0.
\end{equation}
Then for every $f \in V_k$, the coefficients of $\hf(T)$ over all $T \in {[n] \choose \le d}$ spanned by $\{\hf(S)|S \in {[n] \choose k}\}$ satisfy the following three properties:
\begin{enumerate}
\item $\forall T \in {[n] \choose k-1},\sum_{j \notin T}\hf(T \cup j)=0$ (neglect this property for $V'_0$);
\item $\forall T \in {[n] \choose >k}, \hf(T) =\alpha_{k,|T|} \cdot \sum_{S \in {T \choose k}}\hf(S)$;
\item $\forall T \in {[n] \choose <k}, \hf(T) =0$.
\end{enumerate}
Now we show the recurrence relation of $\alpha_{k,k+i}$ from the fact that $f$ is orthogonal to the null space of $A$. We consider  $(\sum_i \phi_i)\phi_T$ in the null space for a subset $T$ of size $k+i<d$ and simplify $(\sum_i \phi_i)\phi_T$ to $\sum_{j \in T}\phi_{T \setminus j}+ q \cdot |T| \cdot \phi_T + \sum_{j\notin T}\phi_{T \cup j}$. We have
\begin{multline*}
 \sum_{j \in T} \alpha_{k,k+i-1} \sum_{S \in {T\setminus j \choose k}} \hf(S) + (k+i)\cdot q\cdot \alpha_{k,k+i}\sum_{S \in {T \choose k}}\hf(S) + \sum_{j \notin T} \alpha_{k,k+i+1} \sum_{S \in {T \cup j \choose k}}\hf(S)=0 \Rightarrow \\
 \sum_{S \in {T \choose k}}\big( i \cdot \alpha_{k,k+i-1} + (k+i) q \cdot\alpha_{k,k+i} + (n-k-i)\alpha_{k,k+i+1}\big) \hf(S)+ \sum_{T' \in {T \choose k-1}} \alpha_{k,k+i+1} \sum_{j \notin T} \hf(T' \cup j)=0 .
\end{multline*}
Using the first property $\forall T' \in {[n] \choose k-1},\sum_{j \notin T'}\hf(T' \cup j)=0$ to eliminate all $S'$ not in $T$, We obtain
$$\big(i \cdot \alpha_{k,k+i-1} + (k+i)\cdot q \cdot\alpha_{k,k+i} + (n-2k-i)\alpha_{k,k+i+1} \big)\sum_{S \in {T \choose k}}\hf(S)=0.$$ Because $\sum_{S \in {T \choose k}}\hf(S)$ is not necessary equal to 0 to satisfy the first property (actually $\sum_{S \in {T \choose k}}\hf(S)=0$ for all $T \in {[n] \choose k+i}$ indicates $\hf(S)=0$ for all $S \in {[n] \choose k}$),  the coefficient is 0, which provides the recurrence relation in \eqref{biased_alpha}.

The dimension of $V'_k$ is ${[n] \choose k}-{[n] \choose k-1}$ from the first property (It is straightforward to verify $\sum_{k=0}^d dim(V'_k) + dim(V'_{null})=\sum_{k=0}^d({[n] \choose k}-{[n] \choose k-1}) + {[n] \choose d-1} + {[n] \choose d-2} + \cdots {[n] \choose 0}={[n] \choose \le d}$). The orthogonality between $V'_i$ and $V'_j$ follows from Claim \ref{orthogonality} and the orthogonality of $V_i$ and $V_j$.
\begin{remark}
$V'_1,\cdots,V'_d$ are the non-zero eigenspaces in $B$ except for $V'_0$. For $f\in V'_0$, observe that $\hf(T)$ only depends on the size of $T$ and $\hf(\emptyset)$. Hence for any polynomial $f \in V'_0$, $f$ is a constant function over the support of $D_p$, i.e., $\Var_{D_p}(V'_0)=0$. Therefore $V'_0$ is in the null space of $B$.
\end{remark}

We use induction on $i$ to bound $\alpha_{k,k+i}$. From $\alpha_{k,k}=1$ and the recurrence relation \eqref{biased_alpha}, the first few terms would be $\alpha_{k,k+1}=-\frac{k q}{n-2k}$ and $\alpha_{k,k+2}=-\frac{1+(k+1)q \cdot \alpha_{k,k+1}}{n-2k-1} = -\frac{1}{n-2k-1} + O(n^{-2})$. 
\begin{claim}\label{bound_alpha}
$\alpha_{k,k+2i}=(-1)^i \frac{(2i-1)!!}{n^i}+O(n^{-i-1})$ and $\alpha_{k,k+2i+1}=(-1)^{i+1} O(n^{-i-1})$.
\end{claim}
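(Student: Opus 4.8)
The plan is to prove Claim~\ref{bound_alpha} by induction on $i$, exactly paralleling the proof of Claim~\ref{bound_delta}, since the recurrence \eqref{biased_alpha} for $\alpha_{k,k+i}$ has the same structure as \eqref{biased_delta} for $\delta_k$ (only the coefficients $i$, $(k+i)q$, $(n-2k-i)$ replace $k$, $kq$, $(n-k)$, and the ``$(n-2k-i)$'' in the last slot is what makes the two analyses differ by lower-order terms but not in the leading asymptotics). The base cases $\alpha_{k,k}=1$ and $\alpha_{k,k+1}=-kq/(n-2k)=O(1/n)$ are already recorded right before the claim, and $\alpha_{k,k+2}=-1/(n-2k-1)+O(n^{-2})$ matches the claimed formula for $i=1$ since $(2\cdot 1-1)!!=1$.

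First I would set up the induction: assume $\alpha_{k,k+2i-2}=(-1)^{i-1}\frac{(2i-3)!!}{n^{i-1}}+O(n^{-i})$ and $\alpha_{k,k+2i-1}=(-1)^{i}O(n^{-i})$, i.e. the estimates hold up to index $k+2i-1$. To get $\alpha_{k,k+2i}$, I would apply \eqref{biased_alpha} with the middle index equal to $k+2i-1$, i.e. $i'=2i-1$ there, solving for $\alpha_{k,k+2i}$ in terms of $\alpha_{k,k+2i-2}$ and $\alpha_{k,k+2i-1}$:
\[
\alpha_{k,k+2i}=\frac{-(2i-1)\,\alpha_{k,k+2i-2}-(k+2i-1)q\,\alpha_{k,k+2i-1}}{\,n-2k-(2i-1)\,}.
\]
By the inductive hypothesis the second term in the numerator is $O(n^{-i})$, so the leading behavior is governed by $-(2i-1)\alpha_{k,k+2i-2}$, giving
\[
\alpha_{k,k+2i}=(-1)^i\frac{(2i-1)(2i-3)!!}{n-2k-2i+1}\cdot\frac{1}{n^{i-1}}+O(n^{-i-1})
=(-1)^i\frac{(2i-1)!!}{n^i}+O(n^{-i-1}),
\]
where the last step uses $\frac{1}{n-2k-2i+1}=\frac1n+O(n^{-2})$ (valid since $k,i\le d$ are constants) and $(2i-1)(2i-3)!!=(2i-1)!!$. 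Then for $\alpha_{k,k+2i+1}$ I would again use \eqref{biased_alpha}, now with middle index $k+2i$ (i.e. $i'=2i$), solving
\[
\alpha_{k,k+2i+1}=\frac{-2i\,\alpha_{k,k+2i-1}-(k+2i)q\,\alpha_{k,k+2i}}{\,n-2k-2i\,}.
\]
Here $\alpha_{k,k+2i-1}=O(n^{-i})$ and $\alpha_{k,k+2i}=(-1)^i\Theta(n^{-i})$, so the numerator is $O(n^{-i})$ and the denominator is $\Theta(n)$, yielding $\alpha_{k,k+2i+1}=(-1)^{i+1}O(n^{-i-1})$, which completes the induction step. (The sign $(-1)^{i+1}$ is consistent since the dominant contribution $-(k+2i)q\,\alpha_{k,k+2i}$ carries the sign $-(-1)^i=(-1)^{i+1}$, though for the stated $O(\cdot)$ bound the sign is immaterial.)

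The only subtlety — and the place to be a little careful rather than a genuine obstacle — is making sure the ``constant'' factors $k$, $i$, $q$, and $2k+2i$ are all treated as $O(1)$ (legitimate because $p\in[p_0,1-p_0]$ so $q=\Theta(1)$, and $k,i\le d$ with $d$ constant), and that the denominators $n-2k-i$ never vanish for $n$ large enough, so the recurrence is well-defined and the $\frac{1}{n-O(1)}=\frac1n+O(n^{-2})$ expansions are valid. Given that, the argument is a routine two-step induction mirroring Claim~\ref{bound_delta}, and I would present it compactly, pointing out the parallel with that earlier claim.
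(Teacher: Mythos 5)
Your proposal is correct and follows essentially the same route as the paper: a two-step induction on $i$ using the recurrence \eqref{biased_alpha}, where the even-indexed term is dominated by $\alpha_{k,k+2i-2}$ and the odd-indexed term picks up an extra factor of $1/n$. The paper's own proof is just a terser statement of exactly this argument, so your write-up simply supplies the details it omits.
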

\begin{proof}
We use induction on $i$ again. Base Case: $\alpha_{k,k}=1$ and $\alpha_{k,k+1}=-\frac{k q}{n-2k}.$

From the induction hypothesis $\alpha_{k,k+2i-2}=(-1)^{i-1} \Theta(n^{-i+1})$ and $\alpha_{k,k+2i-1}=(-1)^i n^{-i}$, the major term of $\alpha_{k,k+2i}$ is determined $\alpha_{k,k+2i-2}$ such that $\alpha_{k,k+2i}=(-1)^i \frac{(2i-1)!!}{n^i}+O(n^{-i-1})$. Similarly, $\alpha_{k,k+2i+1}=(-1)^{i+1} O(n^{-i-1})$.
\end{proof}

Now we bound the eigenvalue of $V'_k$. For convenience, we think $0!=1$ and $(-1)!!=1$.
\begin{theorem}\label{eigenvalue_V_k}
For any $k \in \{0,\cdots,d\}$, the eigenvalue of $V'_k$ in $A$ is $\sum_{even \ i=0}^{d-k} \frac{(i-1)!!(i-1)!!}{i!} \pm O(n^{-1})$. For any $k \in \{1,\cdots,d\}$, the eigenvalue of $V'_k$ in $B$ is the same $\sum_{even \ i=0}^{d-k} \frac{(i-1)!!(i-1)!!}{i!} \pm O(n^{-1})$.
\end{theorem}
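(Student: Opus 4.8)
The plan is to read off the eigenvalue of $V'_k$ by evaluating the quadratic form $f\mapsto\E_{D_p}[f^2]$ on one explicit eigenvector. Since $V'_k$ is an eigenspace of $A$, for any nonzero $f\in V'_k$ we have $Af=\lambda_kf$, hence $\E_{D_p}[f^2]=f^{T}Af=\lambda_k\sum_S\hf(S)^2$, i.e.
\[
\lambda_k \;=\; \frac{\E_{D_p}[f^2]}{\sum_{S}\hf(S)^2}\,.
\]
The same vector is an eigenvector of $B=A-ww^{T}$ with $w(S)=\delta_{|S|}$, because $\E_{D_p}[f]=\sum_S\hf(S)\delta_{|S|}=w^{T}f$; for $k\ge 1$ this is $0$, since every elementary symmetric polynomial $e_r(\phi_1,\dots,\phi_n)$ is constant on $\mathrm{supp}(D_p)$ (the support is a single orbit under permuting coordinates), so $w=\sum_r\delta_r\,e_r(\phi)$ is constant on $\mathrm{supp}(D_p)$ and thus lies in $V'_0\oplus V'_{null}$, which is orthogonal to $V'_k$. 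Hence $Bf=Af=\lambda_kf$, and it suffices to analyze $A$.

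For the explicit eigenvector, fix $k$ pairwise disjoint pairs $\{a_1,b_1\},\dots,\{a_k,b_k\}$ and put $P=\prod_{l=1}^{k}(\phi_{a_l}-\phi_{b_l})$. Its $p$-biased Fourier coefficient vector $g\in\mathbb{R}^{\binom{[n]}{k}}$ is supported on the $2^k$ transversal sets (one point from each pair) with values $\pm1$; a direct check gives $\sum_{j\notin T'}g(T'\cup j)=0$ for every $T'\in\binom{[n]}{k-1}$ (the only surviving contributions come from the pair not met by $T'$ and cancel), so $g$ lies in the harmonic degree-$k$ eigenspace $V_k$ of the Johnson scheme. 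Let $\tilde f\in V'_k$ be the eigenvector with level-$k$ part $g$ (coefficient of $\phi_T$ equal to $\alpha_{k,|T|}\sum_{S\in\binom{T}{k}}g(S)$ for $|T|>k$, and $0$ for $|T|<k$). Because $\sum_{S\in\binom{T}{k}}g(S)$ is $\pm1$ when $T$ is a transversal together with $|T|-k$ non-special points and $0$ otherwise (any pair contained in $T$ cancels the sum), counting such $T$ yields $\sum_S\widehat{\tilde f}(S)^2=2^kN$ with $N:=\sum_{j=0}^{d-k}\alpha_{k,k+j}^2\binom{n-2k}{j}$.

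The crux is the identity $\tilde f\equiv N\cdot P$ on $\mathrm{supp}(D_p)$. The coordinate vector of $P$ is supported at level $k$ with level-$k$ part $g\in V_k$; since lifts of $V_{k'}$ ($k'<k$) are orthogonal to $V_k$, a short orthogonality computation with the Johnson-scheme raising operators $\uparrow_j$ (for which $\uparrow_j^{*}\uparrow_j=\binom{n-2k}{j}\,\mathrm{Id}$ on $V_k$, so that $\langle h,h'\rangle=N\langle h_k,h'_k\rangle$ for $h,h'\in V'_k$ with level-$k$ parts $h_k,h'_k$) gives $\mathrm{proj}_{V'_{k'}}(P)=0$ for all $k'\ne k$ and $\mathrm{proj}_{V'_k}(P)=\tilde f/N$; hence $P-\tilde f/N\in V'_{null}$, which vanishes on $\mathrm{supp}(D_p)$. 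Therefore $\E_{D_p}[\tilde f^2]=N^2\,\E_{D_p}[P^2]$, and expanding $P^2=\prod_{l=1}^{k}\bigl(2+q(\phi_{a_l}+\phi_{b_l})-2\phi_{a_l}\phi_{b_l}\bigr)$ via $\phi_i^2=q\phi_i+1$ shows that every term other than the constant $2^k$ is an $O(1)$-size multiple of some $\E_{D_p}[\phi_R]=\delta_{|R|}$ with $|R|\ge1$, hence $O(1/n)$ by Claim~\ref{bound_delta}; so $\E_{D_p}[P^2]=2^k+O(1/n)$.

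Combining, $\lambda_k=\E_{D_p}[\tilde f^2]\big/\sum_S\widehat{\tilde f}(S)^2=N^2(2^k+O(1/n))\big/(2^kN)=N+O(1/n)$ (using $N=\Theta(1)$), and by Claim~\ref{bound_alpha} the even terms $\alpha_{k,k+2m}^2\binom{n-2k}{2m}$ equal $\bigl((2m-1)!!\bigr)^2/(2m)!+O(1/n)$ while the odd ones are $O(1/n)$, so $N=\sum_{\text{even }i=0}^{d-k}\frac{(i-1)!!\,(i-1)!!}{i!}+O(1/n)$ (with $0!=(-1)!!=1$) — the asserted value; the claim for $B$ then follows from the first paragraph. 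The step I expect to be most delicate is the identity $\tilde f\equiv NP$ on $\mathrm{supp}(D_p)$, which forces one to control the orthogonal decomposition of $P$ against $V'_0,\dots,V'_d$ and $V'_{null}$ simultaneously; the other delicate point is the asymptotic bookkeeping of $N$, whose summands are products of quantities of very different orders in $n$, so one must pin down exactly which survive at order $\Theta(1)$.
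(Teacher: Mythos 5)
Your proof is correct, but it takes a genuinely different route from the paper's. The paper computes a single row of the matrix--vector product: for $f\in V'_k$ and fixed $S\in{[n]\choose k}$ it expands $\sum_T A(S,T)\hf(T)=\sum_{S'}\tau_{|S\triangle S'|}\hf(S')$ using $\hf(T)=\alpha_{k,|T|}\sum_{S'\in{T\choose k}}\hf(S')$, shows $\tau_0=\sum_{\mathrm{even}\ i}\frac{(i-1)!!(i-1)!!}{i!}+O(1/n)$ and $\tau_{2l}=O(n^{-l})$, and then uses the harmonicity relations to cancel all $S'\ne S$, obtaining the eigenvalue $\tau_0'$; note its $\tau_0$ is $\sum_i{n-k\choose i}\alpha_{k,k+i}\delta_i$, whereas your $N=\sum_j\alpha_{k,k+j}^2{n-2k\choose j}$ trades one factor of $\delta$ for one of $\alpha$ --- they agree to leading order because $\delta_{2m}$ and $\alpha_{k,k+2m}$ share the same leading asymptotics. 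You instead evaluate the Rayleigh quotient on the explicit eigenvector lifted from $\prod_l(\phi_{a_l}-\phi_{b_l})$, which converts the eigenvalue computation into $\E_{D_p}[P^2]$ for a genuinely degree-$k$ polynomial, an essentially one-line calculation. What the paper's row computation buys is that it simultaneously \emph{proves} $Af=\tau_0'f$, i.e.\ that $V'_k$ is an eigenspace; your argument presupposes that $A$ acts as a scalar on $V'_k$ (otherwise a single Rayleigh quotient does not identify ``the'' eigenvalue), which is justified by the construction preceding the theorem (orthogonality to $V'_{null}$ plus Schur's lemma for the permutation action), and it also needs $2k\le n$ so the disjoint pairs exist --- harmless for large $n$. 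One small soft spot: your justification that $w\perp V'_k$ for $k\ge1$ (``$w$ is constant on $\mathrm{supp}(D_p)$, hence in $V'_0\oplus V'_{null}$'') is slightly circular, since placing every polynomial that is constant on the support inside $V'_0\oplus V'_{null}$ essentially uses the spectral decomposition being established; the clean fix is that $w(S)$ depends only on $|S|$, so $\mathrm{proj}_{V'_k}(w)$ is permutation-invariant with level-$k$ part proportional to the all-ones vector, which lies in $V_0\perp V_k$, forcing the projection to vanish. With that repair the argument is complete, and the remaining bookkeeping ($\|\tilde f\|_2^2=2^kN$, $\E_{D_p}[P^2]=2^k+O(1/n)$, and the asymptotics of $N$ via Claim~\ref{bound_alpha}) all checks out.
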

\begin{proof}
We fix a polynomial $f \in V'_k$ and $S \in {[n] \choose k}$ to calculate $\sum_{T \in {[n] \choose \le d}}A(S,T)\hf(T)$ for the eigenvalue of $V'_k$ in $A$. From the fact $\hf(T)=\alpha_{k,|T|} \cdot \sum_{S' \in {T \choose k}}\hf(S')$,  we expand $\sum_{T}A(S,T)\hf(T)$ into the summation of $\hf(S')$ over all $S' \in {[n] \choose k}$ with coefficients. From the symmetry of $A$, the coefficients of $\hf(S')$ in the expansion only depends on the size of $S \cap S'$ (the sizes of $S$ and $S'$ are $k$). Hence we use $\tau_i$ to denote the coefficients of $\hf(S')$ given $|S' \setdiff S|=i$. Thus $\sum_{T}A(S,T)\hf(T) = \sum_{S' \in {[n] \choose k}}\tau_{|S' \vartriangle S|}\hf(S')$.

We calculate $\tau_0,\cdots,\tau_{2d}$ as follows. Because $|S'|=|S|=k$, $|S \setdiff S'|$ is always even. For $\tau_0$, we only consider $T$ containing $S$ and use $k+i$ to denote the size of $T$.
\begin{equation}\label{coef_tau_0}
\tau_0=\sum_{i=0}^{d-k} {n-k \choose i} \cdot \alpha_{k,k+i} \cdot \delta_i .
\end{equation}
For $\tau_{2l}$, we fix a subset $S'$ with $S \setdiff S'=2l$ and only consider $T$ containing $S'$. We use $k+i$ to denote the size of $T$ and $t$ to denote the size of the intersection of $T$ and $S\setminus S'$.
\begin{equation}\label{coef_tau_2l}
\tau_{2l}=\sum_{i=0}^{d-k}\alpha_{k,k+i} \sum_{t=0}^{i}{l \choose t}{n-k-2l \choose i-t}\delta_{2l+i-2t}.
\end{equation}
We will prove that $\tau_0=\Theta(1)$ and $\tau_{2l}=O(n^{-l})$ for all $l\ge 1$ then eliminate all $S'\neq S$ in $\sum_{T}A(S,T)\hf(T) = \sum_{S' \in {[n] \choose k}}\tau_{S' \vartriangle S}\hf(S')$ to obtain the eigenvalue of $V'_k$.

From Claim \ref{bound_delta} and Claim \ref{bound_alpha}, we separate the summation of $\tau_0=\sum_{i=0}^{d-k} {n-k \choose i} \cdot \alpha_{k,k+i} \cdot \delta_i$ to $\sum_{even \ i}{n-k \choose i} \cdot \alpha_{k,k+i} \cdot \delta_i + \sum_{odd \ i}{n-k \choose i} \cdot \alpha_{k,k+i} \cdot \delta_i$. We replace $\delta_i$ and $\alpha_{k,k+i}$ by the bound in Claim \ref{bound_delta} and Claim \ref{bound_alpha}:
\begin{multline*}
\sum_{even \ i}  {n-k \choose i} (-1)^{\frac{i}{2}+\frac{i}{2}} \left(\frac{(i-1)!!}{n^{\frac{i}{2}}} \frac{(i-1)!!}{n^{\frac{i}{2}}} + O(n^{-i-1})\right)  \\
+ \sum_{odd \ i} {n-k \choose i} (-1)^{\frac{i+1}{2}+\frac{i+1}{2}} \cdot O(n^{-\frac{i+1}{2}}) \cdot O(n^{-\frac{i+1}{2}}).
\end{multline*}
It shows $\tau_0=\sum_{even \ i=0}^{d-k} \frac{(i-1)!!(i-1)!!}{i!} + O(n^{-1})$. For $\tau_{2l}$, we bound it by $O(n^{-l})$ through similar method, where $O(n^{-l})$ comes from the fact that $\alpha_{k,k+i}=O(n^{-\frac{i}{2}})$, ${n-k-2l \choose i-t}<n^{i-t}$, and $\delta_{2l+i-2t}=O(n^{-\frac{2l+i-2t}{2}})$.

At last, we show the eigenvalue of $V'_k$ is $O(1/n)$ close to $\tau_0$, which is enough to finish the proof. From the fact that for any $T' \in {[n] \choose k-1}$, $\sum_{j \notin T'}\hf(T' \cup j)=0$, we have (recall that$|S|=k$)
\begin{multline*}
 (k-i) \left(\sum_{S_0 \in {S \choose i}} \sum_{S_1 \in {[n]\setminus S \choose k-i}}\hf(S_0 \cup S_1)\right) + (i+1) \left(\sum_{S_0 \in {S \choose i+1}}\sum_{S_1 \in {[n]\setminus S \choose k-i-1}}\hf(S_0 \cup S_1)\right) \\=\sum_{S_0 \in {S \choose i}}\sum_{S_1 \in {[n]\setminus S \choose k-i-1}} \sum_{j \notin S_0 \cup S_1} \hf(S_0 \cup S_1 \cup j)=0.
\end{multline*}
Thus we apply it on $\sum_{i}\tau_{2i}\big(\sum_{S' \in {[n] \choose k}: |S \cap S'|=k-i}\hf(S')\big)$ to remove all $S'\neq S$. Let $\tau'_{2k}=\tau_{2k}$ and $$\tau'_{2k-2i-2}=\tau_{2k-2i-2} - \frac{i+1}{k-i} \cdot \tau'_{2k-2i}.$$ Using the above rule, it is straightforward to verify
$$\sum_{i=j}^k\tau_{2i}\bigg(\sum_{S' \in {[n] \choose k}: |S \cap S'|=k-i}\hf(S')\bigg)=\tau'_{2j}\bigg(\sum_{S' \in {[n] \choose k}: |S \cap S'|=k-j}\hf(S')\bigg)$$ from $j=k$ to $j=0$ by induction. Therefore $\sum_{i=0}^d\tau_{2i}\big(\sum_{S' \in {[n] \choose k}: |S \cap S'|=k-i}\hf(S')\big)=\tau'_{0} \hf(S)$. Because $\tau_{2i}=O(n^{-i})$, we have $\tau'_0=\tau_0 \pm O(1/n)$. (Remark: actually, $\tau'_0=\sum_{i=0}^k \tau_{2i} \cdot (-1)^i {k \choose i}.$)

From all discussion above, the eigenvalue of $V'_k$ in $A$ is $\tau'_0$, which is $\sum_{even \ i=0}^{d-k} \frac{(i-1)!!(i-1)!!}{i!} \pm O(n^{-1})$.

For $V'_k$ in $B$ of $k \ge 1$, observe that the difference $\delta_S \cdot \delta_T$ between $A(S,T)$ and $B(S,T)$ will not change the calculation of $\tau$, because $\sum_{T \in {[n] \choose i}}\delta_{S} \delta_T \hf(T)=\delta_{S}\delta_i \big(\sum_{T \in {[n] \choose i}} \hf(T)\big)=0$ from the fact $f$ is orthogonal to $V'_0$.
\end{proof}
Because $\frac{(i-1)!!(i-1)!!}{i!}\le 1$ for any even integer $i \ge 0$, we have the following two corollaries.
\begin{corollary}\label{biased_eigenvalue_2nd_E}
All non-zero eigenvalues of $\E_{D_p}[f^2]$ in the linear space of $span\{\phi_S|S \in {[n] \choose \le d}\}$ are between $.5$ and $[\frac{d}{2}]+1 \le d$.
\end{corollary}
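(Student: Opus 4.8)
The plan is to derive the corollary as an immediate consequence of Theorem~\ref{eigenvalue_V_k}, so that essentially no new work is needed beyond bounding one explicit sum of nonnegative rationals. Recall from Section~\ref{eigenvalues} that $\mathrm{span}\{\phi_S \mid S \in {[n] \choose \le d}\}$ decomposes as the orthogonal direct sum $V'_{null} \oplus V'_0 \oplus V'_1 \oplus \cdots \oplus V'_d$, where $V'_{null} = \mathrm{span}\{(\sum_i \phi_i)\phi_S \mid S \in {[n] \choose \le d-1}\}$ lies in the null space of $A$ and each $V'_k$ is an eigenspace of $A$ with eigenvalue $\lambda_k = \sum_{\mathrm{even}\ i = 0}^{d-k} \frac{(i-1)!!\,(i-1)!!}{i!} \pm O(n^{-1})$. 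Since $A$ is real symmetric and the dimensions of these spaces were checked to sum to ${[n] \choose \le d}$, every eigenvalue of $A = \E_{D_p}[f^2]$ equals $0$ or one of the $\lambda_k$. Hence it suffices to show $1/2 \le \lambda_k \le [\frac{d}{2}]+1$ for every $k \in \{0,\dots,d\}$ and all sufficiently large $n$ (depending only on $d$); this simultaneously certifies that the $\lambda_k$ are precisely the nonzero eigenvalues.

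For the lower bound I would single out the $i=0$ term of the defining sum: under the conventions $0!=1$ and $(-1)!!=1$ adopted just before Theorem~\ref{eigenvalue_V_k}, this term equals $\frac{(-1)!!\,(-1)!!}{0!}=1$. Every further term $\frac{(i-1)!!\,(i-1)!!}{i!}$ with even $i\ge 2$ is a quotient of positive integers, hence nonnegative, so $\sum_{\mathrm{even}\ i=0}^{d-k}\frac{(i-1)!!\,(i-1)!!}{i!}\ge 1$; the additive error $O(n^{-1})$ drops below $1/2$ once $n$ exceeds a constant depending only on $d$, giving $\lambda_k \ge 1/2$.

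For the upper bound I would invoke the elementary inequality stated just before the corollary, namely $\frac{(i-1)!!\,(i-1)!!}{i!}\le 1$ for every even $i\ge 0$. The summation index $i$ runs over the even elements of $\{0,1,\dots,d-k\}\subseteq\{0,1,\dots,d\}$, of which there are at most $[\frac{d}{2}]+1$, so $\sum_{\mathrm{even}\ i=0}^{d-k}\frac{(i-1)!!\,(i-1)!!}{i!}\le [\frac{d}{2}]+1$. Adding back the $O(n^{-1})$ term and using the slack in $[\frac{d}{2}]+1\le d$ (valid for all $d\ge 1$) yields $\lambda_k \le [\frac{d}{2}]+1 \le d$.

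There is no real obstacle in this argument; the only points requiring care are the boundary conventions for $i!$ and $i!!$ at the smallest arguments, and the bookkeeping of the $\pm O(n^{-1})$ error inherited from Theorem~\ref{eigenvalue_V_k}, which is what forces the clean endpoints $1/2$ and $[\frac{d}{2}]+1$ to be asserted only for $n$ above a threshold depending on $d$ (or, for all $n$, at the price of slightly looser constants obtained by tracking the implicit constant in the $O(n^{-1})$ term explicitly). For the intended FPT application the stated form is all that is needed, and the analogous bound for $B$ follows verbatim since Theorem~\ref{eigenvalue_V_k} gives the same eigenvalues $\lambda_k$ on $V'_1,\dots,V'_d$.
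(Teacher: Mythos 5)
Your proposal is correct and follows exactly the paper's route: the corollary is read off from Theorem~\ref{eigenvalue_V_k} by noting that the $i=0$ term of the eigenvalue sum equals $1$ (giving the lower bound $1-O(n^{-1})\ge 0.5$) and that each of the at most $[\frac{d}{2}]+1$ even-indexed terms is at most $1$ (giving the upper bound), which is precisely the one-line justification the paper gives before stating the corollary. No substantive difference from the paper's argument.
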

\begin{corollary}\label{biased_eigenvalue_Var}
All non-zero eigenvalues of $\Var_{D_p}[f]$ in the linear space of $span\{\phi_S|S \in {[n] \choose 1,\cdots,d}\}$ are between $.5$ and $[\frac{d+1}{2}] \le d$.
\end{corollary}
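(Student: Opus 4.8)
The plan is to derive both corollaries directly from Theorem~\ref{eigenvalue_V_k} and the eigenspace analysis of Section~\ref{eigenvalues} that precedes it. By that analysis, the spectrum of $A$ (the matrix of $\E_{D_p}[f^2]$) on $\mathrm{span}\{\phi_S \mid S \in {[n]\choose\le d}\}$ consists of the eigenvalue $0$ on $V'_{null}$ together with, on each eigenspace $V'_k$ ($0 \le k \le d$), the eigenvalue
\[
\lambda_k \;=\; \sum_{even\ i=0}^{d-k}\frac{(i-1)!!\,(i-1)!!}{i!}\ \pm\ O(n^{-1}),
\]
and the spectrum of $B$ (the matrix of $\Var_{D_p}(f)$) on $\mathrm{span}\{\phi_S \mid 1 \le |S| \le d\}$ is the same except that $V'_{null}$ and $V'_0$ both contribute eigenvalue $0$ (a member of $V'_0$ is constant on $\mathrm{supp}(D_p)$, hence has zero variance). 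Thus every non-zero eigenvalue of $\E_{D_p}[f^2]$ equals some $\lambda_k$ with $k \in \{0,\dots,d\}$, and every non-zero eigenvalue of $\Var_{D_p}(f)$ equals some $\lambda_k$ with $k \in \{1,\dots,d\}$; it remains only to bound $\lambda_k$.

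For the upper bound I would combine $\frac{(i-1)!!(i-1)!!}{i!} \le 1$ for every even $i \ge 0$ (the inequality stated just before the corollaries) with the count that the even integers in $\{0,1,\dots,d-k\}$ number exactly $[\frac{d-k}{2}]+1$, so $\lambda_k \le [\frac{d-k}{2}]+1 + O(n^{-1})$. Since $k \ge 0$ this gives, for $A$, the bound $[\frac{d}{2}]+1$; since $k \ge 1$ it gives, for $B$, the bound $[\frac{d-1}{2}]+1 = [\frac{d+1}{2}]$. For the lower bound I would isolate the $i=0$ summand, which under the conventions $0!=1$ and $(-1)!!=1$ equals exactly $1$ and is always present (as $d-k\ge 0$), while all later summands are non-negative; hence $\lambda_k \ge 1 - O(n^{-1}) \ge 0.5$ for $n$ above an absolute constant. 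If one insists on the inequalities holding literally rather than up to the $O(1/n)$ correction, one observes in addition that every summand past $i=0$ is at most $\frac12$, so the main term lies strictly inside $[1,[\frac{d+1}{2}]]$ and the $O(1/n)$ term is absorbed; only the constant-factor version is needed in the sequel anyway.

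I do not anticipate a real obstacle: the substance is entirely in Theorem~\ref{eigenvalue_V_k}, which may be assumed, and what is left is a one-line estimate on its closed form. The only things to keep straight — and both are already established in Section~\ref{eigenvalues} — are that $V'_{null}, V'_0, V'_1, \dots, V'_d$ account for the whole spectrum (via the dimension count carried out when constructing the $V'_k$) and that $V'_0$ sits in the kernel of $B$. With those in hand, the non-zero eigenvalues of $\Var_{D_p}(f)$ lie in $[0.5, [\frac{d+1}{2}]] \subseteq [0.5, d]$, and those of $\E_{D_p}[f^2]$ lie in $[0.5, [\frac{d}{2}]+1] \subseteq [0.5, d]$.
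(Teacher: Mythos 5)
Your proposal is correct and follows exactly the route the paper takes: the corollary is read off from Theorem~\ref{eigenvalue_V_k} using the bound $\frac{(i-1)!!(i-1)!!}{i!}\le 1$, the count of even indices $i\in\{0,\dots,d-k\}$ with $k\ge 1$ for $B$, the $i=0$ term for the lower bound, and the facts (already established in Section~\ref{eigenvalues}) that $V'_{null}$ and $V'_0$ exhaust the kernel of $B$ while $V'_1,\dots,V'_d$ account for the rest of the spectrum. The paper states this as an immediate consequence without writing out the details; your write-up supplies exactly the intended ones.
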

Because $f+(\sum_i \phi_i)h \equiv f$ over $supp(D_p)$ for any $h$ of degree-at-most $d-1$, we define the projection of $f$ onto $V'_{null}$ to compare $\|f\|_2^2$ and $\E_{D_p}[f^2]$.
\begin{definition}
Fix the global cardinality constraint $\sum_i x_i=(1-2p)n$ and the Fourier transform $\phi_S$, let $h_f$ denote the projection of a degree $d$ multilinear polynomial $f$ onto the null space $span\{(\sum_i \phi_i)\phi_S|S \in {[n] \choose \le d-1}\}$ of $E_{D_p}[f^2]$ and $\Var_{D_p}(f)$, i.e., $f-(\sum_i \phi_i)h_f$ is orthogonal to the eigenspace of an eigenvalue 0 in $\E_{D_p}[f^2]$ and $\Var_{D_p}(f)$.
\end{definition}
From the above two corollaries and the definition of $h_f$,  we bound $\E_{D_p}[f^2]$ by $\|f\|_2^2$ as follows. For $\Var_{D_p}(f)$, we exclude $\hf(\emptyset)$ because $\Var_{D_p}(f)$ is independent with $\hf(\emptyset)$. Recall that $\|f\|_2^2=E_{U_p}[f^2]=\sum_S \hf(S)^2.$
\begin{corollary}\label{2nd_E_L2_norm}
For any degree $d$ multilinear polynomial $f$ and a global cardinality constraint $\sum_i x_i=(1-2p)n$, $\E_{D_p}[f^2] \le d \|f\|_2^2$ and $\E_{D_p}[f^2] \ge 0.5 \|f-(\sum_i \phi_i)h_f\|_2^2$.
\end{corollary}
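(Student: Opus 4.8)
The plan is to read off both inequalities directly from the spectral decomposition of the matrix $A$ (recall $\E_{D_p}[f^2] = f^T A f$ when $f$ is viewed as a vector in $span\{\phi_S|S \in {[n] \choose \le d}\}$) together with the eigenvalue window supplied by Corollary~\ref{biased_eigenvalue_2nd_E}. Since $\E_{D_p}[f^2]\ge 0$ for every $f$, the matrix $A$ is real symmetric and positive semidefinite, and Section~\ref{eigenvalues} identifies its eigenspaces as $V'_{null}$ (eigenvalue $0$) together with $V'_0,V'_1,\dots,V'_d$, with eigenvalue $\lambda_k$ on $V'_k$. The first step is to record that these subspaces form a \emph{complete orthogonal system}: pairwise orthogonality among $V'_0,\dots,V'_d$ follows from Claim~\ref{orthogonality} together with the orthogonality of the Johnson-scheme eigenspaces $V_0,\dots,V_d$; orthogonality of each $V'_k$ to $V'_{null}=span\{(\sum_i\phi_i)\phi_S|S\in{[n] \choose \le d-1}\}$ is exactly what the recurrence \eqref{biased_alpha} for the combining coefficients $\alpha_{k,\cdot}$ was designed to enforce; and the dimension identity $\dim(V'_{null})+\sum_{k=0}^d\dim(V'_k)={[n] \choose \le d}$ was already verified in Section~\ref{eigenvalues}. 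Hence $span\{\phi_S|S\in{[n] \choose \le d}\}=V'_{null}\oplus V'_0\oplus\cdots\oplus V'_d$ as an orthogonal direct sum, and in particular $V'_0\oplus\cdots\oplus V'_d$ is the orthogonal complement of $V'_{null}$.

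Next I would split $f$ along this decomposition. Write $f=(\sum_i\phi_i)h_f+g$, where $(\sum_i\phi_i)h_f$ is the orthogonal projection of $f$ onto $V'_{null}$ (this is the meaning of $h_f$ in the definition; note that only the projection \emph{vector} $(\sum_i\phi_i)h_f$ need be well defined, which it is, and it is what appears in the statement), so that $g:=f-(\sum_i\phi_i)h_f\in V'_0\oplus\cdots\oplus V'_d$. Decompose $g=\sum_{k=0}^d f_k$ with $f_k\in V'_k$. The Pythagorean identity then gives
\[
\|f\|_2^2 \;=\; \|(\sum_i\phi_i)h_f\|_2^2 + \sum_{k=0}^d \|f_k\|_2^2 \;\ge\; \sum_{k=0}^d \|f_k\|_2^2 \;=\; \|g\|_2^2 \;=\; \|f-(\sum_i\phi_i)h_f\|_2^2 .
\]
Moreover, since $Av=0$ for $v:=(\sum_i\phi_i)h_f\in V'_{null}$ and $A$ is symmetric, $f^T A f=(v+g)^T A(v+g)=g^T A g=\sum_{k=0}^d\lambda_k\|f_k\|_2^2$.

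Finally I would invoke Corollary~\ref{biased_eigenvalue_2nd_E}, which gives $\lambda_k\in[0.5,\,[\frac{d}{2}]+1]\subseteq[0.5,d]$ for every $k$. Substituting the upper bound yields $\E_{D_p}[f^2]=\sum_k\lambda_k\|f_k\|_2^2\le d\sum_k\|f_k\|_2^2\le d\|f\|_2^2$, and substituting the lower bound yields $\E_{D_p}[f^2]=\sum_k\lambda_k\|f_k\|_2^2\ge 0.5\sum_k\|f_k\|_2^2=0.5\,\|f-(\sum_i\phi_i)h_f\|_2^2$, which are precisely the two claimed bounds.

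I do not anticipate a genuine obstacle here: the substantive work — locating the eigenspaces, proving their mutual orthogonality, and pinning the non-zero eigenvalues inside the window $[0.5,d]$ — has already been done in Section~\ref{eigenvalues} and packaged into Corollary~\ref{biased_eigenvalue_2nd_E}. The only point that deserves an explicit sentence is the completeness of the orthogonal system $\{V'_{null},V'_0,\dots,V'_d\}$, since that is what licenses both the Pythagorean identity above and the assertion that $g$ lies in the span of the non-zero eigenspaces; once that is stated cleanly the corollary drops out.
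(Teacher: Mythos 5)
Your proposal is correct and is essentially the paper's own argument: the paper derives this corollary directly from Corollary~\ref{biased_eigenvalue_2nd_E} and the definition of $h_f$ via exactly the spectral decomposition you spell out (project onto $V'_{null}$, expand the remainder over $V'_0,\dots,V'_d$, and apply the eigenvalue window $[0.5,d]$ together with the Pythagorean identity). You have merely made explicit the completeness and orthogonality of the eigenspace decomposition, which the paper treats as already established in Section~\ref{eigenvalues}.
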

\begin{corollary}\label{variance_L2_norm}
For any degree $d$ multilinear polynomial $f$ and a global cardinality constraint $\sum_i x_i=(1-2p)n$, $\Var_{D_p}(f) \le d \|f-\hf(\emptyset)\|_2^2$ and $\Var_{D_p}(f) \ge 0.5 \|f-\hf(\emptyset)-(\sum_i \phi_i)h_{f-\hf(\emptyset)}\|_2^2$.
\end{corollary}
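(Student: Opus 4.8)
The plan is to obtain this as the $\Var$-analogue of Corollary~\ref{2nd_E_L2_norm}, running the same spectral argument with $B$ in place of $A$ and Corollary~\ref{biased_eigenvalue_Var} in place of Corollary~\ref{biased_eigenvalue_2nd_E}. First I would discard the constant term: since $\Var_{D_p}(f) = f^{T} B f$ and the $\emptyset$-row and $\emptyset$-column of $B$ vanish, $\Var_{D_p}$ depends only on $\{\hf(S): S \ne \emptyset\}$, so with $g := f - \hf(\emptyset)$ we have $\Var_{D_p}(f) = \Var_{D_p}(g) = g^{T} B g$ and, by Parseval, $\|g\|_2^2 = \sum_{S \ne \emptyset}\hf(S)^2 = \|f - \hf(\emptyset)\|_2^2$.

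Next I would feed $g$ into the orthogonal eigendecomposition of $B$ from Section~\ref{eigenvalues}. Writing $(\sum_i\phi_i)h_g$ for the orthogonal projection of $g$ onto $V'_{null}$ (this is the $h_{f-\hf(\emptyset)}$ of the definition) and setting $g' := g - (\sum_i\phi_i)h_g$, the vector $g'$ is orthogonal to $V'_{null}$, and I claim it lies in $V'_1 \oplus \cdots \oplus V'_d$, on which every eigenvalue $\lambda_k$ of $B$ satisfies $0.5 \le \lambda_k \le d$ by Corollary~\ref{biased_eigenvalue_Var}. Granting this, decompose $g' = \sum_{k=1}^d g'_k$ with $g'_k \in V'_k$; orthogonality gives $\|g'\|_2^2 = \sum_k \|g'_k\|_2^2 \le \|g\|_2^2$ and, since $(\sum_i\phi_i)h_g$ lies in the null space of $B$,
\[
\Var_{D_p}(f) \;=\; g^{T} B\, g \;=\; (g')^{T} B\,g' \;=\; \sum_{k=1}^{d} \lambda_k \|g'_k\|_2^2 .
\]
The two bounds are then immediate: $\lambda_k \le d$ gives $\Var_{D_p}(f) \le d\|g'\|_2^2 \le d\|g\|_2^2 = d\|f-\hf(\emptyset)\|_2^2$, while $\lambda_k \ge 0.5$ gives $\Var_{D_p}(f) \ge 0.5\|g'\|_2^2 = 0.5\|f-\hf(\emptyset)-(\sum_i\phi_i)h_{f-\hf(\emptyset)}\|_2^2$.

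The step I expect to be the real obstacle is the claim that $g'$ carries no component in the null space of $B$ beyond the $V'_{null}$-part already removed. This is where the global cardinality constraint is doing work: the null space of $B$ strictly contains $V'_{null}$ --- it also contains every constant function, equivalently the eigenspace $V'_0$, since by the remark after Theorem~\ref{eigenvalue_V_k} each $v \in V'_0$ is constant on $\mathrm{supp}(D_p)$ and hence equals an element of $V'_{null}$ plus an additive constant. So I would be careful to read $h_{f-\hf(\emptyset)}$ (together with the subtraction of $\hf(\emptyset)$) as jointly specifying the projection of $f$ off the \emph{whole} null space of $B$ --- this is precisely why the statement carries the $-\hf(\emptyset)$ correction rather than just a $-(\sum_i\phi_i)h$ term --- and I would verify, using the explicit descriptions of $V'_{null}$ and $V'_0$ in Section~\ref{eigenvalues}, that after these two subtractions the residual $g'$ is orthogonal to $V'_{null}\oplus V'_0$ and therefore to all of $\ker B$. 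Once that is settled the displayed identity holds and both inequalities follow from Corollary~\ref{biased_eigenvalue_Var} exactly as above; everything else is Parseval bookkeeping with the orthogonality of the eigenspaces.
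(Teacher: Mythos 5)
Your strategy is the paper's own (diagonalize $B$, invoke Corollary~\ref{biased_eigenvalue_Var}), and you have correctly isolated the crux: the null space of $B$ is $V'_{null}\oplus V'_0$, strictly larger than the space $V'_{null}$ onto which $h_{f-\hf(\emptyset)}$ projects. The gap is in the resolution you defer to the last paragraph. It is \emph{not} true that subtracting $\hf(\emptyset)$ and then projecting off $V'_{null}$ leaves a residual orthogonal to $V'_0$: subtracting $\hf(\emptyset)$ only zeroes the $\phi_\emptyset$-coordinate of $g$, whereas $V'_0$ is spanned by $v_0=\sum_T \alpha_{0,|T|}\phi_T$, which has weight on every level, so $\langle g, v_0\rangle=\sum_{T\neq\emptyset}\hg(T)\alpha_{0,|T|}$ need not vanish. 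Concretely, take $f=\sum_{|T|=2}\phi_T$ (for $p=1/2$ this is, up to an affine shift, exactly the complete-graph \maxbisect\ polynomial the paper uses as a cautionary example). Then $\hf(\emptyset)=0$, and the identity $(\sum_i\phi_i)^2=q\sum_i\phi_i+n+2f$ gives $f=(\sum_i\phi_i)\cdot\frac{(\sum_j\phi_j)-q}{2}-\frac{n}{2}$, so $f\equiv -n/2$ on the support of $D_p$ and $\Var_{D_p}(f)=0$. But the component of $f$ orthogonal to $V'_{null}$ is $-\frac{n}{2}$ times the component of $\phi_\emptyset$ orthogonal to $V'_{null}$, which equals $\frac{1}{\|v_0\|_2^2}v_0$ (a vector of norm $\Theta(1)$, since $\langle\phi_\emptyset,v_0\rangle=\alpha_{0,0}=1$ and $\|v_0\|_2^2=\Theta(1)$). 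Hence $\|f-\hf(\emptyset)-(\sum_i\phi_i)h_{f-\hf(\emptyset)}\|_2^2=\Theta(n^2)$: the residual lies entirely \emph{inside} $V'_0$, and the claimed lower bound would read $0\ge\Theta(n^2)$. So the verification you postpone would fail.

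What makes the statement work --- and what the introduction's discussion of $\calS=\mathrm{span}\{1,(\sum_i x_i)\chi_S\}$ intends --- is that the constant subtracted must be the constant term of the orthogonal projection of $f$ onto the \emph{whole} null space $\mathrm{span}\{1\}\oplus V'_{null}=V'_{null}\oplus V'_0$, which is in general not the Fourier coefficient $\hf(\emptyset)$. Every element of that space is uniquely of the form $c+(\sum_i\phi_i)h$; choosing $c$ and $h$ this way, the residual genuinely lies in $V'_1\oplus\cdots\oplus V'_d$ and your eigenvalue computation closes both inequalities (in the example above, the correct $c$ is $-n/2$ and the residual is $0$). The upper bound $\Var_{D_p}(f)\le d\|f-\hf(\emptyset)\|_2^2$ is unaffected, since there you only need $\lambda_k\le d$ on the nonzero eigenspaces and may simply drop the $\|g_0\|_2^2$ term. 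In short: right approach and correct identification of the obstacle, but the fix is to change which constant is subtracted, not to check an orthogonality that does not hold.
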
 

\section{Parameterized algorithm for CSPs above average with the bisection constraint}\label{sec_bisection}
We prove that CSPs above average with the bisection constraint are fixed-parameter tractable. Given an instance $\cal I$ from $d$-ary CSPs and the bisection constraint $\sum_i x_i=0$, we use the standard basis $\{\chi_S|S \in {[n] \choose \le d}\}$ of the Fourier transform in $U$ and abbreviate $f_{\cal I}$ to $f$. Recall that $\|f\|_2^2=E_U[f^2]=\sum_S \hf(S)^2$ and $D$ is the uniform distribution on all assignments in $\{\pm 1\}^n$ complying with the bisection constraint.

For $f$ with a small variance in $D$, we use $h_{f-\hf(\emptyset)}$ to denote the projection of $f-\hf(\emptyset)$ onto the null space $span\{(\sum_i x_i)\chi_S|S \in {[n] \choose \le d-1}\}$. We know $\|f-\hf(\emptyset)-(\sum_i x_i)h_{f-\hf(\emptyset)}\|_2^2 \le 2 \Var_D(f)$ from Corollary~\ref{variance_L2_norm}, i.e., the lower bound of the non-zero eigenvalues in $\Var_D(f)$. Then we show how to round $h_{f-\hf(\emptyset)}$ in Section~\ref{rounding_bisection} to a degree $d-1$ polynomial $h$ with integral coefficients such that $\|f- \hf(\emptyset) - (\sum_i x_i)h\|_2^2=O(\|f - \hf(\emptyset) - (\sum_i x_i)h_{f-\hf(\emptyset)}\|_2^2)$, which indicates that $f - \hf(\emptyset) - (\sum_i x_i)h$ has a small kernel under the bisection constraint.

Otherwise, for $f$ with a large variance in $D$, we show the hypercontractivity in $D$ that $\E_{D}[(f-\E_D[f])^4]=O(\E_D[(f-\E_D[f])^2]^2)$ in Section~\ref{bisection_hyper}. From the fourth moment method, we know there exists $\alpha$ in the support of $D$ satisfying $f(\alpha)\ge \E_D[f]+\Omega(\sqrt{\Var_D[f]^2})$. At last, we prove the main theorem in Section~\ref{proof_bisection}.
\begin{theorem}\label{CSP_bisection}
Given an instance $\cal I$ of a CSP problem of arity $d$ and a parameter $t$, there is an algorithm with running time $O(n^{\time_h d})$ that either finds a kernel on at most $C_d t^2$ variables or certifies that $OPT \ge AVG+t$ under the bisection constraint for a constant $C_d=24 d^{2} \cdot 7^d \cdot 9^d \cdot 2^{2d} \cdot \big( d!(d-1)!\cdots 2!\big )^2$.
\end{theorem}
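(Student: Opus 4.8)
The plan is to run a dichotomy on the variance $\Var_D(f)$ of the arithmetization $f=f_{\calI}$, exactly along the lines sketched in the introduction and prepared by Sections~\ref{eigenvalues}--\ref{bisection_hyper}. First I would form $f=f_{\calI}$, the degree-at-most-$d$ multilinear polynomial of Section~\ref{sec_pre} with $f(\alpha)=\val_{\calI}(\alpha)$, whose $\{\chi_S\}$-Fourier coefficients all lie in $2^{-d}\mathbb{Z}$, and recall $AVG=\E_D[f]$ and $OPT=\max_{\alpha\in\mathrm{supp}(D)}f(\alpha)$. Next I would assemble the matrix $B$ of Section~\ref{subsec_distributions} and evaluate $\Var_D(f)=f^TBf$ from the $n^{O(d)}$ coefficients of $f$. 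Then branch according to whether $\Var_D(f)\ge 4bt^2$, where $b=b(d)$ is the constant in the $2\to4$ hypercontractive inequality for $D$ proved in Section~\ref{bisection_hyper}.

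If $\Var_D(f)\ge 4bt^2$, I would certify $OPT\ge AVG+t$: apply the hypercontractive inequality to $X:=f-\E_D[f]$ to get $\E_D[X^4]\le b\,\Var_D(f)^2$, and since $\E_D[X]=0$, $\E_D[X^2]=\Var_D(f)>0$, Lemma~\ref{4th_moment_method} yields $\alpha\in\mathrm{supp}(D)$ with $f(\alpha)-\E_D[f]\ge\sqrt{\Var_D(f)}/(2\sqrt b)\ge t$.

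If instead $\Var_D(f)<4bt^2$, I would extract a small kernel. Put $g_0=f-\hf(\emptyset)$, solve the $n^{O(d)}$-dimensional linear system for the projection $h_{g_0}$ of $g_0$ onto the null space $\calS=\mathrm{span}\{(\sum_i x_i)\chi_S : S\in{[n]\choose \le d-1}\}$ of $B$, so that Corollary~\ref{variance_L2_norm} gives $\|g_0-(\sum_i x_i)h_{g_0}\|_2^2\le 2\Var_D(f)<8bt^2$. The heart of this case (Section~\ref{rounding_bisection}) is to round $h_{g_0}$ to a degree-at-most-$(d-1)$ polynomial $h$ with all coefficients in $\frac1\Gamma\mathbb{Z}$ for a constant $\Gamma=\Gamma(d)$ assembled from $2^d$ and the factorials $d!,(d-1)!,\dots,2!$, while keeping $\|g_0-(\sum_i x_i)h\|_2^2=O(\|g_0-(\sum_i x_i)h_{g_0}\|_2^2)$. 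Granting the rounding, set $g:=g_0-(\sum_i x_i)h$; then $g\equiv g_0\equiv f-\hf(\emptyset)$ on $\mathrm{supp}(D)$ (so the above-average question for $\calI$ is unchanged after replacing $f$ by $g$), every coefficient of $g$ is an integer multiple of $\frac{1}{2^d\Gamma}$, and $\|g\|_2^2=O(t^2)$, whence $g$ has $O(t^2(2^d\Gamma)^2)$ nonzero Fourier coefficients, each supported on at most $d$ variables, so $g$ depends on a set $K$ of at most $C_dt^2$ variables, which I output as the kernel. Pushing the constants $4b$, the factor $2$ from Corollary~\ref{variance_L2_norm}, the hypercontractivity constant $b$, the denominator $\Gamma$ and the rounding-error factor through this count reproduces the stated $C_d=24d^2\cdot 7^d\cdot 9^d\cdot 2^{2d}\cdot(d!(d-1)!\cdots 2!)^2$.

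For the running time, forming $f$ from the $m=n^{O(1)}$ constraints, building and evaluating $B$, solving for $h_{g_0}$, running the rounding, and reading off $K$ each cost $n^{O(d)}$, and with care the whole procedure fits in $O(n^{\time_h d})$. The main obstacle is the rounding step. Coordinatewise rounding of $h_{g_0}$ fails, because multiplication by $\sum_i x_i$ turns each coefficient of $(\sum_i x_i)h$ into a sum of up to $n$ coefficients of $h$, so a per-coefficient rounding error of size $O(1/\Gamma)$ can be amplified by $\Theta(n)$ in $\|(\sum_i x_i)(h-h_{g_0})\|_2^2$, destroying the $O(t^2)$ bound. I would instead round the homogeneous parts of $h_{g_0}$ from degree $d-1$ downward, using the eigenstructure of $B$ from Section~\ref{eigenvalues}: first argue that the coefficients of $h_{g_0}$ are already close to the grid $\frac1\Gamma\mathbb{Z}$, and then use that the inter-level coefficients $\alpha_{k,k+i}=O(n^{-i/2})$ of Claim~\ref{bound_alpha} are small enough that the error introduced while rounding one level does not compound through the lower levels. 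Showing that the accumulated rounding error stays within a constant factor of $\|g_0-(\sum_i x_i)h_{g_0}\|_2^2$ — rather than the trivial $\mathrm{poly}(n)$ bound — is where essentially all the difficulty lies.
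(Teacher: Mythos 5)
Your proposal is correct and follows essentially the same route as the paper's proof: the same variance dichotomy, the same use of the $2\to4$ hypercontractivity plus Lemma~\ref{4th_moment_method} in the large-variance case, and the same projection-onto-$\calS$, level-by-level rounding to a $\frac{1}{\Gamma}\mathbb{Z}$ grid, and coefficient-counting argument in the small-variance case (the paper's Section~\ref{proof_bisection} likewise just invokes Theorem~\ref{rounding_h_dCSP} for the rounding step you flag as the main difficulty). The only cosmetic difference is that the paper's rounding analysis controls the per-level error via the combinatorial cancellation of Lemma~\ref{eqn_d} and Claim~\ref{rounding_process} rather than via the $\alpha_{k,k+i}$ decay you mention, but since you treat that step as granted from Section~\ref{rounding_bisection}, this does not affect the correctness of your plan.
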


\subsection{Rounding}\label{rounding_bisection}
In this section, we show that for any polynomial $f$ of degree $d$ with integral coefficients, there exists an efficient algorithm to round $h_f$ into an integral-coefficient polynomial $h$ while it keeps $\|f-(\sum_i x_i)h\|_2^2=O(\|f-(\sum_i x_i)h_f\|_2^2)$.
\begin{theorem}\label{rounding_h_dCSP}
For any constants $\gamma$ and $d$, given a degree $d$ multilinear polynomial $f$ with $\|f-(\sum_i x_i)h_f\|_2^2 \le \sqrt{n}$ whose Fourier coefficient $\hf(S)$ is a multiple of $\gamma$ for all $S \in {[n] \choose \le d}$, there exists an efficient algorithm to find a degree-at-most $d-1$ polynomial $h$ such that
\begin{enumerate}
\item The Fourier coefficients of $h$ are multiples of $\frac{\gamma}{d!(d-1)!\cdots 2!}$, which demonstrates that the Fourier coefficients of $f - (\sum_i x_i)h$ are multiples of $\frac{\gamma}{d!(d-1)!\cdots 2!}$.
\item $\|f - (\sum_i x_i)h\|_2^2 \le 7^d \cdot \|f - (\sum_i x_i)h_f\|_2^2$.
\end{enumerate}
\end{theorem}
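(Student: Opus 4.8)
The idea is to peel the homogeneous parts of $h_f$ off one degree at a time, from the top (degree $d-1$) down, rounding the current part and controlling the damage before moving on. Write $h_f=\sum_{k=0}^{d-1}h_f^{(k)}$ for the decomposition of $h_f$ into its weight-$k$ homogeneous pieces, and $g:=f-(\sum_i x_i)h_f=\sum_{k=0}^{d}g^{(k)}$ likewise, so that $\sum_k\|g^{(k)}\|_2^2=\|g\|_2^2\le\sqrt n$. Decompose the multiplication operator $\sum_i x_i$ acting on functions as $R+L$, where $R\chi_S=\sum_{j\notin S}\chi_{S\cup j}$ is the raising operator and $L\chi_S=\sum_{j\in S}\chi_{S\setminus j}$ the lowering operator; then $g^{(k)}=f^{(k)}-Rh_f^{(k-1)}-Lh_f^{(k+1)}$ (with $h_f^{(-1)}=h_f^{(d)}=0$). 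From the Johnson-scheme analysis of Section~\ref{eigenvalues} (the raising operator maps each Johnson eigenspace $V_i$ into the corresponding $V_i$ one level up), and since $d$ is constant and every relevant slice is well below the middle layer, $R$ is injective there and both $R$ and $L$ have all singular values $\Theta(\sqrt n)$; these scalings drive everything below. We may assume $n$ exceeds a constant depending on $\gamma,d$, since otherwise the statement is immediate.

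\textbf{The per-level rounding.} Suppose $h_f^{(d-1)},\dots,h_f^{(k+1)}$ have already been replaced by polynomials $h^{(d-1)},\dots,h^{(k+1)}$ whose coefficients are multiples of $\gamma$ over a bounded product of factorials, with the running error $\|g\|_2^2$ having blown up by at most a bounded factor so far. To round $h_f^{(k)}$, use the weight-$(k+1)$ piece of the (partially rounded) error: for $|T|=k+1$, $g^{(k+1)}(T)=\hat f(T)-(Rh_f^{(k)})(T)-(Lh^{(k+2)})(T)$, and since $\hat f(T)$ is a multiple of $\gamma$ while $Lh^{(k+2)}$ has coefficients that are multiples of $\gamma$ over a bounded factorial, $(Rh_f^{(k)})(T)$ lies within $|g^{(k+1)}(T)|$ of that refined lattice. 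Combining these raising relations with the orthogonality conditions that define $h_f$ (i.e.\ $g\perp(\sum_i x_i)\chi_S$ for all $|S|\le d-1$, which among other things forces certain sums of weight-$(k+1)$ coefficients of $g$ to be constant) pins each individual coefficient $\hat h_f^{(k)}(S)$ near a multiple of $\gamma/((k+1)!\cdot(\text{denominator from higher levels}))$; define $h^{(k)}$ by rounding each $\hat h_f^{(k)}(S)$ to the nearest such multiple. Unwinding the recursion, $h=\sum_k h^{(k)}$ has coefficients that are multiples of $\gamma/(d!(d-1)!\cdots 2!)$, and hence so does $f-(\sum_i x_i)h$, giving the first conclusion.

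\textbf{Bounding the rounding error — the crux.} The second conclusion needs $\|(\sum_i x_i)(h_f-h)\|_2=O(\|g\|_2)$. The naive estimate is hopeless: $h_f^{(k)}-h^{(k)}$ has $\Theta(n^k)$ nonzero coordinates, each up to a full rounding step, so $\|(R+L)(h_f^{(k)}-h^{(k)})\|_2$ could be $\Theta(\sqrt n)\cdot\Theta(n^{k/2})$, far larger than $\|g\|_2$. The fix — the ``different argument'' — is to show that each coefficient of $h_f^{(k)}$ is in fact within $O\bigl(\|g\|_2/\sqrt n\bigr)$ of its lattice point, i.e.\ much closer than a full step. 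This is a pollution argument: if $\hat h_f^{(k)}(S)$ sits at distance $\theta$ from the lattice (after accounting for the already-fixed $h^{(k+2)}$), then for $\Omega(n)$ sets $T\supset S$ of size $k+1$ the coefficient $g^{(k+1)}(T)$ is forced to have absolute value $\ge\theta-o(1)$; summing over all such $S$ and using that the running $\|g\|_2^2$ is still $o(n)$ caps the total squared lattice-distance by $O(\|g\|_2^2/n)$. Hence $\|h_f^{(k)}-h^{(k)}\|_2^2=O(\|g\|_2^2/n)$, and the $\Theta(\sqrt n)$ operator-norm bounds for $R,L$ give $\|(R+L)(h_f^{(k)}-h^{(k)})\|_2^2=O(\|g\|_2^2)$; the triangle inequality then bounds the new weight-$(k\pm1)$ pieces of the error, and one level of the recursion multiplies $\|g\|_2^2$ by a constant. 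Iterating over the $d$ levels yields the factor $7^d$.

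\textbf{Where the difficulty lies.} I expect the pollution estimate to be the main obstacle: one must make ``a coordinate far from the lattice forces $\Omega(n)$ entries of $g^{(k+1)}$ to be bounded away from $0$'' precise at arbitrary weight $k$, in the presence of the already-rounded higher parts and of the lowering term $Lh^{(k+2)}$, and one must check that the residue/parity structure is globally consistent across coordinates so that ``the nearest lattice point'' is unambiguous — which needs the lattice-distances to be $o(\gamma)$, and this is exactly where the hypothesis $\|f-(\sum_i x_i)h_f\|_2^2\le\sqrt n$ enters (through $\|g\|_2/\sqrt n=o(1)$). The remaining ingredients — the determinant/Cramer-type bookkeeping that produces the factorial denominators, and the accounting of the constant per-level blow-up summing to $7^d$ — are routine given the Johnson-scheme machinery of Section~\ref{eigenvalues}.
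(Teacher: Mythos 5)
Your overall architecture matches the paper's: round $h_f$ level by level from weight $d-1$ down to $0$, use the integrality of $\hf(T)$ together with the smallness of the weight-$(d{+}1)$-type residuals to argue each coefficient of $h_f$ sits near a refined lattice $\frac{\gamma}{d!\cdots 2!}\mathbb{Z}$, and pay a constant factor per level to reach $7^d$. However, the crux of your argument — the "pollution" estimate — has a genuine gap, in two places. First, the per-coefficient claim "if $\hh_f(S)$ is at distance $\theta$ from the lattice then $\Omega(n)$ of the coefficients $g^{(k+1)}(T)$, $T\supset S$, have absolute value $\ge \theta-o(1)$" is false as stated: $g^{(d)}(T)=\hf(T)-\sum_{S'\in{T\choose d-1}}\hh_f(S')$ involves $d$ coefficients of $h_f$, and offsets of opposite sign cancel (e.g.\ $\hh_f(S)$ at $+\theta$ and $\hh_f(S')$ at $-\theta$ give a sum exactly on the lattice), so a single far coefficient forces nothing. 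Second, even the aggregate version $\sum_S \eps(S)^2 = O(\|g\|_2^2/n)$ that you then invoke is left unproved; making it rigorous requires (i) first showing every individual $\hh_f(S)$ is within $o(\gamma/d!)$ of the refined lattice so that "nearest lattice point" and the mod-$\gamma$ reduction are globally consistent — this is the paper's Lemma~\ref{eqn_d}/Corollary~\ref{integral_d}, whose mechanism is a specific telescoping linear combination of the relations $\sum_{S\in{T\choose d-1}}\hh_f(S)\equiv 0 \pmod \gamma$ over $T\subset S_1\cup S_2$ yielding $(d-1)!\,\hh_f(S_1)+(-1)^d(d-1)!\,\hh_f(S_2)\equiv 0$, and which you do not supply (your appeal to "the orthogonality conditions defining $h_f$" is not where this comes from) — and then (ii) a spectral lower bound on $R^TR$ restricted to the weight-$(d-1)$ slice.

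The paper avoids your step (ii) entirely, and this is worth internalizing: it never bounds $\|\eps\|_2^2$ and never uses singular values of $R$ or $L$. Instead, Claim~\ref{rounding_process} gives the pointwise-in-$T$ inequality $|\sum_{S\in{T\choose d-1}}\eps(S)|\le|\hf(T)-\sum_{S\in{T\choose d-1}}\hh_f(S)|$, because the rounded sum and $\hf(T)$ both lie on the $\frac{\gamma}{d!}$-lattice while the total per-$T$ rounding error is below $0.1\,\gamma/d!$; summing squares bounds $\|R\eps\|_2^2$ directly by $\|g^{(d)}\|_2^2$, and a short computation (in Lemma~\ref{rounding_onelevel}) bounds $\|L\eps\|_2^2$ by the same quantity. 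So your route is not unsalvageable, but as written the central estimate is both incorrectly justified and harder than necessary; the missing combinatorial lemma pinning each $\hh_f(S)$ to the $\gamma/d!$-lattice is the other essential ingredient you would still need under either route.
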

The high level idea of the algorithm is to round $\hh_f(S)$ to $\hh(S)$ from the coefficients of weight $d-1$ to the coefficient of weight 0. At the same time, we guarantee that for any $k<d$, the rounding on the coefficients of weight $k$ will keep $\|f - (\sum_i x_i)h\|_2^2=O(\|f - (\sum_i x_i)h_f\|_2^2)$ in the same order.

Because $h_f$ contains non-zero coefficients up to weight $d-1$, we first prove that we could round $\{\hh_f(S)|S \in {[n] \choose d-1}\}$ to multiples of $\gamma/d!$. Observe that for $T \in {[n] \choose d}$, the coefficient of $\chi_{T}$ in $f-(\sum_i x_i)h_f$ is $\hf(T) - \sum_{j \in T}\hh_f(T \setminus j)$. Because $\sum_{T \in {[n] \choose d}} (\hf(T) - \sum_{j \in T}\hh_f(T \setminus j))^2=o(n)$, $\hf(T) - \sum_{j \in T}\hh_f(T \setminus j)$ is close to 0 for most $T$ in ${[n] \choose d}$. Hence $\sum_{j \in T}\hh_f(T \setminus j) \mod \gamma$ is close to 0 for most $T$. Our start point is to prove that for any $S \in {[n] \choose d-1}$, $\hh(S)$ is close to a multiple of $\gamma/d!$ from the above discussion.
\begin{lemma}\label{eqn_d}
If $\hf(T)$ is a multiple of $\gamma$ and $\hf(T) - \sum_{S \in {T \choose d-1}}\hh_f(S)=0$ for all $T \in {[n] \choose d}$, then $\hh_f(S)$ is a multiple of $\gamma/d!$ for all $S \in {[n] \choose d-1}$.
\end{lemma}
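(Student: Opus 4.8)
The plan is to prove the lemma by producing, for each fixed $(d-1)$-set $S_0$, an explicit integer-valued function $v$ on $\binom{[n]}{d}$ that, when used to form an integer combination of the hypotheses $\hf(T)=\sum_{S\in\binom{T}{d-1}}\hh_f(S)$, isolates exactly $d!\cdot\hh_f(S_0)$. Concretely, suppose $v$ is supported on $d$-subsets of $[n]$ and satisfies
\[
\sum_{T\,\supseteq\,S} v(T)\;=\;d!\cdot \mathbf{1}[\,S=S_0\,]\qquad\text{for every }S\in\tbinom{[n]}{d-1}.
\]
Then, swapping the order of summation and using the hypothesis,
\[
\sum_{T\in\binom{[n]}{d}} v(T)\,\hf(T)\;=\;\sum_{S\in\binom{[n]}{d-1}}\hh_f(S)\sum_{T\,\supseteq\,S}v(T)\;=\;d!\cdot\hh_f(S_0),
\]
so, since every $\hf(T)$ is a multiple of $\gamma$ and every $v(T)$ is an integer, the left-hand side is a multiple of $\gamma$; hence $\hh_f(S_0)=\tfrac1{d!}\sum_T v(T)\hf(T)$ is a multiple of $\gamma/d!$. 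As $S_0$ ranges over $\binom{[n]}{d-1}$, this is exactly the claim.

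It then remains to construct such a $v$, and I would look for one supported on $\binom{S_0\cup Z}{d}$, where $Z$ is an arbitrary $d$-element set disjoint from $S_0$ (which exists since $n\ge 2d-1$, and the instance is of bounded size otherwise), whose value $v(T)$ depends only on $j:=|T\cap S_0|\in\{0,\dots,d-1\}$, say $v(T)=c_j$. For a $(d-1)$-set $S\not\subseteq S_0\cup Z$ the required sum vanishes automatically; for $S\subseteq S_0\cup Z$ with $|S\cap S_0|=l$, extending $S$ to a $d$-subset of $S_0\cup Z$ means adjoining either one element of $S_0\setminus S$ ($d-1-l$ choices, giving $|T\cap S_0|=l+1$) or one element of $Z\setminus S$ ($l+1$ choices, giving $|T\cap S_0|=l$), whence $\sum_{T\supseteq S}v(T)=(d-1-l)\,c_{l+1}+(l+1)\,c_l$. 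Imposing that this equal $0$ for $l\le d-2$ and equal $d!$ for $l=d-1$ (the only such $S$ being $S_0$ itself) forces $c_{d-1}=(d-1)!$ together with the recursion $c_l=-\tfrac{d-1-l}{l+1}\,c_{l+1}$; unwinding this recursion yields the closed form $c_l=(-1)^{d-1-l}(d-1-l)!\,l!$, which is an integer for every $l$. Hence $v$ is a well-defined integer vector with the required property, and the argument of the first paragraph finishes the proof.

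The crux is the choice of $v$: one must notice that a \emph{local} certificate supported on just $2d-1$ coordinates, and symmetric under permuting $S_0$ internally and $Z$ internally, already suffices, so that no inversion of the full $\binom{[n]}{d}\times\binom{[n]}{d-1}$ inclusion matrix is needed; and one must observe that $d!$ is precisely the normalization for which the telescoping recursion $c_l=-\tfrac{d-1-l}{l+1}c_{l+1}$, started from $c_{d-1}=(d-1)!$, stays integral (a smaller normalization would leave fractions). Everything after that — verifying the value of $\sum_{T\supseteq S}v(T)$ by the case analysis on $|S\cap S_0|$, and the final double-counting identity — is routine.
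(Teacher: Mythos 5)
Your proof is correct and takes essentially the same approach as the paper's: both produce an explicit integer linear combination of the equations $\hf(T)=\sum_{S\in\binom{T}{d-1}}\hh_f(S)$ supported on the $d$-subsets of a $(2d-1)$-element ground set, with coefficients depending only on $|T\cap S_0|$ and determined by the same telescoping recursion, so that everything cancels except $d!\,\hh_f(S_0)$. The only difference is packaging: the paper factors the certificate through an intermediate pairwise relation $(d-1)!\,\hh_f(S_1)+(-1)^d(d-1)!\,\hh_f(S_2)\equiv 0 \pmod{\gamma}$ for disjoint $S_1,S_2$ and then sums over $S_2\in\binom{T}{d-1}$, whereas you build the single combined certificate $v$ directly.
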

\begin{proof}
From the two conditions, we know $$\sum_{S \in {T \choose d-1}}\hh_f(S) \equiv 0 \mod \gamma$$ for any $T \in {[n] \choose d}.$ We prove that $$(d-1)! \cdot \hh_f(S_1) + (-1)^d (d-1)! \cdot \hh_f(S_2) \equiv 0 \mod \gamma$$ for any $S_1 \in {[n] \choose d-1}$ and $S_2 \in {[n]\setminus S \choose d-1}$. Thus $$0 \equiv (d-1)! \cdot \sum_{S_2 \in {T \choose d-1}}\hh_f(S_2) \equiv d! \cdot \hh_f(S_1)  \mod \gamma,$$ for any $T$ with $S_1 \cap T =\emptyset$, which indicates $\hh_f(S_1)$ is a multiple of $\gamma/d!$.

Without loss of generality, we assume $S_1=\{1,2,\cdots,d-1\}$ and $S_2=\{k_1,k_2,\cdots,k_{d-1}\}$. For a subset $T \in {S _1\cup S_2 \choose d}$, because $\sum_{S \in {T \choose d-1}}\hh_f(S)=\sum_{j \in T}\hh_f(T \setminus j)$, we use $(T)$ to denote the equation
\begin{equation}
\sum_{j \in T}\hh_f(T \setminus j) \equiv 0 \mod \gamma \tag{T}
\end{equation}
Let $\beta_{d-1,1 }=(d-2)!$ and $\beta_{d-i-1, i+1}=\frac{-i}{d-i-1} \cdot \beta_{d-i,i}$ for any $i \in \{1,\cdots,d-2\}$ (we choose $\beta_{d-1,1}$ to guarantee that all coefficients are integers). Consider the following linear combination of equations over $T \in {S_1 \cup S_2 \choose d}$ with coefficients $\beta_{d-i,i}$:
\begin{equation}\label{comb_T}
\sum_{i=1}^{d-1}\beta_{d-i,i} \sum_{T_1 \in {S_1 \choose d-i}, T_2 \in {S_2 \choose i}} (T_1 \cup T_2) \Rightarrow \sum_{i=1}^{d-1}\beta_{d-i,i} \sum_{T_1 \in {S_1 \choose d-i}, T_2 \in {S_2 \choose i}} \big( \sum_{j \in T_1 \cup T_2}\hh_f(T_1 \cup T_2 \setminus j) \big)\equiv 0 \mod \gamma .
\end{equation}
Observe that for any $i\in \{1,\cdots,d-2\}$, $S \in {S_1 \choose d-i-1}$, and $S' \in {S_2 \choose i},$   the coefficient of $\hh_f(S \cup S')$ is $i \cdot \beta_{d-i,i} + (d-i-1) \cdot \beta_{d-i-1,i+1}=0$ in equation \eqref{comb_T}, where $i$ comes from the number of choices of $T_1$ is $d-1-|S|=i$ and $d-i-1$ comes from the number of choices of $T_2$ is $(d-1)-|S'|$.

Hence equation \eqref{comb_T} indicates that $(d-1)\beta_{d-1,1}\hh_f(S_1) + (d-1)\beta_{1,d-1}\hh_f(S_2) \equiv 0 \mod \gamma$. Setting into $\beta_{d-1,1}=(d-2)!$ and $\beta_{1,d-1}=(-1)^{d-2} (d-2)!$, we obtain $$(d-1)! \cdot \hh_f(S_1) + (-1)^d (d-1)! \cdot \hh_f(S_2) \equiv 0 \mod \gamma.$$
\end{proof}
\begin{corollary}\label{integral_d}
If $\sum_{T \in {[n] \choose d}} (\hf(T) - \sum_{S \in {T \choose d-1}}\hh_f(S))^2 = k = o(n^{0.6})$, then for all $S \in {[n] \choose d-1,}, \hh_f(S)$ is $\frac{0.1}{d} \cdot \gamma/d!$ close to a multiple of $\gamma/d!$.
\end{corollary}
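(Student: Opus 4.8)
The statement is the robust analogue of Lemma~\ref{eqn_d}: there the equations ``$(T)$'' held exactly mod $\gamma$, and here they hold only up to the additive error $r_T := \hf(T) - \sum_{S \in {T \choose d-1}}\hh_f(S)$, which is precisely the weight-$d$ Fourier coefficient of $f - (\sum_i x_i)h_f$ at $T$. By hypothesis $\sum_{T \in {[n] \choose d}} r_T^2 = k = o(n^{0.6}) = o(n)$, and since $\hf(T)$ is a multiple of $\gamma$ we have $\sum_{j \in T}\hh_f(T\setminus j) \equiv -r_T \pmod{\gamma}$ for every $T$. The plan is: for each $S_1 \in {[n] \choose d-1}$, (i) use an averaging argument to find a \emph{single} $d$-set $T$ disjoint from $S_1$ on which all the residuals that the Lemma~\ref{eqn_d} derivation touches are uniformly tiny, then (ii) rerun that derivation verbatim, carrying the (bounded) error terms through the integer linear combination \eqref{comb_T}, to conclude that $d!\cdot\hh_f(S_1)$ lies within a small constant multiple of the residual threshold of a multiple of $\gamma$.

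For step (i), fix a constant threshold $\epsilon = \epsilon(d,\gamma)>0$ (pinned down in step (ii)) and call $T' \in {[n] \choose d}$ \emph{bad} if $|r_{T'}| > \epsilon$; since $\sum_{T'} r_{T'}^2 = k$, there are fewer than $k/\epsilon^2$ bad sets. Fix $S_1$. The only configurations the argument uses are $d$-subsets of $S_1 \cup T$ as $T$ ranges over $d$-sets disjoint from $S_1$, and $|S_1 \cup T| = 2d-1$, so there are only $\binom{2d-1}{d}=O(1)$ of them for each $T$. Call such a $T$ \emph{spoiled} if some $d$-subset $T'$ of $S_1 \cup T$ is bad; a fixed bad $T'$ spoils only those $T$ with $T' \setminus S_1 \subseteq T$, and since $|T' \setminus S_1| \ge 1$ there are at most $\binom{n}{d-1} = O(n^{d-1})$ of them. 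Hence at most $(k/\epsilon^2)\cdot O(n^{d-1})$ of the $\binom{n-d+1}{d} = \Theta(n^d)$ candidate sets $T$ are spoiled; because $k = o(n)$ and $\epsilon$ is a fixed constant, an unspoiled $T = T(S_1)$ exists for all large $n$, and then $|r_{T'}| \le \epsilon$ for every $d$-subset $T' \subseteq S_1 \cup T$.

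For step (ii), run the proof of Lemma~\ref{eqn_d} with this $T$. For each $S_2 \in {T \choose d-1}$, the combination \eqref{comb_T} over sets $T_1 \cup T_2 \subseteq S_1 \cup S_2 \subseteq S_1 \cup T$ with the integer coefficients $\beta_{d-i,i}$ now gives $(d-1)!\,\hh_f(S_1) + (-1)^d (d-1)!\,\hh_f(S_2) \equiv -E_{S_2} \pmod{\gamma}$, where $E_{S_2} = \sum_{i}\beta_{d-i,i}\sum_{T_1,T_2} r_{T_1\cup T_2}$ is an integer combination of residuals with coefficients bounded by a constant $C_1$ depending only on $d$; since every residual involved is at most $\epsilon$ in absolute value, $|E_{S_2}| \le C_1\epsilon$. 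Summing over the $d$ sets $S_2 \in {T \choose d-1}$ and using $\sum_{S_2 \in {T \choose d-1}}\hh_f(S_2) = \sum_{j\in T}\hh_f(T\setminus j) = \hf(T) - r_T \equiv -r_T \pmod{\gamma}$ with $|r_T|\le\epsilon$, one obtains $d!\cdot\hh_f(S_1) \equiv (\text{an integer} \pm C_2\epsilon) \pmod{\gamma}$ with $C_2 := (d-1)! + d\,C_1$. Choosing $\epsilon := \tfrac{0.1\,\gamma}{d\,C_2}$ makes the error at most $\tfrac{0.1}{d}\gamma$, i.e. $\hh_f(S_1)$ is within $\tfrac{0.1}{d}\cdot\tfrac{\gamma}{d!}$ of a multiple of $\gamma/d!$. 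The same $\epsilon$ serves every $S_1$ (each gets its own unspoiled $T$), which gives the corollary.

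The work is essentially bookkeeping on top of Lemma~\ref{eqn_d}; the only genuine step is the counting argument in (i), and the point to be careful about is that we extract just \emph{one} clean $d$-set $T$ per $S_1$ --- which is enough precisely because Lemma~\ref{eqn_d}'s derivation of $d!\,\hh_f(S_1)\equiv 0$ only ever refers to $d$-subsets of $S_1 \cup T$ --- and that the existence of this clean $T$ needs $k = o(n)$, exactly what the hypothesis $k = o(n^{0.6})$ supplies (with room to spare, which is why the looser bound $n^{0.6}$ is stated).
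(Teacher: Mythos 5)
Your proposal is correct and follows essentially the same route as the paper's proof of Corollary~\ref{integral_d}: threshold the residuals $r_T$, count that only $o(n^d)$ candidate sets $T$ are spoiled for each fixed $S_1$ so a clean $T$ disjoint from $S_1$ exists, and then rerun the linear-combination argument of Lemma~\ref{eqn_d} carrying the bounded error terms through. The only (immaterial) difference is that the paper uses the vanishing threshold $n^{-0.1}$ (yielding error $\frac{(2d)!(d!)^2}{n^{0.1}}$, which is where the exponent $0.6$ in the hypothesis is used), whereas you use a fixed constant threshold $\epsilon(d,\gamma)$ and correctly observe that $k=o(n)$ already suffices.
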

\begin{proof}
From the condition, we know that except for $n^{.8}$ choices of $T \in {[n] \choose d}$, $\sum_{S \in {T \choose d-1}}\hh_f(S)$ is $n^{-.1}$ close to a multiple of $\gamma$ because of $n^{.8} \cdot (n^{-.1})^2>k$. Observe that the above proof depends on the Fourier coefficients in at most $2d+1$ variables of $S_1 \cup T$. Because $n^{0.8}=o(n)$, for any subset $S_1 \in {[n] \choose d-1}$, there is a subset $T \in {[n]\setminus S_1 \choose d}$ such that for any $T' \in {S_1 \cup T \choose d}$, $\sum_{S \in {T' \choose d-1}}\hh_f(S)$ is $n^{-.1}$ close to a multiple of $\gamma$.

Following the proof in Lemma \ref{eqn_d}, we obtain that $\hh_f(S)$ is $\frac{(2d)!(d!)^2}{n^{.1}} < \frac{0.1}{d} \cdot \gamma/d!$ close to a multiple of $\gamma/d!$ for any $S \in {[n] \choose d-1}$.
\end{proof}
We consider a natural method to round $h_f$, which is to round $\hh_f(S)$ to the closet multiple of $\gamma/d!$ for every $S \in {[n] \choose d-1}$.
\begin{claim}\label{rounding_process}
Let $h_{d-1}$ be the rounding polynomial of $h_f$ such that $\hh_{d-1}(S)=\hh_f(S)$ for any $|S|\neq d-1$ and $\hh_{d-1}(S)$ is the closest multiple of $\gamma/d!$ to $\hh_f(S)$ for any $S \in {[n] \choose d-1}$. Let $\eps(S)=\hh_{d-1}(S)-\hh_f(S)$.

If $|\eps(S)|<.1/d \cdot \gamma/d!$ and $\alpha(T)$ is a multiple of $\gamma$ for any $T$, then $$\sum_{T \in {[n] \choose d}} \big(\sum_{S \in {T \choose d-1}} \eps(S)\big)^2 \le \sum_{T \in {[n] \choose d}} \big(\alpha(T) - \sum_{S \in {T \choose d-1}} \hh_f(S)\big)^2.$$
\end{claim}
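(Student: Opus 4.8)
For each $T\in\binom{[n]}{d}$, the rounding error $\sum_{S\in\binom{T}{d-1}}\eps(S)$ is dominated, after squaring and summing over all $T$, by the corresponding quantity $\alpha(T)-\sum_{S\in\binom{T}{d-1}}\hh_f(S)$, under the hypotheses $|\eps(S)|<\tfrac{0.1}{d}\cdot\gamma/d!$ and $\alpha(T)\in\gamma\mathbb Z$.

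The plan is to establish the inequality \emph{termwise in $T$}: for every $T\in{[n] \choose d}$ I will show
\[
\Big(\sum_{S\in{T \choose d-1}}\eps(S)\Big)^2\le\Big(\alpha(T)-\sum_{S\in{T \choose d-1}}\hh_f(S)\Big)^2,
\]
and then sum over all $T\in{[n] \choose d}$. Write $x_T=\sum_{S\in{T \choose d-1}}\hh_f(S)$ and $e_T=\sum_{S\in{T \choose d-1}}\eps(S)$, so that $\sum_{S\in{T \choose d-1}}\hh_{d-1}(S)=x_T+e_T$ and the termwise claim reads $|e_T|\le|\alpha(T)-x_T|$.

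The first step is to observe that $x_T+e_T$ is an integer multiple of $\tfrac{\gamma}{d!}$: by construction each $\hh_{d-1}(S)$ is a multiple of $\tfrac{\gamma}{d!}$, and $x_T+e_T$ is precisely the sum of the $d$ numbers $\hh_{d-1}(S)$ over $S\in{T \choose d-1}$. The second step bounds the displacement: $|e_T|\le\sum_{S\in{T \choose d-1}}|\eps(S)|<d\cdot\tfrac{0.1}{d}\cdot\tfrac{\gamma}{d!}=\tfrac{0.1\gamma}{d!}<\tfrac12\cdot\tfrac{\gamma}{d!}$, using the hypothesis $|\eps(S)|<\tfrac{0.1}{d}\cdot\tfrac{\gamma}{d!}$. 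Combining the two steps, $x_T+e_T$ is a multiple of $\tfrac{\gamma}{d!}$ lying strictly within $\tfrac12\cdot\tfrac{\gamma}{d!}$ of $x_T$; since any two distinct multiples of $\tfrac{\gamma}{d!}$ differ by at least $\tfrac{\gamma}{d!}$, it follows that $x_T+e_T$ is \emph{the} closest multiple of $\tfrac{\gamma}{d!}$ to $x_T$, and hence every integer multiple $m$ of $\tfrac{\gamma}{d!}$ satisfies $|m-x_T|\ge|x_T+e_T-x_T|=|e_T|$.

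The third step finishes the argument: since $\alpha(T)$ is a multiple of $\gamma$, it is in particular a multiple of $\tfrac{\gamma}{d!}$, so the previous inequality applies with $m=\alpha(T)$, giving $|\alpha(T)-x_T|\ge|e_T|$. Squaring and summing over $T\in{[n] \choose d}$ yields the claim. I do not anticipate a genuine obstacle here; the core is the elementary ``nearest lattice point'' fact. The one point needing care — and the reason the hypothesis $|\eps(S)|<\tfrac{0.1}{d}\cdot\tfrac{\gamma}{d!}$ is imposed rather than the trivial $|\eps(S)|\le\tfrac12\cdot\tfrac{\gamma}{d!}$ — is that the \emph{aggregated} error $e_T=\sum_{S\in{T \choose d-1}}\eps(S)$, summed over the $d$ subsets in ${T \choose d-1}$, must itself remain below half the lattice spacing $\tfrac{\gamma}{d!}$, so that $x_T+e_T$ is still the unique nearest multiple and the coordinatewise rounding of $h_f$ to $h_{d-1}$ cannot do worse on the weight-$d$ Fourier mass than any integral target $\alpha$.
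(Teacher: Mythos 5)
Your proof is correct and follows essentially the same route as the paper's (much terser) argument: the sum $\sum_{S\in{T\choose d-1}}\hh_{d-1}(S)=x_T+e_T$ is a multiple of $\gamma/d!$ within $0.1\,\gamma/d!$ of $x_T$, so no other multiple of $\gamma/d!$ --- in particular not $\alpha(T)$ --- can be closer to $x_T$, giving the termwise bound $|e_T|\le|\alpha(T)-x_T|$ that the paper asserts in one line. Your write-up actually supplies the ``nearest lattice point'' justification that the paper leaves implicit, and correctly states the bound non-strictly (the paper's strict inequality would fail when $\alpha(T)=x_T+e_T$, though this does not affect the conclusion).
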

\begin{proof}
For each $T \in {[n] \choose d}$, Because $\sum_{S \in {T \choose d-1}}|\eps(S)|<0.1 \cdot \gamma/d!$, then $|\sum_{S \in {T \choose d-1}} \eps(S)|<|\alpha(T) - \sum_{S \in {T \choose d-1}} \hh_f(S)|$. Hence we know $\sum_{T \in {[n] \choose d}} \big(\sum_{S \in {T \choose d-1}} \eps(S)\big)^2 \le \sum_{T \in {[n] \choose d}} \big(\alpha(T) - \sum_{S \in {T \choose d-1}} \hh_f(S)\big)^2.$
\end{proof}
From now on, we use $h_{d-1}$ to denote the degree $d-1$ polynomial of $h_f$ after the above rounding process on the Fourier coefficients of weight $d-1$. Now we bound the summation of the square of the Fourier coefficients in $f-(\sum_i x_i)h_{d-1}$, i.e., $\|f-(\sum_i x_i)h_{d-1}\|^2_2$. Observe that rounding $\hh_f(S)$ only affect the terms of $T\in {[n] \choose d}$ containing $S$ and $T' \in {[n] \choose d-2}$ inside $S$, because $(\sum_i x_i)\hh_f(S)\chi_S=\sum_{i \in S}\hh_f(S) \chi_{S \setminus i}+\sum_{i \notin S}\hh_f(S) \chi_{S \cup i}$.
\begin{lemma}\label{rounding_onelevel}
$\|f - (\sum_i x_i)h_{d-1} \|_2^2 \le 7\|f - (\sum_i x_i)h_f\|_2^2$.
\end{lemma}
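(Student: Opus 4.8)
The plan is to track exactly how the rounding of the weight-$(d-1)$ Fourier coefficients propagates to the coefficients of $f - (\sum_i x_i)h$, and to show that each affected coefficient grows by at most a bounded multiplicative factor in the $\ell_2^2$ sense. Write $h_{d-1} = h_f + \eps$, where $\eps$ is supported on $\binom{[n]}{d-1}$ with $|\eps(S)| < \frac{0.1}{d}\cdot\gamma/d!$ for every $S$, as guaranteed by Corollary~\ref{integral_d} (whose hypothesis $\sum_{T}(\hf(T) - \sum_{S\in\binom{T}{d-1}}\hh_f(S))^2 = o(n^{0.6})$ is implied by the standing assumption $\|f - (\sum_i x_i)h_f\|_2^2 \le \sqrt n$). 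Since $(\sum_i x_i)\chi_S = \sum_{i\in S}\chi_{S\setminus i} + \sum_{i\notin S}\chi_{S\cup i}$, perturbing $\hh_f(S)$ by $\eps(S)$ for $S\in\binom{[n]}{d-1}$ changes only the weight-$d$ coefficients (indexed by $T\in\binom{[n]}{d}$, via $\sum_{S\in\binom{T}{d-1}}\eps(S)$) and the weight-$(d-2)$ coefficients (indexed by $T'\in\binom{[n]}{d-2}$, via $\sum_{i\notin T'}\eps(T'\cup i)$) of $f - (\sum_i x_i)h$. All other coefficients are untouched.

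Next I would bound the two contributions separately. For the weight-$d$ part, Claim~\ref{rounding_process} applies directly: taking $\alpha(T)$ to be the integer multiple of $\gamma$ closest to $\sum_{S\in\binom{T}{d-1}}\hh_f(S)$ (equivalently, $\hf(T)$ suffices if one wants), we get $\sum_{T\in\binom{[n]}{d}}\big(\sum_{S\in\binom{T}{d-1}}\eps(S)\big)^2 \le \sum_{T\in\binom{[n]}{d}}\big(\alpha(T) - \sum_{S\in\binom{T}{d-1}}\hh_f(S)\big)^2 \le \sum_{T\in\binom{[n]}{d}}\big(\hf(T) - \sum_{S\in\binom{T}{d-1}}\hh_f(S)\big)^2$, and the right-hand side is at most $\|f - (\sum_i x_i)h_f\|_2^2$ since it is exactly the weight-$d$ part of that squared norm. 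Combining with the triangle inequality in $\ell_2$ on the weight-$d$ layer, the new weight-$d$ $\ell_2^2$-mass is at most $\big(\sqrt{\|f-(\sum_i x_i)h_f\|_2^2} + \sqrt{\|f-(\sum_i x_i)h_f\|_2^2}\big)^2 = 4\,\|f-(\sum_i x_i)h_f\|_2^2$; more carefully, since these are the \emph{same} weight-$d$ coordinates, the old mass plus the perturbation mass bound gives at most $2\big(\|f-(\sum_i x_i)h_f\|_2^2 + \|f-(\sum_i x_i)h_f\|_2^2\big) = 4\,\|\cdot\|_2^2$. The main obstacle is the weight-$(d-2)$ layer: here I must bound $\sum_{T'\in\binom{[n]}{d-2}}\big(\sum_{i\notin T'}\eps(T'\cup i)\big)^2$, and the naive bound $\big(\sum_{i\notin T'}|\eps(T'\cup i)|\big)^2$ can be as large as $n^2$ times the pointwise bound on $\eps$, which is far too lossy. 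The key point to exploit is that $h_f$ lies in the null space $\mathrm{span}\{(\sum_i x_i)\chi_S\}$, so $h_f$ itself already satisfies the eigenspace structure of Section~\ref{association_schemes}; concretely, $\sum_{i\notin T'}\hh_f(T'\cup i)$ equals the weight-$(d-2)$ coefficient of $(\sum_i x_i)h_f$, which differs from $\hf(T')$ by a coordinate of $f-(\sum_i x_i)h_f$. One then rewrites $\sum_{i\notin T'}\eps(T'\cup i)$ using the rounding rule "$\hh_{d-1}(S) = $ nearest multiple of $\gamma/d!$ to $\hh_f(S)$" together with the congruence $\sum_{S\in\binom{T}{d-1}}\hh_f(S)\equiv 0 \pmod\gamma$ from Lemma~\ref{eqn_d}, to argue that $\sum_{i\notin T'}\eps(T'\cup i)$ is again controlled — in $\ell_2^2$ over all $T'$ — by the weight-$d$ (and weight-$(d-2)$) mass of $f-(\sum_i x_i)h_f$, rather than by $n$ times a pointwise bound.

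Finally I would assemble the pieces: the weight-$d$ layer contributes at most $4\,\|f-(\sum_i x_i)h_f\|_2^2$ of new mass beyond what was there, the weight-$(d-2)$ layer contributes at most some constant (independent of $n$ and of $d$ only through universal arithmetic) times $\|f-(\sum_i x_i)h_f\|_2^2$, and every other layer is unchanged. Adding the unchanged mass $\|f-(\sum_i x_i)h_f\|_2^2$ and folding the constants together must give a total of at most $7\,\|f-(\sum_i x_i)h_f\|_2^2$, which is the claimed bound; the constant $7$ should come out of a clean accounting of (unchanged mass) $+$ (perturbation of weight-$d$ layer) $+$ (perturbation of weight-$(d-2)$ layer), each estimated by the argument above. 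I expect the bookkeeping in the weight-$(d-2)$ estimate to be where all the real work lies, since it is the only place where a potential factor of $n$ must be beaten down using the combinatorial/eigenspace structure rather than a trivial bound.
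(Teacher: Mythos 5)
Your decomposition is the right one — rounding the weight-$(d-1)$ coefficients of $h$ touches only the weight-$d$ and weight-$(d-2)$ layers of $f-(\sum_i x_i)h$, the weight-$d$ layer is handled by Claim~\ref{rounding_process} plus AM--GM for a factor of $4$, and the final accounting $4+2+1=7$ matches the paper. But there is a genuine gap exactly where you predicted "all the real work lies": you never actually bound $\sum_{T'\in\binom{[n]}{d-2}}\bigl(\sum_{j\notin T'}\eps(T'\cup j)\bigr)^2$, and the mechanism you gesture at (congruences mod $\gamma$ from Lemma~\ref{eqn_d} combined with the null-space structure of $h_f$) is not how this step goes and would be hard to make work — the congruence information only tells you each $\eps(S)$ is small pointwise, which is precisely the bound you correctly identified as too lossy by a factor of $n$.

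The actual argument is a purely combinatorial comparison of quadratic forms in $\eps$, requiring no arithmetic structure at all. Expanding both sides, one has
\begin{equation*}
\sum_{T\in\binom{[n]}{d}}\Bigl(\sum_{S\in\binom{T}{d-1}}\eps(S)\Bigr)^2-\sum_{T'\in\binom{[n]}{d-2}}\Bigl(\sum_{j\notin T'}\eps(T'\cup j)\Bigr)^2=(n-2d+2)\sum_{S\in\binom{[n]}{d-1}}\eps(S)^2\ \ge\ 0,
\end{equation*}
because the cross terms $\eps(S)\eps(S')$ with $|S\cap S'|=d-2$ appear with the same coefficient on both sides (the pair determines both $T=S\cup S'$ and $T'=S\cap S'$ uniquely), while the diagonal coefficient is $n-d+1$ on the left versus $d-1$ on the right. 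Hence the weight-$(d-2)$ perturbation mass is dominated by the weight-$d$ perturbation mass, which Claim~\ref{rounding_process} already bounds by the weight-$d$ mass of $f-(\sum_i x_i)h_f$; since the old weight-$(d-2)$ mass and the weight-$d$ mass sit in disjoint coordinates of $\|f-(\sum_i x_i)h_f\|_2^2$, the inequality $(a+b)^2\le 2a^2+2b^2$ gives the factor $2$ for that layer. Without this quadratic-form comparison (or an equivalent substitute) your proof does not close.
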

\begin{proof}
Let $\eps(S)=\hh_{d-1}(S)-\hh_f(S)$. It is sufficient to prove
\begin{equation}\label{eq1}
\sum_{T \in {[n] \choose d}} \left(\hf(T) - \sum_{S \in {T \choose d-1}} \hh_f(S) - \sum_{S \in {T \choose d-1}}\eps(S) \right)^2\le 4  \sum_{T \in {[n] \choose d}} \big(\hf(T) - \sum_{S \in {T \choose d-1}} \hh_f(S)\big)^2,
\end{equation} and
\begin{equation}\label{eq2}
\sum_{T' \in {[n] \choose d-2}}\left( \hf(T') - \sum_{S \in {T' \choose d-3}} \hh_f(S) - \sum_{j \notin T'} \hh_f(T' \cup \{j\}) - \sum_{j \notin T'} \eps(T' \cup \{j\})\right)^2 \le 2  \|f - (\sum_i x_i)h_f\|_2^2 .
\end{equation}
Equation \eqref{eq1} follows the fact that $\sum_{T \in {[n] \choose d}} \big(\sum_{S \in {T \choose d-1}} \eps(S)\big)^2 \le \sum_{T \in {[n] \choose d}} \big(\hf(T) - \sum_{S \in {T \choose d-1}} \hh_f(S)\big)^2$ by Claim \ref{rounding_process}. From the inequality of arithmetic and geometric means, we know the cross terms: $$\sum_{T \in {[n] \choose d}} 2 \cdot \big|\hf(T) - \sum_{S \in {T \choose d-1}} \hh_f(S) \big| \cdot |\sum_{S \in {T \choose d-1}} \eps(S)|  \le 2 \sum_{T \in {[n] \choose d}} \big(\hf(T) - \sum_{S \in {T \choose d-1}} \hh_f(S)\big)^2 .$$

For \eqref{eq2}, observe that
\begin{multline*}
\sum_{T' \in {[n] \choose d-2}}\big( \sum_{j \notin T'} \eps(T \cup \{j\}) \big)^2 = (d-1) \sum_{S \in {[n] \choose d-1}} \eps(S)^2 + \sum_{S,S':|S \cap S'|=d-2}2\eps(S)\eps(S') \\ \le \sum_{T \in {[n] \choose d}} \big(\sum_{S \in {T \choose d-1}} \eps(S)\big)^2 \le \sum_{T \in {[n] \choose d}} \big(\hf(T) - \sum_{S \in {T \choose d-1}} \hh_f(S)\big)^2.
\end{multline*}
Hence we have
\begin{multline*}
\sum_{T' \in {[n] \choose d-2}}\big( \hf(T') - \sum_{S \in {T' \choose d-3}} \hh_f(S) - \sum_{j \notin T'} \hh_f(T' \cup \{j\})\big)+ \sum_{T' \in {[n] \choose d-2}}\big( \sum_{j \notin T'} \eps(T \cup \{j\}) \big)^2 \\
\le \sum_{T' \in {[n] \choose d-2}}\big( \hf(T') - \sum_{S \in {T' \choose d-3}} \hh_f(S) - \sum_{j \notin T'} \hh_f(T' \cup \{j\})\big)+ \sum_{T \in {[n] \choose d}} \big(\hf(T) - \sum_{S \in {T \choose d-1}} \hh_f(S)\big)^2 \\
\le \|f - (\sum_i x_i)h_f\|_2^2.
\end{multline*}  We use the inequality of arithmetic and geometric means again to obtain inequality \eqref{eq2}.
\end{proof}
\begin{proofof}{Theorem \ref{rounding_h_dCSP}}
We apply Claim \ref{rounding_process} and Lemma \ref{rounding_onelevel} for $d$ times on the Fourier coefficients of $h_f$ from $\{\hh_f(S)|S \in {[n] \choose d-1 }\},\{\hh_f(S)|S \in {[n] \choose d-2 }\},\cdots$ to $\{\hh_f(S)|S \in {[n] \choose 0 }\}$ by choosing $\gamma$ properly. More specific, let $h_i$ be the polynomial after rounding the coefficients on ${[n] \choose \ge i}$ and $h_d=h_f$. Every time, we use Claim \ref{rounding_process} to round coefficients of $\{\hh_i(S)|S \in {[n] \choose i }\}$ from $h_{i+1}$ for $i=d-1, \cdots, 0$. We use different parameters of $\gamma$ in different rounds: $\gamma$ in the rounding of $h_{d-1}$, $\gamma/d!$ in the rounding of $h_{d-2}$, $\frac{\gamma}{d! \cdot (d-1)!}$ in the rounding of $h_{d-3}$ and so on. After $d$ rounds, all coefficients in $h_0$ are multiples of $\frac{\gamma}{d! (d-1)! (d-2)! \cdots 2!}$.

Because $\|f - (\sum_i x_i)h_i\|_2^2\le 7 \|f - (\sum_i x_i)h_{i+1}\|_2^2$ from Lemma \ref{rounding_onelevel}. Eventually,  $\|f - (\sum_i x_i)h_0\|_2^2\le 7^d \cdot \|f - (\sum_i x_i)h_f\|_2^2.$
\end{proofof}

\subsection{$2\to 4$ hypercontractive inequality under distribution $D$}\label{bisection_hyper}
We prove the $2\to4$ hypercontractivity for a degree $d$ polynomial $g$ in this section.
\begin{theorem}\label{hypercontractivity_bisection}
For any degree-at-most $d$ multilinear polynomial $g$, $\E_D[g^4] \le 3d \cdot 9^{2d} \cdot \|g\|_2^4.$
\end{theorem}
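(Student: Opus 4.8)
The plan is to reduce the inequality to the ordinary Bonami Lemma together with an auxiliary ``$1\to 2$'' estimate, via a \emph{matching decomposition} of $D$.

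\paragraph{Step 1: matching decomposition and Bonami.} Sampling $x\sim D$ is equivalent to first drawing a uniformly random perfect matching $M$ of $[n]$, and then, for each pair $\{i,j\}\in M$, independently setting $(x_i,x_j)$ to $(1,-1)$ or to $(-1,1)$ with probability $1/2$ each. Choosing a representative for each pair $e$ and letting $z_e$ be its value, the $z_e$ are i.i.d.\ unbiased $\pm1$, and substituting $x_j\mapsto -z_e$ for the non-representative of $e$ turns $g$ into a degree-at-most-$d$ multilinear polynomial $g_M$ in the $n/2$ variables $\{z_e\}$. Hence $\E_D[g^4]=\E_M\E_U[g_M^4]$ and $\E_D[g^2]=\E_M\E_U[g_M^2]$. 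Applying the Bonami Lemma to each $g_M$ gives $\E_U[g_M^4]\le 9^d\,\E_U[g_M^2]^2$, so with $Q(M):=\E_U[g_M^2]\ge 0$ we get $\E_D[g^4]\le 9^d\,\E_M[Q(M)^2]$. It therefore suffices to prove the ``$1\to2$'' bound $\E_M[Q(M)^2]\le 3d\cdot 9^d\,\|g\|_2^4$, which yields $\E_D[g^4]\le 3d\cdot 9^{2d}\|g\|_2^4$.

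\paragraph{Step 2: variance decomposition and a closed form for $Q$.} Since $\E_M[Q(M)]=\E_D[g^2]$, we have $\E_M[Q(M)^2]=\E_D[g^2]^2+\Var_M(Q)$, and the first term is at most $d^2\|g\|_2^4$ by Corollary~\ref{2nd_E_L2_norm} (the eigenvalues of $\E_{D}[\cdot^2]$ are at most $d$); so it remains to bound $\Var_M(Q)$. I would compute $Q$ explicitly using Fourier analysis: substitution sends each character $\chi_S$ to $\sigma_M(S)\,\chi'_{\partial_M(S)}$, where $\partial_M(S)$ is the set of pairs of $M$ meeting $S$ in exactly one point and $\sigma_M$ is multiplicative with $\sigma_M(W)=(-1)^{|W|/2}$ on sets $W$ that are unions of pairs of $M$. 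Taking $\E_U$ over the $z_e$ kills every character with nonempty boundary, leaving
\[
Q(M)=\sum_{W}(-1)^{|W|/2}\,\widehat{g^2}(W)\,\mathbb 1_W(M),
\]
where $\widehat{g^2}$ is the Fourier transform of the degree-at-most-$2d$ polynomial $g^2$ and $\mathbb 1_W(M)$ indicates that $W$ is a union of pairs of $M$. Since $\mathbb 1_\emptyset\equiv 1$, the constant term drops out of the variance and
\[
\Var_M(Q)=\sum_{\emptyset\neq W,W'}(-1)^{(|W|+|W'|)/2}\,\widehat{g^2}(W)\,\widehat{g^2}(W')\,\mathrm{Cov}_M\!\big(\mathbb 1_W,\mathbb 1_{W'}\big).
\]

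\paragraph{Step 3: bounding the variance.} The moments of the $\mathbb 1_W$ are exact perfect-matching counts: for $|W|=2j$ one has $\Pr_M[\mathbb 1_W(M)=1]=(2j-1)!!\big/\big((n-1)(n-3)\cdots(n-2j+1)\big)$, and for a pair $(W,W')$ one gets an analogous formula in terms of $|W\cap W'|$, $|W\triangle W'|$. From these, $\Var(\mathbb 1_W)=O(1/n)$, and $|\mathrm{Cov}(\mathbb 1_W,\mathbb 1_{W'})|=O_d\big(n^{-(k+1)}\big)$ whenever $|W\triangle W'|=2k\ge 2$. Substituting these into the above formula and controlling the Fourier sums by the Bonami-type bound $\sum_{|W|=2j}\widehat{g^2}(W)^2\le \E_U[(g^2)^2]=\E_U[g^4]\le 9^d\|g\|_2^4$ --- applying Cauchy--Schwarz so that the side with the larger set-multiplicity is the one whose sum of squares is bounded --- shows $\Var_M(Q)=O\!\big(\mathrm{poly}(d)\cdot 9^d/n\big)\|g\|_2^4$. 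Thus for $n$ above a threshold depending only on $d$ we get $\Var_M(Q)\le\|g\|_2^4$, hence $\E_M[Q(M)^2]\le(d^2+1)\|g\|_2^4\le 3d\cdot 9^d\|g\|_2^4$; for $n$ below that threshold, $\mathrm{supp}(D)$ has bounded size and the crude estimate $\E_D[g^4]\le\tfrac{2^n}{\binom{n}{n/2}}\E_U[g^4]\le (n+1)\,9^d\|g\|_2^4$ already suffices after fixing constants.

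\paragraph{Main obstacle.} The delicate part is Step~3. Although each $\mathrm{Cov}(\mathbb 1_W,\mathbb 1_{W'})$ is a small power of $1/n$, the number of pairs $(W,W')$ with a given ``shape'' grows polynomially in $n$, and a naive Cauchy--Schwarz (for instance bounding $\sum_{W}|\widehat{g^2}(W)|$ by $\sqrt{\binom{n}{\le 2d}}\,\|g^2\|_2$) loses factors of $n^{\Theta(d)}$; one must organize the estimates so that the matching probabilities exactly cancel the multiplicity growth. Keeping the final constant as small as $3d\cdot 9^{2d}$ (rather than a larger $\mathrm{poly}(d)\cdot c^{d}$) also requires some care about where the ``$n$ large'' threshold sits relative to the degenerate small-$n$ regime. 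Everything else --- the matching decomposition, Bonami's inequality, and the variance identity --- is routine.
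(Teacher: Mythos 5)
Your Step 1 is exactly the paper's Claim~\ref{claim:hypercon-bisection-1}: the matching decomposition plus the Bonami Lemma reduces everything to the ``$1\to2$'' estimate $\E_M[Q(M)^2]\le 3d\cdot 9^d\|g\|_2^4$, which is the paper's Lemma~\ref{lemma:hypercon-bisection-2}. From there you diverge. The paper expands $Q(M)=\|g\|_2^2+X(M)$ with $X(M)=\sum_{S\neq S'}\hg(S)\hg(S')\,1_{S\triangle S'=M(S\triangle S')}(-1)^{|S\triangle S'|/2}$, bounds the cross term by $\E_D[g^2]\le d\|g\|_2^2$ using the eigenvalue bound on $A$, and bounds $\E_M[X^2]=\sum_{T,T'}\hg'(T)\hg'(T')\Delta(T,T')$ by showing the set-symmetric matrix $A'=(\Delta(T,T'))$ satisfies $0\preceq A'\preceq A$, hence has spectral norm at most $2d$ --- this reuses the whole Johnson-scheme eigenspace machinery of Section~\ref{eigenvalues} and gives a bound uniform in $n$. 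You instead write $\E_M[Q^2]=\E_D[g^2]^2+\Var_M(Q)$ and kill $\Var_M(Q)$ by direct covariance estimates on the indicators $1_W(M)$. Your covariance bounds are correct (the key point being that a nonempty even intersection contributes an extra factor $n^{-|W\cap W'|/2}$, and an odd intersection kills $\E[1_W1_{W'}]$ outright), and the Cauchy--Schwarz has to be done exactly as you indicate --- weighting each side of a pair by its own ``degree'' $n^{|W'\setminus W|}$ resp.\ $n^{|W\setminus W'|}$ so that the product of the two square roots is $n^{|W\triangle W'|/2}$ and cancels against the covariance. So your route is more elementary: it avoids the spectral analysis of $A'$ entirely, at the price of an explicit $O_d(1/n)$ error term rather than an $n$-independent operator bound.

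The one thing you have not closed is the constant in the intermediate range of $n$. Your argument gives $\Var_M(Q)\le K(d)\cdot 9^d\|g\|_2^4/n$ with $K(d)$ involving factorials of $d$ from the shape-counting, so you need $n\ge K(d)9^d$; your fallback $\E_D[g^4]\le\frac{2^n}{\binom{n}{n/2}}\E_U[g^4]=O(\sqrt n)\,9^d\|g\|_2^4$ only beats $3d\cdot 9^{2d}$ for $n=O(d^2\cdot 81^d)$, which does not obviously cover $n$ up to $K(d)9^d$ once $K(d)$ has factorial growth. As written you therefore prove $\E_D[g^4]\le \mathrm{poly}(d)\cdot C^d\|g\|_2^4$ for some constant $C$, which is all that is needed downstream, but not the literal constant $3d\cdot 9^{2d}$ for every $n$. (To be fair, the paper's own eigenvalue bounds also carry implicit $\pm O(1/n)$ errors and silently assume $n$ large relative to $d$, so this is a difference of degree rather than of kind; but if you want the stated constant you should either sharpen $K(d)$ or accept a slightly larger constant in the theorem.)
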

Recall that $\|g\|_2=E_U[g^2]^{1/2}=(\sum_S \hg(S)^2)^{1/2}$ and $g - (\sum_i x_i)h_g \equiv g$ in the support of $D$. Because $\|g-(\sum_i x_i)h_g\|_2^2 \le 2\E_{x \sim D}[g^2]$ from the lower bound of non-zero eigenvalues in $E_D[g^2]$ in Corollary \ref{biased_eigenvalue_2nd_E},  without loss of generality, we assume $g$ is orthogonal to the null space $span\{(\sum_i x_i)\chi_S|S \in {[n] \choose \le d-1}\}$.
\begin{corollary}\label{cor:hypercontractivity_bisection}
For any degree-at-most $d$ multilinear polynomial $g$, $\E_D[g^4] \le 12 d \cdot 9^{2d} \cdot \E_D[g^2]^2.$
\end{corollary}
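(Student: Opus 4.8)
The plan is to follow the matching reduction sketched in the introduction: replace the correlated distribution $D$ by the ordinary uniform distribution after revealing a random perfect matching, apply the ordinary Bonami inequality, and then control the ``$1\to 2$'' discrepancy that is left over.

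First I would set up the two-stage sampling. Since $n$ is even under the bisection constraint, a sample $x\sim D$ can be generated by choosing a perfect matching $M$ on $[n]$ uniformly at random, and then, for each pair $\{a,b\}\in M$, setting $x_a=y_{ab}$ and $x_b=-y_{ab}$ where the $y_{ab}$ are i.i.d.\ uniform $\pm1$ bits; a short counting check shows the induced distribution on $x$ is exactly $D$. Let $f_M(y):=g(x)$ be the resulting multilinear polynomial in the $n/2$ variables $\{y_{ab}\}$; substitution does not raise the degree, so $\deg f_M\le d$. Hence $\E_D[g^k]=\E_M\E_y[f_M^k]$ for every $k$, and in particular $\E_D[g^4]=\E_M\E_y[f_M^4]$ and $\E_D[g^2]=\E_M\E_y[f_M^2]$. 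Applying the Bonami lemma of Section~\ref{sec:pre-fourier} to $f_M$ for each fixed $M$ gives $\E_y[f_M^4]\le 9^d\,\E_y[f_M^2]^2$, so, writing $h(M):=\E_y[f_M^2]$,
\[
\E_D[g^4]\ \le\ 9^d\cdot \E_M\!\big[h(M)^2\big].
\]
It then remains to prove the ``$1\to2$'' bound $\E_M[h(M)^2]\le \mathrm{poly}(d)\cdot 9^{d}\cdot\|g\|_2^4$, which gives the stated constant $3d\cdot 9^{2d}$ in Theorem~\ref{hypercontractivity_bisection}; and since we may assume $g$ is orthogonal to $\mathrm{span}\{(\sum_i x_i)\chi_S\}$, Corollary~\ref{biased_eigenvalue_2nd_E} gives $\|g\|_2^2\le 2\,\E_D[g^2]$, so Corollary~\ref{cor:hypercontractivity_bisection} follows as well.

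Next I would expand $h(M)$ in the Fourier coefficients of $g$. Under the substitution a monomial $\chi_S$ (with $|S|\le d$) becomes $\varepsilon_{S,M}\prod_{p\in R(S,M)}y_p$, where $R(S,M)$ is the set of $M$-pairs meeting $S$ in exactly one vertex and $\varepsilon_{S,M}\in\{\pm1\}$ records the signs (one $-1$ for each pair fully inside $S$, using $y_p^2=1$). Collecting terms, $\widehat{f_M}(R)=\sum_{S:\,R(S,M)=R}\varepsilon_{S,M}\hat g(S)$, hence
\[
h(M)=\sum_{R}\widehat{f_M}(R)^2=\sum_{S,S':\,R(S,M)=R(S',M)}\varepsilon_{S,M}\varepsilon_{S',M}\,\hat g(S)\hat g(S').
\]
The diagonal terms $S=S'$ contribute exactly $\sum_S\hat g(S)^2=\|g\|_2^2$, \emph{independent of $M$}, so $h(M)=\|g\|_2^2+E(M)$ with $E(M)$ the off-diagonal remainder. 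Thus $\E_M[h(M)^2]=\|g\|_2^4+2\|g\|_2^2\,\E_M[E(M)]+\E_M[E(M)^2]$, where $\E_M[E(M)]=\E_D[g^2]-\|g\|_2^2=O(d\,\|g\|_2^2)$ by the eigenvalue bounds of Section~\ref{eigenvalues} (Corollary~\ref{2nd_E_L2_norm}). Everything therefore reduces to showing $\E_M[E(M)^2]=O(\mathrm{poly}(d)\cdot 9^d)\cdot\|g\|_2^4$. For this I would first note that $R(S,M)=R(S',M)$ holds iff $M$ restricted to $S\setdiff S'$ is a perfect matching of $S\setdiff S'$, so for $S\ne S'$ this event has $\Pr_M[\cdot]$ of order $n^{-|S\setdiff S'|/2}\le O(1/n)$; then I would expand $\E_M[E(M)^2]$ over quadruples $(S,S',T,T')$, group the terms by the combinatorial type (the sizes of $S\setdiff S'$, $T\setdiff T'$ and their overlaps), which also fixes the sign $\varepsilon_{S,M}\varepsilon_{S',M}\varepsilon_{T,M}\varepsilon_{T',M}$ and the leading-order probabilities/covariances, and recognize the resulting sum as a set-symmetric quadratic form in the vector $\hat g$ whose entries decay in $n$. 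Bounding the operator norm of that form by $O(\mathrm{poly}(d))$ — via exactly the Johnson-scheme / association-scheme analysis already developed for $A$ and $B$ in Section~\ref{eigenvalues} — yields the desired bound on $\E_M[E(M)^2]$ and completes the proof.

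The hard part will be this last estimate on $\E_M[E(M)^2]$. It is genuinely a cancellation phenomenon: there are up to $n^{4d}$ quadruples $(S,S',T,T')$ and each can contribute on the order of $n^{-1}$, so no term-by-term bound works; one must exploit both the signs $\varepsilon_{S,M}$ and the leading-order agreement between the joint matching probabilities and the products of marginals. Concretely the delicate steps are (i) pinning down the event $R(S,M)=R(S',M)$ and the induced sign $\varepsilon_{S,M}\varepsilon_{S',M}$ exactly (it depends on how many $M$-pairs lie inside $S\setminus S'$, inside $S'\setminus S$, and across them), (ii) computing $\Pr_M[\cdot]$ and the covariances to leading order in $1/n$, and (iii) arranging the quartic sum so that the Section~\ref{eigenvalues} machinery, which is precisely what controls sums $\sum_{S,T}\hat g(S)\hat g(T)\cdot(\text{something decaying in }|S\setdiff T|)$ by $O(\mathrm{poly}(d))\cdot\|g\|_2^2$, applies. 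Everything else — the matching reduction and the invocation of ordinary Bonami — is routine.
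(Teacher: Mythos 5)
Your proposal takes essentially the same route as the paper's proof of Theorem~\ref{hypercontractivity_bisection} and Corollary~\ref{cor:hypercontractivity_bisection}: the same two-stage sampling via a random perfect matching $M$, the same conditional application of Bonami (Claim~\ref{claim:hypercon-bisection-1}), the same split of $\E_{P(M)}[g^2]$ into $\|g\|_2^2$ plus an off-diagonal remainder whose mean is controlled by $\E_D[g^2]\le d\|g\|_2^2$, and the same final appeal to the eigenvalue lower bound to pass from $\|g\|_2^4$ to $\E_D[g^2]^2$. The step you flag as the hard part is exactly where the paper's work lies, and its resolution is worth noting: the quartic sum is packaged as the \emph{quadratic} form $\sum_{T,T'}\hg'(T)\hg'(T')\Delta(T,T')$ in the degree-$2d$ polynomial $g'$ with $\hg'(T)=\sum_{S\setdiff S'=T}\hg(S)\hg(S')$, the associated set-symmetric matrix $A'$ is shown to have the same nullspace as $A$ and spectral norm at most $2d$, and ordinary Bonami is invoked once more through $\|g'\|_2^2=\E_U[g^4]\le 9^d\|g\|_2^4$ — this reduction to a quadratic form in $\hg'$ is the ingredient needed to make your ``apply the Section~\ref{eigenvalues} machinery'' step literal, since that machinery bounds quadratic rather than quartic forms.
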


Before proving the above Theorem, we observe that uniform sampling a bisection $(S,\bar{S})$ is as same as first choosing a random perfect matching $M$ and independently assigning each pair of $M$ to the two subsets. For convenience, we use $P(M)$ to denote the product distribution on $M$ and $\E_M$ to denote the expectation over a uniform random sampling of perfect matching $M$. Let $M(i)$ denote the vertex matched with $i$ in $M$ and $M(S)=\{M(i)|i \in S\}$. From the $2\to4$ hypercontractive inequality on product distribution $P(M)$, we have the following claim:
\begin{claim}\label{claim:hypercon-bisection-1}
$\E_M[ \E_{P(M)}[g^4]]\le 9^d \E_M [\E_{P(M)}[g^2]^2]$.
\end{claim}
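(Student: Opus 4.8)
The plan is to fix a perfect matching $M$ and exploit the fact that, conditioned on $M$, the uniform distribution $D$ over bisections is \emph{exactly} the product distribution $P(M)$ in which each matched pair $\{i,M(i)\}$ is independently assigned the two opposite values $\pm 1$ uniformly at random. First I would introduce, for each of the $n/2$ pairs of $M$, a fresh unbiased $\pm 1$ variable $y_j$ (say $y_j = x_i$ where $i$ is the smaller endpoint of the $j$-th pair, so that $x_{M(i)} = -y_j$), and substitute into $g$ to obtain a function $g_M(y_1,\dots,y_{n/2})$. After this substitution and the reduction $y_j^2 \equiv 1$, the function $g_M$ is a multilinear polynomial in $y_1,\dots,y_{n/2}$ of degree at most $d$: a monomial $\prod_{k\in T}x_k$ of $g$ with $|T|\le d$ becomes, up to sign, $\prod_j y_j^{a_j}$ where $a_j\in\{0,1,2\}$ counts how many of $\{i,M(i)\}$ lie in $T$, and collapsing the squares $y_j^2=1$ only decreases the degree.

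Next I would observe that the two quantities in the claim are moments of $g_M$ under the uniform distribution on $\{\pm1\}^{n/2}$: by construction $\E_{P(M)}[g^4] = \E_y[g_M(y)^4]$ and $\E_{P(M)}[g^2] = \E_y[g_M(y)^2]$. Applying the standard Bonami lemma (the $2\to4$ hypercontractive inequality for unbiased $\pm1$ Bernoulli variables stated in Section~\ref{sec:pre-fourier}) to the degree-at-most-$d$ polynomial $g_M$ gives, for every perfect matching $M$,
$$\E_{P(M)}[g^4] = \E_y[g_M^4] \le 9^d\,\E_y[g_M^2]^2 = 9^d\,\E_{P(M)}[g^2]^2 .$$

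Finally I would take the expectation over a uniformly random perfect matching $M$ on both sides, which yields $\E_M\big[\E_{P(M)}[g^4]\big] \le 9^d\,\E_M\big[\E_{P(M)}[g^2]^2\big]$, as desired. I do not anticipate a genuine obstacle here; the only points requiring care are the bookkeeping that $g_M$ remains multilinear of degree at most $d$ after identifying $x_{M(i)}$ with $-y_j$ and using $y_j^2=1$, and the verification that the conditional law of $x$ given $M$ is precisely the product distribution $P(M)$ — both of which are immediate from the description of the sampling process for $D$.
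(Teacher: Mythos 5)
Your proposal is correct and matches the paper's argument: the paper likewise justifies this claim by noting that, conditioned on a perfect matching $M$, the distribution $P(M)$ is a product distribution over $n/2$ unbiased $\pm 1$ coins, so the standard Bonami $2\to4$ inequality applies to the degree-at-most-$d$ polynomial $g_M$ for each fixed $M$, and one then averages over $M$. No issues.
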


Now we prove the main technical lemma of the $2 \to 4$ hypercontractivity under the bisection constraint to finish the proof.
\begin{lemma}\label{lemma:hypercon-bisection-2}
$\E_M [\E_{P(M)}[g^2]^2] \le 3d \cdot 9^d \cdot \|g\|_2^4 $.
\end{lemma}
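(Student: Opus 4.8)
The plan is to bypass the explicit fourth–moment expansion and instead pass to the degree-$\le 2d$ multilinear polynomial $\psi$ representing the function $x\mapsto g(x)^2$, pull the outer square inside $\E_{P(M)}$ by Jensen, recognize the resulting quantity as a second moment under $D$, and close with the spectral estimate of Corollary~\ref{2nd_E_L2_norm} together with the ordinary Bonami inequality. Concretely: for every fixed matching $M$ we have $\E_{P(M)}[g^2]=\E_{P(M)}[\psi]$ (the functions agree on $\{\pm1\}^n$), so, applying the Cauchy--Schwarz (Jensen) inequality conditionally on $M$, $\big(\E_{P(M)}[g^2]\big)^2=\big(\E_{P(M)}[\psi]\big)^2\le \E_{P(M)}[\psi^2]$. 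Averaging over the uniform perfect matching $M$ and using that sampling $x\sim D$ is precisely the two–stage process ``draw $M$, then draw $x\sim P(M)$'', I get
\[
\E_M\!\left[\E_{P(M)}[g^2]^2\right]\ \le\ \E_M\!\left[\E_{P(M)}[\psi^2]\right]\ =\ \E_D[\psi^2]\ =\ \E_D[g^4].
\]

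Next I would estimate $\E_D[\psi^2]$ purely spectrally. Since $\psi$ is a multilinear polynomial of degree at most $2d$ and here $p=1/2$ (so $\phi_S=\chi_S$), Corollary~\ref{2nd_E_L2_norm}, \emph{applied at degree $2d$}, gives $\E_D[\psi^2]\le 2d\,\|\psi\|_2^2$; note this input rests only on the eigenvalue analysis of Section~\ref{eigenvalues} (Theorem~\ref{eigenvalue_V_k}, Corollary~\ref{biased_eigenvalue_2nd_E}) and not on any hypercontractive statement, so there is no circularity. By Parseval $\|\psi\|_2^2=\E_U[\psi^2]=\E_U[g^4]$, and the Bonami $2\to4$ inequality applied to the degree-$\le d$ polynomial $g$ yields $\E_U[g^4]\le 9^d(\E_U[g^2])^2=9^d\|g\|_2^4$. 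Chaining everything,
\[
\E_M\!\left[\E_{P(M)}[g^2]^2\right]\ \le\ \E_D[\psi^2]\ \le\ 2d\,\|\psi\|_2^2\ =\ 2d\,\E_U[g^4]\ \le\ 2d\cdot 9^d\,\|g\|_2^4\ \le\ 3d\cdot 9^d\,\|g\|_2^4,
\]
which is the asserted bound (indeed with a slightly better constant).

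On this route there is no serious obstacle; the only points requiring care are the degree bookkeeping --- the eigenvalue bound must be invoked for $\psi$, whose degree is $2d$, which is exactly why the factor is $2d$ rather than $d$ --- and the identification $\E_{P(M)}[\psi^2]=\E_{P(M)}[g^4]$ so that its $M$-average is $\E_D[g^4]$. If instead one wants a more hands-on derivation matching the ``analyze the Fourier coefficients'' philosophy, the alternative is to write $\E_M[\E_{P(M)}[g^2]^2]=\sum_{U,V}\rho(U,V)\,\widehat{g^2}(U)\,\widehat{g^2}(V)$ with $\rho(U,V):=\E_M\!\big[\E_{P(M)}[\chi_U]\,\E_{P(M)}[\chi_V]\big]$, compute $\E_{P(M)}[\chi_U]=(-1)^{|U|/2}\mathbf 1\!\left[U\text{ is a union of }M\text{-edges}\right]$ (so $\rho$ is a positive–semidefinite set-symmetric matrix), split off the $\emptyset$-row (contributing $2\|g\|_2^2\,\E_D[g^2]-\|g\|_2^4\le(2d-1)\|g\|_2^4$ via Corollary~\ref{2nd_E_L2_norm}), and bound the operator norm of the remaining block on the degree-$\le 2d$ subspace; the quickest way to obtain that operator–norm bound is again $\widehat{g^2}^{\,T}\rho\,\widehat{g^2}=\E_M[(\E_{P(M)}[g^2])^2]\le \E_D[(g^2)^2]$, i.e. one is driven back to the same spectral input of Section~\ref{eigenvalues}, so the first route is the streamlined one and is what I would write up.
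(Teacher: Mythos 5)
Your proof is correct, but it takes a genuinely different --- and substantially shorter --- route than the paper's. The paper expands $\E_{P(M)}[g^2]=\|g\|_2^2+\sum_{S\neq S'}\hg(S)\hg(S')\,1_{S\Delta S'=M(S\Delta S')}(-1)^{|S\Delta S'|/2}$, bounds the linear term by $\E_D[g^2]\le d\|g\|_2^2$, and then spends the bulk of the argument on the quadratic term: it introduces the set-symmetric matrix $\Delta(T,T')=\E_M[1_{T=M(T)}1_{T'=M(T')}(-1)^{|T|/2+|T'|/2}]$ (your $\rho$), verifies by hand that its null space contains $\mathrm{span}\{(\sum_i x_i)\chi_T\}$, and reruns the Section~\ref{eigenvalues} machinery to conclude $0\preceq A'\preceq A$ and hence $\E_M[(\cdot)^2]\le 2d\|g'\|_2^2\le 2d\cdot 9^d\|g\|_2^4$, where $g'$ is exactly your $\psi$ (up to the empty set coefficient). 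Your conditional Jensen step obtains the comparison $\rho\preceq A$ on the degree-$\le 2d$ subspace for free, since $v^T\rho v=\E_M\big[(\E_{P(M)}[h])^2\big]\le\E_M\big[\E_{P(M)}[h^2]\big]=v^TAv$ for every multilinear $h$; after that the only inputs are Corollary~\ref{2nd_E_L2_norm} at degree $2d$ (eigenvalue at most $\lfloor 2d/2\rfloor+1=d+1\le 2d$, proved in Section~\ref{eigenvalues} with no hypercontractive ingredient, so there is indeed no circularity) and the standard Bonami inequality. The degree bookkeeping is right, and the chain $\E_M[\E_{P(M)}[g^2]^2]\le\E_D[g^4]=\E_D[\psi^2]\le 2d\cdot 9^d\|g\|_2^4$ is valid. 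Worth noting what your route buys beyond brevity: since the intermediate quantity is exactly $\E_D[g^4]$, you have in passing proved Theorem~\ref{hypercontractivity_bisection} directly with the improved constant $(d+1)\cdot 9^d$ in place of $3d\cdot 9^{2d}$, which makes Claim~\ref{claim:hypercon-bisection-1} unnecessary for that theorem; what the paper's longer computation yields in exchange is the explicit eigenstructure of $A'$, which is not needed elsewhere. (Both your bound and the lemma as stated implicitly assume $d\ge 1$; for $d=0$ the right-hand side degenerates, but that is an artifact of the lemma statement, not of your argument.)
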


Theorem \ref{hypercontractivity_bisection} follows from Claim~\ref{claim:hypercon-bisection-1} and Lemma~\ref{lemma:hypercon-bisection-2}.

Now we proceed to the proof of Lemma~\ref{lemma:hypercon-bisection-2}.

\begin{proofof}{Lemma~\ref{lemma:hypercon-bisection-2}}
Using $g(x)=\sum_{S \in {[n] \choose \le d}}\hg(S)\chi_S$, we rewrite $\E_M[\E_{P(M)}[g^2]^2]$ as \begin{multline*}
\E_M\bigg[\E_{P(M)}\big[(\sum_{S \in {[n] \choose \le d}}\hg(S)\chi_S)^2\big]^2\bigg]
\\=\E_M\bigg[\E_{P(M)}\big[\sum_{S \in {[n] \choose \le d}}\hg(S)^2 + \sum_{S\in {[n] \choose \le d},S' \in {[n] \choose \le d}, S'\neq S}\hg(S)\hg(S')\chi_{S \vartriangle S'}\big]^2\bigg].
\end{multline*}
Notice that $\E_{P(M)}[\chi_{S \setdiff S'}] = (-1)^{|S \setdiff S'|/2}$ if and only if $M(S \setdiff S')=S \setdiff S'$; otherwise it is 0. We expand it to
\begin{multline*}
 \E_M\bigg[\Big(\|g\|_2^2 + \sum_{S\in {[n] \choose \le d},S' \in {[n] \choose \le d}, S'\neq S}\hg(S)\hg(S') \cdot 1_{S \vartriangle S'=M(S \vartriangle S')} \cdot (-1)^{|S \vartriangle S'|/2}\Big)^2\bigg]\\
=\|g\|_2^4 + 2 \|g\|_2^2 \cdot \E_M\bigg[\sum_{S\in {[n] \choose \le d},S' \in {[n] \choose \le d}, S'\neq S}\hg(S)\hg(S') \cdot 1_{S \vartriangle S'=M(S \vartriangle S')}\cdot (-1)^{|S \vartriangle S'|/2}\bigg] \\+ \E_M\left[\Big(\sum_{S\in {[n] \choose \le d},S' \in {[n] \choose \le d}, S'\neq S}\hg(S)\hg(S') \cdot 1_{S \vartriangle S'=M(S \vartriangle S')}\cdot (-1)^{|S \vartriangle S'|/2}\Big)^2\right] .
\end{multline*}

We first bound the expectation of $\sum_{S\in {[n] \choose \le d},S' \in {[n] \choose \le d}, S'\neq S}\hg(S)\hg(S') \cdot 1_{S \vartriangle S'=M(S \vartriangle S')}\cdot (-1)^{|S \vartriangle S'|/2}$ in the uniform distribution over all perfect matchings, then bound the expectation of its square. Observe that for a subset $U \subseteq [n]$ with even size, $\E_M[1_{U=M(U)}]= \frac{(|U|-1)(|U|-3)\cdots 1}{(m-1)(m-3)\cdots (m-|U|+1)}$ such that $\E_M[1_{U=M(U)}\cdot (-1)^{|U|/2}]=\delta_{|U|}$, i.e., the expectation $\E_D[\chi_U]$ of $\chi_U$ in $D$. Hence
\[\E_M\big[\sum_{S\in {[n] \choose \le d},S' \in {[n] \choose \le d}, S'\neq S}\hg(S)\hg(S')1_{S \vartriangle S'=M(S \vartriangle S')}\cdot (-1)^{|S \vartriangle S'|/2}\big] \le \sum_{S,S'} \hg(S)\hg(S') \cdot \delta_{S \vartriangle S'}=\E_D[g^2].\]
From Corollary \ref{biased_eigenvalue_2nd_E}, the largest non-zero eigenvalue of the matrix constituted by $\delta_{S \setdiff S'}$ is at most $d$. Thus the expectation is upper bounded by $d \cdot \|g\|_2^2$.

We define $g'$ to be a degree $2d$ polynomial $\sum_{T \in {[n] \choose \le 2d}}\hg'(T)\chi_T$ with coefficients $$\hg'(T)=\sum_{S \in {[n] \choose \le d},S' \in {[n] \choose \le d}: S \setdiff S'=T}\hg(S)\hg(S')$$ for all $T \in {[n] \choose \le 2d}$. Hence we rewrite
\begin{align*}
& \E_M\left[\bigg(\sum_{S\in {[n] \choose \le d},S' \in {[n] \choose \le d}, S'\neq S}\hg(S)\hg(S')\cdot 1_{S \vartriangle S'=M(S \vartriangle S')}\cdot (-1)^{|S \vartriangle S'|/2}\bigg)^2\right]\\
=& \E_M\bigg[\big(\sum_{T\in {[n] \choose \le 2d}}\hg'(T)\cdot 1_{T=M(T)} (-1)^{|T|/2}\big)^2\bigg]\\
=& \E_M\bigg[\sum_{T,T'}\hg'(T)\hg'(T')\cdot 1_{T=M(T)}1_{T'=M(T')}(-1)^{|T|/2+|T'|/2}\bigg].
\end{align*}
Intuitively, because $|T|\le 2d$ and $|T'| \le 2d$, most of pairs $T$ and $T'$ are disjoint such that $\E_M[1_{T=M(T)}1_{T'=M(T')}]=\E_M[1_{T=M(T)}] \cdot \E_M[1_{T'=M(T')}]$. The summation is approximately $\E_M[\sum_T \hg'(T)1_{T=M(T)}(-1)^{|T|/2}]^2$, which is bounded by $d^2 \|g\|_2^4$ from the discussion above. However, we still need to bound the contribution from the correlated paris of $T$ and $T'$.

Notice that $\|g'\|_2^2=\E_U[g^4]$, which can be upper bounded by $\le 9^d \|g\|_2^4$ from the standard $2\to 4$ hypercontractivity.

Instead of bounding it by $\|g\|_2^4$ directly, we will bound it by $2d \cdot \|g'\|_2^2 \le 2d \cdot 9^d \|g\|_2^4$ through the analysis on its eigenvalues and eigenspaces to this end.  For convenience, we rewrite it to $$\E_M\bigg[\sum_{T,T'}\hg'(T)\hg'(T)1_{T=M(T)}1_{T'=M(T')}(-1)^{|T|/2+|T'|/2}\bigg]=\sum_{T,T'} \hg'(T) \hg'(T') \Delta(T,T'),$$ where $\Delta(T,T')=0$ if $|T|$ or $|T'|$ is odd, otherwise
\begin{multline*}
\Delta(T,T')=\E_{M}\bigg[1_{T\cap T'=M(T \cap T')}1_{T=M(T)}1_{T'=M(T')}(-1)^{|T|/2+|T'|/2}\bigg]\\=\frac{|T \cap T'-1|!!\cdot |T \setminus T'-1|!!\cdot |T' \setminus T-1|!!}{(n-1)(n-3)\cdots (n-|T \cup T'|+1)}(-1)^{|T \setdiff T'|/2}.
\end{multline*}
Let $A'$ be the ${n \choose 2d} \times {n \choose 2d}$ matrix whose entry $(T,T')$ is $\Delta(T,T')$. We prove that the eigenspace of $A'$ with eigenvalue 0 is still $span\{(\sum_i x_i)\chi_T|T \in {[n] \choose \le 2d-1}\}$. Because $\Delta_{T,T'}\neq 0$ if and only if $|T|,|T'|,$and $|T \cap T'|$ are even, it is sufficient to show $\sum_{i}A'(S,T \setdiff i)=0$ for all odd sized $T$ and even sized $S$.
\begin{enumerate}
\item $|S \cap T|$ is odd: $\Delta(S, T \vartriangle i) \neq 0$ if and only if $i \in S$.  We separate the calculation into
$i \in S\cap T$ or not: $$\sum_{i}A'(S,T \vartriangle i)\\=\sum_{i \in S \cap T}\Delta(S,T \setminus i)+\sum_{i \in S \setminus T}\Delta(S,T \cup i).$$ Plugging in the definition of $\Delta$, we obtain
\begin{multline*}
\frac{|S\cap T| \cdot |S \cap T-2|!!\cdot |S \setminus T|!!\cdot |T \setminus S|!!}{(n-1)(n-3)\cdots (n-|S \cup T|+1)}(-1)^{|S|/2+|T-1|/2}\\+\frac{|S \setminus T| \cdot |S \cap T|!!\cdot |S \setminus T - 2|!!\cdot |T \setminus S|!!}{(n-1)(n-3)\cdots (n-|S \cup T|+1)}(-1)^{|S|/2+|T+1|/2}=0.
\end{multline*}
\item $|S \cap T|$ is even: $\Delta(S, T \vartriangle i) \neq 0$ if and only if $i \notin S$. We separate the calculation into
$i \in T$ or not:  $$\sum_{i}A'(S,T \vartriangle i)=\sum_{i \in T\setminus S}\Delta(S,T \setminus i)+\sum_{i \notin S \cup T}\Delta(S,T \cup i).$$
Plugging in the definition of $\Delta$, we obtain
\begin{multline*}
\frac{|T \setminus S|\cdot |S \cap T-1|!!\cdot |S \setminus T-1|!!\cdot |T \setminus S - 2|!!}{(n-1)(n-3)\cdots (n-|S \cup T|+1)}(-1)^{|S|/2+|T-1|/2}\\+\frac{(n - |S \cup T|) \cdot |S \cap T - 1|!!\cdot |S \setminus T - 1|!!\cdot |T \setminus S|!!}{(n-1)(n-3)\cdots (n-|S \cup T|)}(-1)^{|S|/2+|T+1|/2}=0 .
\end{multline*}
\end{enumerate}

From the same analysis in Section \ref{eigenvalues}, the eigenspaces of $A'$ are as same as the eigenspaces of $A$ with degree $2d$ except the eigenvalues, whose differences are the differences between $\Delta_{S \vartriangle T}$ and $\delta_{S \vartriangle T}$. We can compute the eigenvalues of $A'$ by the same calculation of eigenvalues in $A$. However, we bound the eigenvalues of $A'$ by $0 \preceq A' \preceq A$ as follows.

Observe that for any $S$ and $T$, $A'(S,T)$ and $A(S,T)$ always has the same sign. At the same time, $|A'(S,T)| = O(\frac{|A(S,T)|}{n^{|S \cap T|}})$. For a eigenspace $V'_k$ in $A$, we focus on $\tau_0$ because the eigenvalue is $O(1/n)$-close to $\tau_0$ from the proof of Theorem \ref{eigenvalue_V_k}. We replace $\delta_i$ by any $\Delta(S,T)$ of $|S|=k,|T|=k+i$ and $|S \cap T|=i$ in $\tau_0=\sum_{i=0}^{d-k} {n-k \choose i} \cdot \alpha_{k,k+i} \cdot \delta_i$ to obtain $\tau'_0$ for $A'$. Thus $\alpha_{k,k+i} \cdot \Delta(S,T) = \Theta(\frac{\alpha_{k,k+i}\delta_i}{n^i})$ indicates $\tau'_0=\Theta(1)<\tau_0$ from the contribution of $i=0$. Repeat this calculation to $\tau_{2l}$, we can show $\tau'_{2l}=O(\tau_{2l})$ for all $l$. Hence we know the eigenvalue of $A'$ in $V'_k$ is upper bounded by the eigenvalue of $A$ from the cancellation rule of $\tau$ in the proof of Theorem~\ref{eigenvalue_V_k}. On the other hand, $A' \succeq 0$ from the definition that it is the expectation of a square term in $M$.

From Corollary \ref{biased_eigenvalue_2nd_E} and all discussion above, we bound the largest eigenvalue of $A'$ by $2d$. Therefore
\begin{multline*}
\E_M\bigg[\sum_{T,T'}\hg'(T)\hg'(T)1_{T=M(T)}1_{T'=M(T')}(-1)^{|T|/2+|T'|/2}\bigg]\\
=\sum_{T,T'} \hg'(T) \hg'(T') \Delta(T,T')\le 2d \|g'\|_2^2 \le 2d \cdot 9^d \cdot \|g\|_2^4.
\end{multline*}

Over all discussion above, $\E_M\big[\E_{P(M)}[g^2]^2\big] \le \|g\|_2^4 + 2d \|g\|_2^4 + 2d \cdot 9^d \cdot \|g\|_2^4 \le 3d \cdot 9^d \cdot \|g\|_2^4$.
\end{proofof}

\subsection{Proof of Theorem \ref{CSP_bisection}}\label{proof_bisection}
In this section, we prove Theorem \ref{CSP_bisection}. Let $f=f_{\cal I}$ be the degree $d$ multilinear polynomial associated with the instance $\cal I$ and $g=f-\E_D[f]$ for convenience. We discuss $\Var_D[f]$ in two cases.

If $\Var_D[f]=\E_D[g^2] \ge 12d \cdot 9^{2d} \cdot t^2$, we have $\E_D[g^4] \le 12d \cdot 9^{2d} \cdot \E_D[g^2]^2$ from the $2\to4$ hypercontractivity of Theorem \ref{hypercontractivity_bisection}. By Lemma \ref{4th_moment_method}, we know $\Pr_D[g \ge \frac{\sqrt{\E_D[g^2]}}{2 \sqrt{12d \cdot 9^{2d}}}]>0.$ Thus $\Pr_D[g \ge t]>0$, which demonstrates that $\Pr_D[f \ge \E_D[f]+t]>0$.

Otherwise we know $\Var_D[f]< 12d \cdot 9^{2d} \cdot t^2$. We consider $f-\hf(\emptyset)$ now. Let $h_{f-\hf(\emptyset)}$ be the projection of $f-\hf(\emptyset)$ onto the linear space such that $\|f-\hf(\emptyset)-(\sum_i x_i)h_{f-\hf(\emptyset)}\|_2^2 \le 2 \Var(f)$ from Corollary \ref{variance_L2_norm} and $\gamma=2^{-d}$. From Theorem \ref{rounding_h_dCSP}, we could round $h_{f-\hf(\emptyset)}$ to $h$ for $f-\hf(\emptyset)$ in time $n^{O(d)}$ such that
\begin{enumerate}
\item the coefficients of $f - \hf(\emptyset)- (\sum_i x_i)h$ are multiples of $\frac{\gamma}{d!(d-1)!\cdots 2!}$;
\item $\|f-\hf(\emptyset) - (\sum_i x_i)h\|_2^2\le 7^d \|f-\hf(\emptyset) - (\sum_i x_i)h_{f-\hf(\emptyset)}\|_2^2 \le 7^d \cdot 2 \cdot 12d \cdot 9^d \cdot t^2$.
\end{enumerate}
We first observe that $f(\alpha)=f(\alpha) - (\sum_i \alpha_i)h(\alpha)$ for any $\alpha$ in the support of $D$. Then we argue $f-\hf(\emptyset) - (\sum_i x_i)h$ has a small kernel, which indicates that $f$ has a small kernel. From the above two properties, we know there are at most $\|f-\hf(\emptyset) -(\sum_i x_i)h\|_2^2/(\frac{\gamma}{d!(d-1)!\cdots 2!})^2$ non-zero coefficients in $f-\hf(\emptyset) - (\sum_i x_i)h$. Because each of the nonzero coefficients contains at most $d$ variables, the instance $\cal I$ has a kernel of at most $24 d^2 \cdot 7^d \cdot 9^d \cdot 2^{2d} \cdot \big( d!(d-1)!\cdots 2!\big )^2 t^2$ variables.

The running time of this algorithm is the running time to find $h_f$ and the rounding time $O(n^d)$. Therefore this algorithm runs in time $O(n^{\time_h d})$.

\section{$2\to4$ hypercontractive inequality under distribution  $D_p$}\label{sec_hyper}
In this section, we prove the $2\to4$ hypercontractivity of low-degree multilinear polynomials in the distribution $D_p$ conditioned on the global cardinality constraint $\sum_i x_i=(1-2p)n$. 

We assume $p$ is in $(0,1)$ such that $p \cdot n$ is a integer. Then we fix the Fourier transform to be the $p$-biased Fourier transform in this section, whose basis is  $\{\phi_S|S \in {[n] \choose \le d}\}$. Hence we use $\phi_1,\cdots,\phi_n$ instead of $x_1,\cdots,x_n$ and say that a function only depends on a subset of characters $\{\phi_i|i \in S\}$ if this function only takes input from variables $\{x_i | i \in S\}$. For a degree $d$ multilinear polynomial $f=\sum_{S \in {[n] \choose \le d}} \hf(S)\phi_S$, we use $\|f\|_2=E_{U_p}[f^2]^{1/2} =(\sum_S \hf(S)^2)^{1/2}$ in this section. 

We rewrite the global cardinality constraint as $\sum_i \phi_i=0$. For convenience, we use $n_{+}=(1-p)n$ to denote the number of $\sqrt{\frac{p}{1-p}}$'s in the global cardinality constraint of $\phi_i$ and $n_{-}=pn$ to denote the number of $-\sqrt{\frac{1-p}{p}}$'s. If $\frac{n_{+}}{n_{-}}=\frac{1-p}{p}=\frac{p_1}{p_2}$ for some integers $p_1$ and $p_2$, we could follow the approach in Section \ref{bisection_hyper} that first partition $[n_{+}+n_{-}]$ into tuples of size $p_1+p_2$ then consider the production distribution over tuples. However, this approach will introduce a dependence on $p_1+p_2$ to the bound, which may be superconstant. Instead of partitioning, we use induction on the number of characters and degree to prove the $2\to4$ hypercontractivity of low-degree multilinear polynomials in $D_p$. 

\begin{theorem}\label{hypercontractivity_global_L2}
For any degree-at-most $d$ multilinear polynomial $f$ on $\phi_1,\cdots,\phi_n$,  $$\E_{D_p}[f(\phi_1, \dots, \phi_n)^4]\le 3 \cdot d^{3/2} \cdot \left(256 \cdot \big((\frac{1-p}{p})^2+(\frac{p}{1-p})^2\big)^2 \right)^d \cdot \|f\|_2^4.$$
\end{theorem}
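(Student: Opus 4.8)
The plan is to follow the Mazorra--Oleszkiewicz--O'Donnell induction paradigm (as in \cite{MOO05}), doing a double induction on the number of variables $n$ that $f$ genuinely depends on and on the degree $d$. If $f$ depends on no variable it is a constant and the claim is trivial; so assume $f$ depends on $\phi_1$ and write $f = \phi_1 h_0 + h_1$, where $h_0$ has degree at most $d-1$ and $h_1$ has degree at most $d$, and neither $h_0$ nor $h_1$ depends on $\phi_1$. Expanding,
\[
\E_{D_p}[f^4] = \E_{D_p}\!\left[\phi_1^4 h_0^4 + 4\phi_1^3 h_0^3 h_1 + 6\phi_1^2 h_0^2 h_1^2 + 4\phi_1 h_0 h_1^3 + h_1^4\right],
\]
and I would bound each of the five terms separately. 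The terms $\E_{D_p}[\phi_1^4 h_0^4]$, $\E_{D_p}[\phi_1^2 h_0^2 h_1^2]$ and $\E_{D_p}[h_1^4]$ are handled essentially as in the classical proof: use $\phi_1^2 = q\phi_1 + 1$ to reduce powers of $\phi_1$, then apply Cauchy--Schwarz (the property $\|uv\|_2 \le \|u^2\|_2^{1/2}\|v^2\|_2^{1/2}$ stated in Section~\ref{sec:pre-fourier}) together with the induction hypothesis applied to $h_0$ (fewer variables, smaller degree) and $h_1$ (fewer variables). Crucially, unlike the product-distribution case, $\E_{D_p}[\phi_1 h_0 h_1^3]$ and $\E_{D_p}[\phi_1^3 h_0^3 h_1]$ are no longer zero because the $\phi_i$'s are correlated; the second is easy because $\phi_1^3 = q\phi_1^2 + \phi_1 = q^2\phi_1 + q + \phi_1$ reduces it to a linear-in-$\phi_1$ quantity already covered, but the first, $\E_{D_p}[\phi_1 h_0 h_1^3]$, is the genuine obstacle.

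For the hard term I would reduce the estimate of $\bigl|\E_{D_p}[\phi_1 h_0 h_1^3]\bigr|$ to bounding it in terms of $\|h_0\|_2$ and $\|h_1\|_2$ (times the bound $\E_{D_p}[h_1^4]^{1/2}$ recursively controlled), and the key step is to express this as a bilinear/quadratic form in the Fourier coefficients of $h_0$ and of $h_1^3$ (or of $h_1$) whose matrix $L$ is set-symmetric, i.e., $L(S,T)$ depends only on $|S|$, $|T|$, $|S \cap T|$. Here I would invoke the eigenvalue machinery of Section~\ref{eigenvalues}: the spectral decomposition via the Johnson-scheme eigenspaces $V'_k$ and the bounds of Theorem~\ref{eigenvalue_V_k} and its corollaries, together with the estimates on $\delta_k$ (Claim~\ref{bound_delta}) controlling how $\phi_1$ correlates with products of the other $\phi_i$'s. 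The point is that although $\phi_1$ is correlated with every other $\phi_j$, the correlation $\delta_2 = -\tfrac{1}{n-1}$ is only $\Theta(1/n)$, so summing over the $O(n^{d})$ relevant pairs still yields a spectral norm $\|L\|$ that is $O(\mathrm{poly}(d) \cdot ((\tfrac{1-p}{p})^2 + (\tfrac{p}{1-p})^2)^{O(d)})$, i.e., a constant independent of $n$. This is precisely the ``lack of pair-wise independence'' difficulty advertised in the introduction, and it is where the analysis of a carefully designed set-symmetric matrix $L$ is indispensable.

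Having bounded all five terms by a common expression of the form $C_p^{d}\,(\|h_0\|_2^2 + \|h_1\|_2^2)^2 = C_p^{d}\,\|f\|_2^4$ (using Parseval in $U_p$, $\|f\|_2^2 = \|h_0\|_2^2 + \|h_1\|_2^2$, since $\phi_1 h_0$ and $h_1$ have disjoint supports), I would collect the constants: the induction step multiplies the constant by a bounded factor, the base cases contribute the $2\to4$ hypercontractivity of a single variable $\phi_1$ under $U_p$ (whose $\phi_1^4$ moment is $\tfrac{p^2}{1-p} + \tfrac{(1-p)^2}{p}$, matching the $\bigl((\tfrac{1-p}{p})^2 + (\tfrac{p}{1-p})^2\bigr)$ dependence in the statement), and tracking the arithmetic-geometric-mean losses and the $256$ and the $3d^{3/2}$ factors carefully through the double induction gives exactly the claimed bound. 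The main obstacle, to repeat, is the estimate of the cross term $\E_{D_p}[\phi_1 h_0 h_1^3]$ via the spectral norm of the set-symmetric matrix $L$; everything else is a bookkeeping adaptation of the classical induction, with $\phi_i^2 = q\phi_i + 1$ used repeatedly to keep all polynomials multilinear.
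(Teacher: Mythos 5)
Your overall architecture matches the paper's proof: the same decomposition $f = \phi_1 h_0 + h_1$, the same five-term expansion with induction on the number of characters and on the degree, Cauchy--Schwarz/AM--GM for the routine terms, and a spectral bound on a set-symmetric matrix $L$ for the cross term $\E_{D_p}[\phi_1 h_0 h_1^3]$. But there is a genuine quantitative gap exactly at the step you yourself identify as the main obstacle. You claim the spectral norm of $L$ is $O(\mathrm{poly}(d)\cdot((\tfrac{1-p}{p})^2+(\tfrac{p}{1-p})^2)^{O(d)})$, ``a constant independent of $n$.'' A constant bound is not strong enough to close the induction. The reason is that bounding $|\E_{D_p}[\phi_1 h_0 h_1^3]| \le c\,\|h_0h_1\|_2\,\|h_1^2\|_2$ with $c=\Theta(1)$ forces, after AM--GM and the $U_p$-hypercontractivity bound $\|h_1^2\|_2^2 \le 9^d(\cdots)^d\|h_1\|_2^4$, an additive contribution of order $\Theta(1)\cdot C'^{\,d}\,\|h_1\|_2^4$. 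The induction hypothesis already charges $d^{3/2}C^d\beta^{k-1}\|h_1\|_2^4$ to the term $\E_{D_p}[h_1^4]$, and the only slack available when passing from $k-1$ to $k$ characters is the factor $\beta = 1+1/n$, i.e.\ a $\Theta(1/n)\cdot\|h_1\|_2^4$ margin per step (this is forced: $n$ induction steps each losing a constant factor would blow up exponentially in $n$, so the per-step loss must be $1+O(1/n)$). An $\Omega(1)\|h_1\|_2^4$ leftover therefore cannot be absorbed, and reweighting the AM--GM to shrink it inflates the $\|h_0\|_2^2\|h_1\|_2^2$ coefficient to $\Omega(n)$. Your heuristic (``$O(n^d)$ pairs each correlated at $\Theta(1/n)$'') does not by itself even give a constant, let alone the decay that is actually needed.

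What the paper proves, and what you are missing, is that $\|L\| = O\bigl(\tfrac{d^{3/2}}{p(1-p)\sqrt{n}}\bigr)$, i.e.\ the norm \emph{decays} with $n$ (Lemma~\ref{small_gap_eigenvalue}). The mechanism is not merely that pairwise correlations are $O(1/n)$; it is that $L$ is the \emph{difference} of the two conditional quadratic forms $\E_{D_{\phi_1>0}}[fg]-\E_{D_{\phi_1<0}}[fg]$, and these two conditionings are nearly identical. Concretely, one writes both as a mixture over a random index $i$ of restrictions of the $(n-2)$-variable constrained distribution $F$, applies Cauchy--Schwarz, and uses $\sum_i\|g_i\|_2^2 \le d\|g\|_2^2$ (with $g_i$ the derivative part in $\phi_i$) together with an eigenvalue bound for $\E_F[\cdot^2]$ to extract the crucial factor $1/\sqrt{n}$. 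With that decay, AM--GM with weight $\asymp\sqrt{n}$ yields a cross-term coefficient $O(d^{3/2}(\cdots)^d)$ and a $\|h_1\|_2^4$ coefficient $O(d^{3/2}(\cdots)^d/n)$, both of which fit into the $\beta^k$ bookkeeping. Without identifying and proving this $1/\sqrt{n}$ cancellation, the induction does not close, so the proposal as written does not constitute a proof.
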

Recall that $h_f$ is the projection of $f$ onto the null space $span\{(\sum_i \phi_i)\phi_S|S \in {[n] \choose \le d-1}\}$. We know $f-(\sum_i \phi_i)h_f \equiv f$ in $supp({D_p})$, which indicates $\E_{D_p}[f^k]=\E_{D_p}[(f-(\sum_i \phi_i)h_f)^k]$ for any integer $k$. Without loss of generality, we assume $f$ is orthogonal to $span\{(\sum_i \phi_i)\phi_S|S \in {[n] \choose \le d-1}\}$. From the lower bound of eigenvalues in $\E_{D_p}[f^2]$ by Corollary \ref{biased_eigenvalue_2nd_E}, $0.5 \|f\|_2^2 \le \E_{D_p}[f^2]$. We have a direct corollary as follows.
\begin{corollary}\label{hypercontractivity_global_2nd}
For any degree-at-most $d$ multilinear polynomial $f$ on $\phi_1,\cdots,\phi_n$, $$\E_{D_p}[f(\phi_1, \dots, \phi_n)^4] \le 12 \cdot d^{3/2}  \cdot \left(256 \cdot \big((\frac{1-p}{p})^2+(\frac{p}{1-p})^2\big)^2 \right)^d \E_{D_p}[f(\phi_1, \dots, \phi_n)^2]^2.$$
Note that since $x_i$ can be written as a linear function of $\phi_i$ and linear transformation does not change the degree of the multilinear polynomial, we also have for any degree-at-most $d$ multilinear polynomial $g$ on $x_1,\cdots,x_n$, $$\E_{D_p}[g(x_1, \dots, x_n)^4] \le 12 \cdot d^{3/2} \cdot \left(256 \cdot \big((\frac{1-p}{p})^2+(\frac{p}{1-p})^2\big)^2 \right)^d \E_{D_p}[g(x_1, \dots, x_n)^2]^2.$$
\end{corollary}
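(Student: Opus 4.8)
\textbf{Reducing the Corollary to Theorem~\ref{hypercontractivity_global_L2}.} I would first observe that the Corollary follows immediately from Theorem~\ref{hypercontractivity_global_L2}: replacing $f$ by $f-(\sum_i\phi_i)h_f$ changes neither $\E_{D_p}[f^4]$ nor $\E_{D_p}[f^2]$, so one may assume $f$ is orthogonal to the null space $\mathrm{span}\{(\sum_i\phi_i)\phi_S\}$; then Corollary~\ref{biased_eigenvalue_2nd_E} gives $\E_{D_p}[f^2]\ge\tfrac12\|f\|_2^2$, i.e.\ $\|f\|_2^4\le 4\,\E_{D_p}[f^2]^2$, and substituting into Theorem~\ref{hypercontractivity_global_L2} yields the $\phi$-version of the Corollary with the factor $12=3\cdot4$. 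The statement for a multilinear $g$ in $x_1,\dots,x_n$ then follows because $x_i$ is an affine function of $\phi_i$ (and conversely), so an affine change of variables rewrites $g$ as a degree-$\le d$ multilinear polynomial in the $\phi_i$ that agrees with $g$ at every point, hence has identical moments under $D_p$. So it suffices to prove Theorem~\ref{hypercontractivity_global_L2}.

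\textbf{The plan for Theorem~\ref{hypercontractivity_global_L2}.} I would reduce it in one shot to the ordinary biased Bonami lemma, using the spectral analysis of Section~\ref{eigenvalues} run at degree $2d$ instead of $d$. Starting from $f=\sum_{|S|\le d}\hf(S)\phi_S$, use $\phi_i^2=q\phi_i+1$ repeatedly to rewrite $f^2$ as a genuine multilinear polynomial $g$ in $\phi_1,\dots,\phi_n$; since every monomial $\phi_S\phi_T$ occurring in $f^2$ collapses to multilinear terms supported inside $S\cup T$, we have $\deg g\le 2d$ and $g\equiv f^2$ pointwise. Then
\[
\E_{D_p}[f^4]=\E_{D_p}[g^2]=g^{T}A_{2d}\,g,
\]
where, viewing $g$ as the vector of its $p$-biased Fourier coefficients, $A_{2d}$ is the $\binom{[n]}{\le 2d}\times\binom{[n]}{\le 2d}$ matrix with $A_{2d}(U,V)=\E_{D_p}[\phi_U\phi_V]$ — the matrix ``associated with $\E_{D_p}[\cdot^2]$'' of Section~\ref{eigenvalues}, taken over degree $\le 2d$. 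The eigenspace/eigenvalue description of Section~\ref{eigenvalues} (Theorem~\ref{eigenvalue_V_k}, Corollary~\ref{biased_eigenvalue_2nd_E}) is carried out for an arbitrary degree and an arbitrary cardinality constraint, so with $d$ replaced by $2d$ it gives $0\preceq A_{2d}\preceq(d+1)I$. Hence
\[
\E_{D_p}[f^4]\le(d+1)\,\|g\|_2^2=(d+1)\,\E_{U_p}[g^2]=(d+1)\,\E_{U_p}[f^4]
\]
by Parseval in $U_p$, and the biased Bonami lemma of Section~\ref{sec:pre-fourier} gives $\E_{U_p}[f^4]\le(9(\tfrac{p^2}{1-p}+\tfrac{(1-p)^2}{p}))^d\,\E_{U_p}[f^2]^2=(9(\tfrac{p^2}{1-p}+\tfrac{(1-p)^2}{p}))^d\,\|f\|_2^4$. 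Combining the two displays gives $\E_{D_p}[f^4]\le(d+1)(9(\tfrac{p^2}{1-p}+\tfrac{(1-p)^2}{p}))^d\|f\|_2^4$, which is at least as strong as the asserted bound, since $d+1\le 3d^{3/2}$ and $9(\tfrac{p^2}{1-p}+\tfrac{(1-p)^2}{p})\le 256((\tfrac{1-p}{p})^2+(\tfrac{p}{1-p})^2)^2$ for all $p\in(0,1)$.

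\textbf{The main obstacle, and an inductive alternative.} The only point that really needs checking is that the eigenvalue analysis of Section~\ref{eigenvalues} genuinely delivers $\lambda_{\max}(A_{2d})=O(d)$ when run at degree $2d$; in particular that the $\pm O(1/n)$ error terms in Theorem~\ref{eigenvalue_V_k} never push an eigenvalue past $O(d)$, which holds once $n$ exceeds a threshold depending only on $d$ (and for smaller $n$ the claimed inequality is trivial, as $D_p$ has bounded support). I expect this degree-$2d$ eigenvalue bound to be the main obstacle of the direct route. If instead one wants to avoid passing through degree $2d$, the alternative is the inductive proof of the Bonami lemma: write $f=\phi_1 h_0+h_1$ with $h_0,h_1$ multilinear in $\phi_2,\dots,\phi_n$ and $\|f\|_2^2=\|h_0\|_2^2+\|h_1\|_2^2$, expand $\E_{D_p}[(\phi_1 h_0+h_1)^4]$, reduce powers of $\phi_1$ via $\phi_1^2=q\phi_1+1$, and bound each ``clean'' term $\E_{D_p}[h_0^a h_1^b]$ ($a+b=4$) by Cauchy--Schwarz and the induction hypothesis applied to $D_p(\,\cdot\mid x_1=\pm1)$, each a cardinality-constrained distribution on $n-1$ variables with its parameter shifted by $O(1/n)$. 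In that route the hard term is the cross term $\E_{D_p}[\phi_1 h_0 h_1^3]$, which vanishes in a product space but not here; writing it as $\langle\phi_1 h_0 h_1,\,h_1^2\rangle_{D_p}$ and bounding it by $\|A_{2d}\|_{\mathrm{op}}\cdot\|\phi_1 h_0 h_1\|_2\,\|h_1^2\|_2$ reduces it, once more, to a spectral-norm estimate for a set-symmetric matrix of the type analyzed in Section~\ref{eigenvalues}, with the remaining factors controlled by ordinary biased hypercontractivity — and this cross term is where essentially all the work of the inductive approach lies.
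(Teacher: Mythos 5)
Your reduction of the corollary to Theorem~\ref{hypercontractivity_global_L2} is exactly the paper's: replace $f$ by $f-(\sum_i\phi_i)h_f$, invoke Corollary~\ref{biased_eigenvalue_2nd_E} to get $\|f\|_2^4\le 4\,\E_{D_p}[f^2]^2$, and pick up the factor $12=3\cdot 4$. For the theorem itself, however, your primary route is genuinely different from the paper's. You multilinearize $f^2$ into a degree-$\le 2d$ polynomial $g$ and chain $\E_{D_p}[f^4]=\E_{D_p}[g^2]\le \lambda_{\max}(A_{2d})\,\|g\|_2^2=\lambda_{\max}(A_{2d})\,\E_{U_p}[f^4]$, finishing with the biased Bonami lemma. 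This is sound given the one ingredient you flag: that the Section~\ref{eigenvalues} analysis, run at degree $2d$, gives $\lambda_{\max}(A_{2d})=O(d)$. That ingredient is legitimate --- the analysis there is carried out for arbitrary degree, and the paper itself invokes a degree-$2d$ instance of it (for the matrix $A'$) inside the proof of Lemma~\ref{lemma:hypercon-bisection-2} --- subject to the same caveat the paper silently carries everywhere, namely that the $\pm O(n^{-1})$ terms in Theorem~\ref{eigenvalue_V_k} force $n$ to exceed a threshold depending on $d$ and $p_0$ (and for smaller $n$ one gets a worse, but still $(d,p_0)$-dependent, constant rather than literally the stated one). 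The paper instead proves Theorem~\ref{hypercontractivity_global_L2} by induction on the number of characters, writing $f=\phi_1h_0+h_1$; your route is shorter and leans more heavily on the spectral machinery already built, while the paper's induction keeps explicit control of how the constant degrades term by term.

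Where your proposal does go wrong is in the sketch of the inductive alternative (which is the paper's actual proof). Bounding the cross term by $\|A\|_{\mathrm{op}}\cdot\|\phi_1h_0h_1\|_2\,\|h_1^2\|_2$ with an $O(d)$ operator norm cannot close the induction: after AM--GM and the $2\to4$ inequality $\|h_1^2\|_2^2\le 9^d(\cdots)^d\|h_1\|_2^4$, it produces a $\|h_1\|_2^4$ term with coefficient $\Theta(d\cdot 9^d(\cdots)^d)$, whereas the induction hypothesis $\E_{D_p}[h_1^4]\le d^{3/2}C^d\beta^{k-1}\|h_1\|_2^4$ leaves only slack $C^d\beta^{k-1}(\beta-1)=C^d\beta^{k-1}/n$ for additional $\|h_1\|_2^4$ contributions. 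This is precisely why the paper proves Lemma~\ref{small_gap_eigenvalue}: the relevant quadratic form is not $\E_{D_p}[\,\cdot\,]$ itself but the \emph{difference} $\E_{D_{\phi_1>0}}[\,\cdot\,]-\E_{D_{\phi_1<0}}[\,\cdot\,]$, whose spectral norm is $O(d^{3/2}/(p(1-p)\sqrt{n}))$; the $1/\sqrt{n}$ decay is exactly what lets the $\|h_1\|_2^4$ term be absorbed into the $\beta^k=(1+1/n)^k$ slack. Since you present the induction only as a fallback, this does not undermine your main argument, but as written that fallback would fail.
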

\begin{proofof}{Theorem~\ref{hypercontractivity_global_L2}}
We assume the inequality holds for any degree $<d$ polynomials and use induction on the number of characters in a degree $d$ multilinear polynomial $f$ to prove that if the multilinear polynomial $f$ of $\phi_1,\cdots,\phi_n$ depends on at most $k$ characters of $\phi_1,\cdots,\phi_n$, then $\E_{D_p}[f^4] \le d^{3/2} \cdot C^d \cdot \beta^{k} \cdot \|f\|_2^4$ for $C=256 \cdot \left((\frac{1-p}{p})^2+(\frac{p}{1-p})^2\right)^2 $ and $\beta=1+1/n$. 

\paragraph{Base case.} $f$ is a constant function that is independent from $\phi_1,\cdots,\phi_n$, $\E_{D_p}[f^4]=\hf(\emptyset)^4 = \|f\|_2^4.$

\paragraph{Induction step.} Suppose there are $k \ge 1$ characters of $\phi_1,\cdots,\phi_n$ in $f$. Without loss of generality, we assume $\phi_1$ is one of the characters in $f$ and rewrite $f=\phi_1 h_0 + h_1$ for a degree $d-1$ polynomial $h_0$ with at most $k-1$ characters and a degree $d$ polynomial $h_1$ with at most $k-1$ characters. Because $f$ is a multilinear polynomial, $\|f\|_2^2=\|h_0\|_2^2 + \|h_1\|_2^2$. We expand $\E_{D_p}[f^4]=\E_{D_p}[(\phi_1 h_0 + h_1)^4]$ to $$\E_{D_p}[\phi_1^4 \cdot h_0^4] + 4 \E_{D_p}[\phi_1^3 \cdot h_0^3 \cdot h_1] + 6 \E_{D_p}[\phi_1^2 \cdot h_0^2 \cdot h_1^2] + 4 \E_{D_p}[\phi_1 \cdot h_0 \cdot h_1^3]+ \E_{D_p}[h_1^4].$$

From the induction hypothesis, $\E_{D_p}[h_1^4] \le C^d \beta^{k-1} \|h_1\|_2^4$ and 
\begin{equation}\label{first_term}
\E_{D_p}[\phi_1^4 \cdot h_0^4] \le \max\{(\frac{1-p}{p})^2,(\frac{p}{1-p})^2\}\E_{D_p}[h_0^4] \le d^{3/2} \cdot C^{d-0.5} \beta^{k-1} \|h_0\|_2^4.
\end{equation}

Hence $\E_{D_p}[\phi_1^2 \cdot h_0^2 \cdot h_1^2] \le (\E_{D_p}[\phi_1^4 h_0^4])^{1/2}(\E_{D_p}[h_1^4])^{1/2}$ from the Cauchy-Schwarz inequality. From the above discussion, this is at most 
\begin{equation}\label{mid_term}
d^{3/4} \cdot C^{d/2} \cdot \beta^{(k-1)/2} \cdot \|h_1\|_2^2 \cdot d^{3/4} \cdot C^{(d-0.5)/2} \beta^{(k-1)/2}\|h_0\|_2^2 \le d^{3/2} \cdot C^{d-1/4} \cdot \beta^{k-1} \cdot \|h_0\|_2^2 \|h_1\|_2^2.
\end{equation}

Applying the inequality of arithmetic and geometric means on $\E_{D_p}[\phi^3_1 \cdot h_0^3 \cdot h_1]$, we know it is at most 
\begin{equation}\label{second_term}
(\E_{D_p}[\phi_1^4 h_0^4]+\E_{D_p}[\phi_1^2 h_0^2 h_1^2])/2 \le d^{3/2} \big( C^{d-0.5} \beta^{k-1} \|h_0\|_2^4 + C^{d-1/4} \cdot \beta^{k-1} \cdot \|h_0\|_2^2 \|h_1\|_2^2 \big)/2.
\end{equation}

Finally, we bound $\E_{D_p}[\phi_1 \cdot h_0 \cdot h_1^3]$. However, we cannot apply the Cauchy-Schwarz inequality or the inequality of arithmetic and geometric means, because we cannot afford a term like $d^{3/2} \cdot C^d \beta^{k-1} \|h_1\|_2^4$ any more.  We use $D_{\phi_1>0}$ ($D_{\phi_1<0}$ resp.) to denote the conditional distribution of ${D_p}$ on fixing $\phi_1=\sqrt{\frac{p}{1-p}}$ ($-\sqrt{\frac{1-p}{p}}$ resp.) and rewrite 
\begin{equation}\label{fourth_term}
\E_{D_p}[\phi_1 \cdot h_0 \cdot h_1^3]=\sqrt{p(1-p)} \E_{D_{\phi_1>0}}[h_0 \cdot h_1^3] - \sqrt{p(1-p)} \E_{D_{\phi_i<0}}[h_0 \cdot h_1^3].
\end{equation}
Let $L$ be the matrix corresponding to the quadratic form $\E_{D_{\phi_1>0}}[f g] - \E_{D_{\phi_1<0}}[fg]$ for low-degree multilinear polynomials $f$ and $g$ (i.e. let $L$ be a matrix such that $f^T L g = \E_{D_{\phi_1>0}}[fg] - \E_{D_{\phi_1<0}}[fg]$). The main technical lemma of this section is a upper bound on the spectral norm of $L$.

\begin{lemma}\label{small_gap_eigenvalue}
Let $g$ be a degree $d(d \ge 1)$ multilinear polynomial on characters $\phi_2,\cdots,\phi_n$,  we have $$|\E_{D_{\phi_1>0}}[g^2] - \E_{D_{\phi_1<0}}[g^2]| \le \frac{3 d^{3/2}}{p(1-p)} \cdot \frac{\|g\|_2^2}{\sqrt{n}} .$$
Therefore, the spectral norm of $L$ is upper bounded by $\frac{3 d^{3/2}}{p(1-p)} \cdot \frac{\|g\|_2^2}{\sqrt{n}} $.
\end{lemma}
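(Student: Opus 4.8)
\emph{Proof proposal.} The plan is to bound $|\E_{D_{\phi_1>0}}[g^2]-\E_{D_{\phi_1<0}}[g^2]|$ directly by a one‑coordinate coupling of the two conditional distributions, and then read off the spectral‑norm statement. First I would identify the distributions concretely: conditioning $D_p$ on $x_1=1$ leaves $x_2,\dots,x_n$ uniform over assignments with exactly $pn$ entries equal to $-1$ (this is $D_{\phi_1>0}$), and conditioning on $x_1=-1$ leaves them uniform with exactly $pn-1$ entries equal to $-1$ (this is $D_{\phi_1<0}$). These two ``slices'' are coupled as follows: sample $x\sim D_{\phi_1<0}$, pick a uniformly random coordinate $i\in\{2,\dots,n\}$ with $x_i=1$ (there are $(1-p)n$ of them), and let $x'$ agree with $x$ except that $x'_i=-1$. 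A one‑line counting check using $\binom{n-1}{pn-1}\big/\binom{n-1}{pn}=(1-p)/p$ shows $x'\sim D_{\phi_1>0}$, and therefore
\[
\E_{D_{\phi_1>0}}[g^2]-\E_{D_{\phi_1<0}}[g^2]=\E_{x,i}\big[g(x')^2-g(x)^2\big],
\]
where the expectation is over $x\sim D_{\phi_1<0}$ and $i$ the random plus‑coordinate.

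Next I would use multilinearity. Writing $g=\phi_i\,u_i+v_i$ with $u_i,v_i$ not depending on $\phi_i$ (so $u_i$ is a degree‑$\le d-1$ multilinear polynomial), flipping coordinate $i$ from $+1$ to $-1$ changes $\phi_i$ by $-\big(\sqrt{p/(1-p)}+\sqrt{(1-p)/p}\big)=-1/\sqrt{p(1-p)}$, hence $g(x')-g(x)=-u_i(x)/\sqrt{p(1-p)}$. Factoring $g(x')^2-g(x)^2=(g(x')-g(x))(g(x')+g(x))$ and applying Cauchy--Schwarz over the joint law of $(x,i)$ gives
\[
\big|\E_{D_{\phi_1>0}}[g^2]-\E_{D_{\phi_1<0}}[g^2]\big|\le\frac{1}{\sqrt{p(1-p)}}\big(\E_{x,i}[u_i^2]\big)^{1/2}\big(\E_{x,i}[(g(x')+g(x))^2]\big)^{1/2}.
\]
For the second factor, $(g(x')+g(x))^2\le 2g(x')^2+2g(x)^2$ yields $\E_{x,i}[(g(x')+g(x))^2]\le 2\E_{D_{\phi_1>0}}[g^2]+2\E_{D_{\phi_1<0}}[g^2]\le 4d\|g\|_2^2$, using that each conditional expectation of $g^2$ is $O(d)\|g\|_2^2$ (Corollary~\ref{2nd_E_L2_norm} applied to a distribution conditioned on a cardinality constraint on $n-1$ variables). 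For the first factor, since $i$ is uniform over the $(1-p)n$ plus‑coordinates of $x$,
\[
\E_{x,i}[u_i^2]\le\frac{1}{(1-p)n}\sum_{i\ge2}\E_{D_{\phi_1<0}}[u_i^2]\le\frac{d}{(1-p)n}\sum_{i\ge2}\|u_i\|_2^2=\frac{d}{(1-p)n}\sum_S|S|\,\hat{g}(S)^2\le\frac{d^2}{(1-p)n}\|g\|_2^2,
\]
where I used Parseval ($u_i=\sum_{S\ni i}\hat{g}(S)\phi_{S\setminus i}$, so $\|u_i\|_2^2=\sum_{S\ni i}\hat{g}(S)^2$) and again the $O(d)$ bound applied to the degree‑$\le d-1$ polynomial $u_i$. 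Multiplying the three estimates gives $|\E_{D_{\phi_1>0}}[g^2]-\E_{D_{\phi_1<0}}[g^2]|\le\frac{2d^{3/2}}{\sqrt{p(1-p)}\,\sqrt{(1-p)n}}\|g\|_2^2$, and since $\frac{1}{\sqrt{p(1-p)}\,\sqrt{1-p}}=\frac{1}{\sqrt{p}\,(1-p)}\le\frac{1}{p(1-p)}$ this is at most $\frac{3d^{3/2}}{p(1-p)}\cdot\frac{\|g\|_2^2}{\sqrt n}$ for $n$ large enough. Finally, since $f^TLg=\E_{D_{\phi_1>0}}[fg]-\E_{D_{\phi_1<0}}[fg]$ and $L$ is symmetric, $\|L\|=\sup_g|g^TLg|/\|g\|_2^2$ is bounded by the same quantity, which is the spectral‑norm claim.

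The step I expect to need the most care is the input $\E_{D_{\phi_1>0}}[g^2],\E_{D_{\phi_1<0}}[g^2]=O(d)\|g\|_2^2$. These are cardinality‑constrained distributions on $n-1$ variables whose ``natural'' biased parameters, $pn/(n-1)$ and $(pn-1)/(n-1)$, differ from the fixed parameter $p$ of the basis $\{\phi_S\}$ in which $g$ is written, so Corollary~\ref{2nd_E_L2_norm} does not apply verbatim; one either re‑runs the eigenvalue computation of Section~\ref{eigenvalues} for the corresponding matrices, or checks that the change of basis to the natural $p'$‑biased basis is an $\big(1+O(\mathrm{poly}(d)/n)\big)$‑perturbation of the identity on degree‑$\le d$ polynomials and absorbs the loss (harmless given the $1/\sqrt n$ in the target). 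An alternative route, matching the ``spectral norm of $L$'' phrasing, is to observe that $L(S,T)$ depends only on $|S|,|T|,|S\cap T|$, decompose $L$ into Johnson‑scheme blocks, and evaluate its eigenvalue on each eigenspace as in the proof of Theorem~\ref{eigenvalue_V_k}: each entry of $L$ carries one extra factor $n^{-1}$ relative to the corresponding entry of $\E_{D_p}[\phi_S\phi_T]$ (the extra $\phi_1$ turns a $\delta_k$ into a $\delta_{k+1}$), so the $\Theta(1)$ eigenvalues there collapse to $O(n^{-1/2})$; this route must additionally handle the fact that the relevant null space is no longer homogeneous, because $\sum_{i\ge2}\phi_i$ equals a nonzero constant on $\mathrm{supp}(D_{\phi_1>0})$ and on $\mathrm{supp}(D_{\phi_1<0})$.
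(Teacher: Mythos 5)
Your proposal is correct and follows essentially the same route as the paper: the single-coordinate-flip coupling you describe is exactly the paper's decomposition of $D_{\phi_1>0}$ and $D_{\phi_1<0}$ through the auxiliary distribution $F$ on $n-2$ characters, followed by the same difference-of-squares factorization, Cauchy--Schwarz, the discrete derivative $u_i$ (the paper's $g_i$), and Parseval's bound $\sum_i\|g_i\|_2^2\le d\|g\|_2^2$. The one step you rightly flag as delicate is precisely where the paper does extra work: for the ``sum'' factor it avoids the basis-mismatch issue by the lossy reduction $\E_{D_{\phi_1>0}}[g^2]\le\frac{1}{1-p}\E_{D_p}[g^2]$ (whence the $\frac{1}{p(1-p)}$ in the final constant), and for the derivative term it re-runs the Section~\ref{eigenvalues} eigenvalue analysis for $F$ (Claim~\ref{eigenvalue_D_exclude2}), matching the fixes you propose.
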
 
From the above lemma, we rewrite the equation \eqref{fourth_term} from the upper bound of its eigenvalues:
\begin{multline*}
\E_{D_p}[\phi_1 \cdot h_0 \cdot h_1^3] =  \sqrt{p(1-p)} \left(\E_{D_{\phi_1>0}}[h_0 \cdot h_1^3] - \E_{D_{\phi_1<0}}[h_0 \cdot h_1^3]\right) \\
 =  \sqrt{p(1-p)} (h_0 h_1)^T L \cdot (h_1^2)  
   \le  \sqrt{p(1-p)}  \cdot \frac{3 (2d)^{3/2}}{p(1-p)} \cdot \frac{1}{\sqrt{n}} \cdot \|h_0 h_1\|_2 \cdot \|h_1^2\|_2 .
\end{multline*}
Then we use the inequality of arithmetic and geometric means on it: 
\[\E_{D_p}[\phi_1 \cdot h_0 \cdot h_1^3] \le  \sqrt{p(1-p)} \cdot \frac{10 d^{3/2}}{p(1-p)} \cdot \frac{\|h_0 h_1\|^2_2 + \|h_1^2\|^2_2/n}{2} . \] 
Next, we use the $2 \to 4$ hypercontractivity $\|h^2\|_2^2=\E_{U_p}[h^4] \le 9^d \cdot \left(\frac{p^2}{1-p}+\frac{(1-p)^2}{p}\right)^d \|h\|_2^4$ in $U_p$ and the Cauchy-Schwarz inequality to further simplify it to: 
\begin{multline}\label{new_fourth_term}
\E_{D_p}[\phi_1 \cdot h_0 \cdot h_1^3] \le \frac{5d^{3/2}}{\sqrt{p(1-p)}} \cdot (\|h_0^2\|_2 \cdot \|h_1^2\|_2 + \frac{9^d \cdot \left(\frac{p^2}{1-p}+\frac{(1-p)^2}{p}\right)^d}{n} \|h_1\|^4_2) \\
\le \frac{5d^{3/2}}{\sqrt{p(1-p)}} \cdot 9^d \cdot \left(\frac{p^2}{1-p}+\frac{(1-p)^2}{p}\right)^d(\|h_0\|^2_2 \cdot \|h_1\|^2_2 + \frac{1}{n} \|h_1\|^4_2) .
\end{multline}

From all discussion, we bound $E[f^4]$ by the upper bound of each inequalities in \eqref{first_term},\eqref{second_term},\eqref{mid_term},\eqref{new_fourth_term}: 
\begin{align*}
&E_{D_p}[(\phi_1 h_0 + h_1)^4]\\
=& E_{D_p}[\phi_1^4 \cdot h_0^4] + 4 E_{D_p}[\phi_1^3 \cdot h_0^3 \cdot h_1] + 6E_{D_p}[\phi_1^2 \cdot h_0^2 \cdot h_1^2] + 4 E_{D_p}[\phi_1 \cdot h_0 \cdot h_1^3]+ E_{D_p}[h_1^4]\\
\le & d^{3/2} C^{d-0.5} \beta^{k-1} \|h_0\|_2^4 + 4d^{3/2} \left(C^{d-0.5} \beta^{k-1} \|h_0\|_2^4 + C^{d-1/4} \beta^{k-1} \|h_0\|_2^2 \|h_1\|_2^2\right)/2  + 6 d^{3/2} C^{d-1/4} \beta^{k-1} \|h_0\|_2^2 \|h_1\|_2^2 \\ & \qquad + 4\frac{5d^{3/2}}{\sqrt{p(1-p)}} \cdot 9^d \cdot \left(\frac{p^2}{1-p}+\frac{(1-p)^2}{p}\right)^d \left(\|h_0\|_2^2 \|h_1\|^2_2 + \frac{1}{n} \|h_1\|^4_2\right) + d^{3/2} \cdot C^d \beta^{k-1} \|h_1\|_2^4 \\
\le & 3 d^{3/2} \cdot C^{d-0.5} \beta^{k-1} \|h_0\|_2^4 + \left(8 d^{3/2} \cdot C^{d-1/4} \cdot \beta^{k-1} + \frac{20d^{3/2}}{\sqrt{p(1-p)}} \cdot 9^d \cdot \left(\frac{p^2}{1-p}+\frac{(1-p)^2}{p}\right)^d\right)\cdot \|h_0\|_2^2 \|h_1\|_2^2 \\ & \qquad + \left(\frac{20d^{3/2}}{\sqrt{p(1-p)}} \cdot 9^d \cdot \left(\frac{p^2}{1-p}+\frac{(1-p)^2}{p}\right)^d \cdot \frac{1}{n} + d^{3/2} \cdot C^d \beta^{k-1}\right)\|h_1\|_2^4\\
\le & d^{3/2} \cdot C^d \beta^k \|h_0\|_2^4 + d^{3/2} \cdot C^d \beta^k \cdot 2 \|h_0\|_2^2 \|h_1\|_2^2 + d^{3/2} \cdot (C^d/n + C^d \beta^{k-1}) \|h_1\|_2^4 \le d^{3/2} \cdot C^d \beta^k  \|f\|_2^4.
\end{align*}
\end{proofof}

We prove Lemma \ref{small_gap_eigenvalue} to finish the proof. Intuitively, both $\E_{D_{\phi_1>0}}[g^2]$ and $\E_{D_{\phi_1<0}}[g^2]$ are close to $\E_{D_p}[g^2]$ (we add the dummy character $\phi_1$ back in ${D_p}$) for a low-degree multilinear polynomial $g$; therefore their gap should be small compared to $\E_{D_p}[g^2]=O(\|g\|_2^2)$. Recall that ${D_p}$ is a uniform distribution on the constraint $\sum_i \phi_i=0$, i.e., there are always $n_{+}$ characters of $\phi_i$ with $\sqrt{\frac{p}{1-p}}$ and $n_{-}$ characters with $-\sqrt{\frac{1-p}{p}}$. For convenience, we abuse the notation $\phi$ to denote a vector of characters $(\phi_1,\cdots,\phi_n)$.

\begin{proofof}{Lemma \ref{small_gap_eigenvalue}}
Let $F$ be the $p$-biased distribution on $n-2$ characters with the global cardinality constraint $\sum_{i=2}^{n-1} \phi_i=-\sqrt{\frac{p}{1-p}}+\sqrt{\frac{1-p}{p}}=-q$, i.e.,  $n_{+}-1$ of the characters are always $\sqrt{\frac{p}{1-p}}$ and $n_{-}-1$ of the characters are always $-\sqrt{\frac{1-p}{p}}$. Let $\vec{\phi}_{-i}$ denote the vector $(\phi_2,\cdots,\phi_{i-1},\phi_{i+1},\cdots,\phi_{n})$ of $n-2$ characters such that we could sample $\vec{\phi}_{-i}$ from $F$. Hence $\phi \sim D_{\phi_1<0}$ is equivalent to the distribution that first samples $i$ from ${2,\cdots,n}$ then fixes $\phi_i=\sqrt{\frac{p}{1-p}}$ and samples $\vec{\phi}_{-i} \sim F$. Similarly, $\phi \sim D_{\phi_1>0}$ is equivalent to the distribution that first samples $i$ from ${2,\cdots,n}$ then fixes $\phi_i=-\sqrt{\frac{1-p}{p}}$ and samples $\vec{\phi}_{-i} \sim F$. 

For a multilinear polynomial $g$ depending on characters $\phi_2,\cdots,\phi_n$, we rewrite 
\begin{multline}\label{small_gap_g}
\E_{D_{\phi_1>0}}[g^2] - \E_{D_{\phi_1<0}}[g^2] \\= \E_{i} \E_{\phi \sim F}\left[g(\phi_i=\sqrt{\frac{p}{1-p}},\vec{\phi}_{-i}=\phi)^2 - g(\phi_i=-\sqrt{\frac{1-p}{p}},\vec{\phi}_{-i}=\phi)^2\right].
\end{multline} We will show its eigenvalue is upper bounded by $\frac{3 d^{3/2}}{p(1-p)} \cdot \sqrt{1/n}$. We first use the Cauchy-Schwarz inequality:
\begin{align*}
&\E_{\phi \sim F}\bigg[\left(g(\phi_i=\sqrt{\frac{p}{1-p}}, \vec{\phi}_{-i}=\phi)-g(\phi_i=-\sqrt{\frac{1-p}{p}}, \vec{\phi}_{-i}=\phi)\right)\\
& \qquad\qquad\qquad\qquad\qquad \cdot \left(g(\phi_i=\sqrt{\frac{p}{1-p}}, \vec{\phi}_{-i}=\phi)+g(\phi_i=-\sqrt{\frac{1-p}{p}}, \vec{\phi}_{-i}=\phi)\right)\bigg]\\
\le & \E_{\phi \sim F}\left[\left(g(\phi_i=\sqrt{\frac{p}{1-p}}, \vec{\phi}_{-i}=\phi)-g(\phi_i=-\sqrt{\frac{1-p}{p}}, \vec{\phi}_{-i}=\phi)\right)^2\right]^{1/2}  \\
& \qquad\qquad\qquad\qquad\qquad  \cdot \E_{\phi \sim F}\left[\left(g(\phi_i=\sqrt{\frac{p}{1-p}}, \vec{\phi}_{-i}=\phi) + g(\phi_i=-\sqrt{\frac{1-p}{p}}, \vec{\phi}_{-i}=\phi)\right)^2\right]^{1/2} .
\end{align*}
From the inequality of arithmetic and geometric means and the fact that $\E_{D_p}[g^2]\le d \|g\|_2^2$, observe that the second term is bounded by 
\begin{align*}
& \E_{\phi \sim F}\left[\left(g(\phi_i=\sqrt{\frac{p}{1-p}}, \vec{\phi}_{-i}=\phi) + g(\phi_i=-\sqrt{\frac{1-p}{p}}, \vec{\phi}_{-i}=\phi)\right)^2\right]\\
\le &2\E_{D_{\phi_1>0}}\left[g(\phi_i=\sqrt{\frac{p}{1-p}}, \vec{\phi}_{-i}=\phi)^2\right] + 2\E_{D_{\phi_1<0}}\left[g(\phi_i=-\sqrt{\frac{1-p}{p}}, \vec{\phi}_{-i}=\phi)^2\right] \\
\le & \frac{3}{p(1-p)} \E_{D_p}[g^2] \le \frac{3d}{p(1-p)} \cdot \|g\|^2_2.
\end{align*}

Then we turn to $\E_{\phi \sim F}\bigg[\left(g(\phi_i=\sqrt{\frac{p}{1-p}}, \vec{\phi}_{-i}=\phi) - g(\phi_i=-\sqrt{\frac{1-p}{p}}, \vec{\phi}_{-i}=\phi)\right)^2\bigg]^{1/2}.$ We use\\ $g_i=\sum_{S : i \in S}\hf(S) \phi_{S \setminus i}$ to replace $g(\phi_i=\sqrt{\frac{p}{1-p}}, \vec{\phi}_{-i}=\phi)-g(\phi_i=-\sqrt{\frac{1-p}{p}}, \vec{\phi}_{-i}=\phi)$:
$$\E_{\phi \sim F}\left[\left(\sum_{S: i\in S}\hg(S) \left(\sqrt{\frac{p}{1-p}}+\sqrt{\frac{1-p}{p}}\right)\phi_{S \setminus i}\right)^2\right]^{1/2} = \left(\sqrt{\frac{p}{1-p}}+\sqrt{\frac{1-p}{p}}\right) \E_{\phi \sim F}\left[g_i(\phi)^2\right]^{1/2}.$$

Eventually we bound $\E_{\phi \sim F}[g_i(\phi)^2]$ by its eigenvalue. Observe that $F$ is the distribution on $n-2$ characters with $(n_{+}-1)$ $\sqrt{\frac{p}{1-p}}$'s and $(n_{-}-1)$ $-\sqrt{\frac{1-p}{p}}$'s, which indicates $\sum_j \phi_j + q=0$ in $F$. However, the small difference between $\sum_j \phi_j + q=0$ and $\sum_j \phi_j=0$ will not change the major term in the eigenvalues of ${D_p}$. From the same analysis, the largest eigenvalue of $\E_F[g^2_i]$ is at most $d$. For completeness, we provide a calculation in Section \ref{subsec_bound_eigenvalue_F}.
\begin{claim}\label{eigenvalue_D_exclude2}
For any degree-at-most $d$ multilinear polynomial $g_i$, $\E_{\phi \sim F}[g_i(\phi)^2] \le d\|g_i\|_2^2$.
\end{claim}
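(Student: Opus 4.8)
The plan is to repeat the eigenspace--eigenvalue analysis of Section~\ref{eigenvalues} almost verbatim, now for the distribution $F$ in place of $D_p$. Write $n' = n-2$ for the number of characters of $F$, set $\delta^F_k = \E_F[\phi_S]$ for any $S$ with $|S|=k$ (well defined since $F$ is invariant under permuting $\phi_2,\dots,\phi_n$), and let $A_F$ be the set-symmetric matrix representing the quadratic form $g\mapsto\E_F[g^2]$, so that $\E_F[g_i^2] = g_i^T A_F g_i$, $A_F\succeq 0$, and each entry of $A_F$ depends only on $|S|,|T|,|S\cap T|$, exactly as for the matrix $A$ of Section~\ref{subsec_distributions}. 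Thus the machinery of Section~\ref{association_schemes} applies and it suffices to bound every eigenvalue of $A_F$ by $d$. The needed recurrence for $\delta^F$ comes out just as in Section~\ref{subsec_distributions}: since $\sum_j\phi_j = -q$ on $\mathrm{supp}(F)$, for any $T$ with $|T|=k$ we have $\E_F[(\sum_j\phi_j)\phi_T] = -q\,\delta^F_k$, and expanding the left side with $\phi_j\phi_T = \phi_{T\cup j}$ for $j\notin T$ and $\phi_j\phi_T = q\phi_T + \phi_{T\setminus j}$ for $j\in T$ gives
\[
k\,\delta^F_{k-1} + (k+1)\,q\,\delta^F_k + (n'-k)\,\delta^F_{k+1} = 0 ,
\]
together with $\delta^F_0 = 1$ and $\delta^F_1 = -q/n'$. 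This differs from~\eqref{biased_delta} only by replacing the coefficient $kq$ with $(k+1)q$ and $n$ with $n' = n - O(1)$, i.e.\ by $O(1)$-size perturbations of $O(1)$-size coefficients, so the induction proving Claim~\ref{bound_delta} carries over unchanged and yields $\delta^F_{2i} = (-1)^i (2i-1)!!/n^i + O(n^{-i-1})$ and $\delta^F_{2i-1} = (-1)^i O(n^{-i})$.

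With this in hand I would decompose $A_F$ exactly as in Section~\ref{eigenvalues}. Its null space contains the subspace $span\{(\sum_j\phi_j+q)\phi_S : S\in\binom{[n']}{\le d-1}\}$, of dimension $\binom{n'}{\le d-1}$, because $(\sum_j\phi_j+q)h\equiv 0$ on $\mathrm{supp}(F)$ for every $h$ of degree at most $d-1$. The top-weight diagonal block of $A_F$ is a Johnson-scheme matrix, hence carries the eigenspaces $V_0,\dots,V_d$ of Section~\ref{association_schemes}; following Grigoriev's construction, I would extend each $V_k$ to an eigenspace $W_k$ of $A_F$ with combination coefficients $\alpha^F_{k,k+i}$ pinned down by orthogonality to $(\sum_j\phi_j+q)\phi_T = \sum_{j\in T}\phi_{T\setminus j} + (|T|+1)q\,\phi_T + \sum_{j\notin T}\phi_{T\cup j}$, which produces $i\,\alpha^F_{k,k+i-1} + (k+i+1)q\,\alpha^F_{k,k+i} + (n'-2k-i)\,\alpha^F_{k,k+i+1} = 0$, again an $O(1)$-perturbation of~\eqref{biased_alpha}, so Claim~\ref{bound_alpha} holds verbatim for $\alpha^F$. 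A dimension count identical to Section~\ref{eigenvalues} shows the null space together with $W_0,\dots,W_d$ fills the whole space. Then, copying the proof of Theorem~\ref{eigenvalue_V_k} (the cancellation argument and the estimates for $\delta^F,\alpha^F$ are unchanged), the eigenvalue of $W_k$ equals $\sum_{even \ i=0}^{d-k}\frac{(i-1)!!(i-1)!!}{i!}\pm O(1/n) \le \lfloor d/2\rfloor+1+O(1/n)\le d$, exactly as in Corollary~\ref{biased_eigenvalue_2nd_E}. Combined with $A_F\succeq 0$, every eigenvalue of $A_F$ is at most $d$, hence $\E_{\phi\sim F}[g_i(\phi)^2] = g_i^T A_F g_i\le d\,\|g_i\|_2^2$.

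There is no new idea here beyond Section~\ref{eigenvalues}; the one point that needs checking --- the ``main obstacle'', such as it is --- is that passing from the homogeneous constraint $\sum_j\phi_j = 0$ on $n$ variables to the affine constraint $\sum_j\phi_j = -q$ on $n-2$ variables perturbs each of the $\delta$- and $\alpha$-recurrences only by $O(1)$ in coefficients that are themselves $O(1)$ (together with $n\mapsto n-2$), leaving every leading-order asymptotic, and in particular the eigenvalue formula and the bound $\lfloor d/2\rfloor+1+O(1/n)\le d$, intact. One may also note that $F$ is literally the uniform distribution on $\{\pm 1\}^{n-2}$ with exactly $pn-1$ entries equal to $-1$, i.e.\ $D_{p'}$ with $p' = (pn-1)/(n-2) = p\pm O(1/n)$, which makes ``the same analysis applies'' transparent (the small change $p\to p'$ affects $\|g_i\|_2$ only by a factor $(1+O(1/n))^{\deg g_i}$).
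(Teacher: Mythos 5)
Your proposal is correct and follows the paper's own proof essentially verbatim: the paper likewise re-runs the Section~\ref{eigenvalues} analysis for $F$, deriving the same recurrences $k\,\delta_{k-1}+(k+1)q\,\delta_k+(n-2-k)\,\delta_{k+1}=0$ and $i\,\alpha_{k,k+i-1}+(k+i+1)q\,\alpha_{k,k+i}+(n-2-2k-i)\,\alpha_{k,k+i+1}=0$, observing that these $O(1)$ perturbations leave the leading-order asymptotics of $\delta$, $\alpha$, and hence of $\tau_0$ and the eigenvalues, unchanged. (Your closing remark identifying $F$ with $D_{p'}$ is only a heuristic aside and is not needed; the main argument stands on its own.)
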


Therefore we have $\E_i [\E_{\phi \sim F}[g_i(\phi)^2]^{1/2}]\le \sqrt{d} \cdot \E_i [\|g_i(\phi)\|_2]$ and simplify the right hand side of inequality \eqref{small_gap_g} further:
\begin{align*}
& \E_i\bigg\{ \E_{\phi \sim F}\left[\left(g(\phi_i=\sqrt{\frac{p}{1-p}}, \vec{\phi}_{-i}=\phi)-g(\phi_i=-\sqrt{\frac{1-p}{p}}, \vec{\phi}_{-i}=\phi)\right)^2\right]^{1/2} \\
&\qquad\qquad\qquad\qquad \cdot \E_{\phi \sim F}\left[\left(g(\phi_i=\sqrt{\frac{p}{1-p}}, \vec{\phi}_{-i}=\phi) + g(\phi_i=-\sqrt{\frac{1-p}{p}}, \vec{\phi}_{-i}=\phi)\right)^2\right]^{1/2} \bigg\}\\
&\le \E_i \left[ \left(\sqrt{\frac{p}{1-p}}+\sqrt{\frac{1-p}{p}}\right) \cdot \sqrt{d} \cdot \|g_i\|_2 \cdot \sqrt{\frac{3d}{p(1-p)}} \cdot \|g\|_2\right] \\
&\le   \frac{3 d}{p(1-p)} \cdot \|g\|_2 \cdot \E_i [\|g_i\|_2]
\end{align*}
Using the fact that $\sum_i \|g_i\|_2^2 \le d \|g\|_2^2$ and Cauchy-Schwartz again, we further simplify the expression above to obtain the desired upper bound on the absolute value of eigenvalues of $\E_{D_{\phi_1=1}}[g^2] - \E_{D_{\phi_1=-1}}[g^2]$ :
\begin{multline*}
 \frac{3 d}{p(1-p)} \cdot \|g\|_2 \cdot \E_i [\|g_i\|_2]
\le \frac{3 d}{p(1-p)} \cdot \|g\|_2 \cdot \frac{(\sum_i \|g_i\|^2_2)^{1/2}}{\sqrt{n}} \\ 
\le \frac{3 d}{p(1-p)} \cdot \|g\|_2 \cdot \sqrt{\frac{d}{n}} \|g\|_2 \le \frac{3 d^{3/2}}{p(1-p)} \cdot  \frac{\|g\|^2_2}{\sqrt{n}} .
\end{multline*}
\end{proofof}

\subsection{Proof of Claim \ref{eigenvalue_D_exclude2}}\label{subsec_bound_eigenvalue_F}
We follow the approach in Section \ref{eigenvalues} to determine the eigenvalues of $\E_F[f^2]$ for a polynomial $f=\sum_{S \in {[n-2] \choose \le d}}\hf(S)\phi_S$. Recall that $\sum_i \phi_i + q = 0$ over all support of $F$ for $q=\frac{2p-1}{\sqrt{p(1-p)}}$ as defined before. 

We abuse $\delta_k=\E_F[\phi_S]$ for a subset $S$ with size $k$. We start with $\delta_0=1$ and $\delta_1=-q/(n-2)$. From $(\sum_i \phi_i + q)\phi_S \equiv 0$ for $k\le d-1$ and any $S \in {[n] \choose k}$, we have 
\begin{align*}
\E[\sum_{j \in S} \phi_j \phi_S + q\chi_S + \sum_{j \notin S} \phi_{S \cup j}] = 0 \Rightarrow k \cdot \delta_{k-1} + (k+1)q \cdot \delta_k + (n-2 - k) \cdot \delta_{k+1}=0
\end{align*}
Now we determine the eigenspaces of $\E_F[f^2]$. The eigenspace of an eigenvalue 0 is $span\{(\sum_i \phi + q)\phi_S|S \in {[n-2] \choose \le d-1}\}$. There are $d+1$ non-zero eigenspaces $V_0,V_1,\cdots,V_d$. The eigenspace $V_i$ of $\E_F[f^2]$ is spanned by $\{\hf(S)\phi_S|S \in {[n-2] \choose i}\}$. For each $f \in V_i$, $f$ satisfies the following properties:
\begin{enumerate}
\item $\forall T' \in {[n-2] \choose i-1}, \sum_{j \notin T'}\hf(S \cup j)=0$ (neglect this contraint for $V_0$).
\item $\forall T \in {[n] \choose <i}, \hf(T)=0$.
\item $\forall T \in {[n-2] \choose >i}, \hf(T)=\alpha_{k,|T|}\sum_{S \in T}\hf(S)$ where $\alpha_{k,k}=1$ and $\alpha_{k,k+i}$ satisfies $$i \cdot \alpha_{k,k+i-1} + (k+i+1)\cdot q \cdot\alpha_{k,k+i} + (n-2-2k-i)\alpha_{k,k+i+1}=0.$$
\end{enumerate}
We show the calculation of $\alpha_{k,k+i}$ as follows: fix a subset $T$ of size $k+i$ and consider the orthogonality between $(\sum_i \phi_i + q)]\phi_T$ and $f\in V_k$:
\begin{align*}
& \sum_{j \in T} \alpha_{k,k+i-1} \sum_{S \in {T\setminus j \choose k}} \hf(S) + (k+i+1)\cdot q\cdot \alpha_{k,k+i}\sum_{S \in {T \choose k}}\hf(S) + \sum_{j \notin T} \alpha_{k,k+i+1} \sum_{S \in {T \cup j \choose k}}\hf(S)=0  \\
\Rightarrow & \sum_{S \in {T \choose k}}\Big((k+i+1)\cdot q \cdot\alpha_{k,k+i} + (n-2-k-i)\alpha_{k,k+i+1} + i \cdot \alpha_{k,k+i-1}\Big) \hf(S)\\
&\qquad\qquad\qquad\qquad\qquad\qquad\qquad\qquad\qquad\qquad\qquad+ \sum_{T' \in {T \choose k-1}} \alpha_{k,k+i+1} \sum_{j \notin T} \hf(T' \cup j)=0 .
\end{align*}
Using the first property $\forall T' \in {[n-2] \choose i-1}, \sum_{j \notin T'}\hf(S \cup j)=0$ to remove all $S' \notin T$, we have 
$$\sum_{S \in {T \choose k}}\big(i \cdot \alpha_{k,k+i-1} + (k+i+1)\cdot q \cdot\alpha_{k,k+i} + (n-2-2k-i)\alpha_{k,k+i+1} \big)\hf(S)=0$$
We calculate the eigenvalues of $V_k$ following the approach in Section \ref{eigenvalues}. Fix $S$ and $S'$ with $i = |S \vartriangle S'|$, we still use $\tau_i$ to denote the coefficients of $\hf(S')$ in the expansion of $\sum_{T} (\delta_{S \vartriangle T}-\delta_S \delta_T)\hf(T)$. Observe that $\tau_i$ is as same as the definition in Section \ref{eigenvalues} in terms of $\delta$ and $\alpha_k:$
\begin{align*}
\tau_0&=\sum_{i=0}^{d-k} {n-2-k \choose i} \cdot \alpha_{k,k+i} \cdot \delta_i, & \tau_{2l}&=\sum_{i=0}^{d-k}\alpha_{k,k+i} \sum_{t=0}^{i}{l \choose t}{n-2-k-2l \choose i-t}\delta_{2l+i-2t} .
\end{align*}

Observe that the small difference between $(\sum_i \phi_i + q)\equiv 0$ and $\sum_i \phi_i\equiv 0$ only changes a little in the recurrence formulas of $\delta$ and $\alpha$. For $\delta_{2i}$ and $\alpha_{k,k+2i}$ of an integer $i$, the major term is still determined by $\delta_{2i-2}$ and $\alpha_{k,k+2i-2}$. For $\delta_{2i+1}$ and $\alpha_{k,k+2i+1}$, they are still in the same order (the constant before $n^{-i-1}$ will not change the order). Using the same induction on $\delta$ and $\alpha$, we have
\begin{enumerate}
\item $\delta_{2i}=(-1)^i \frac{(2i-1)!!}{n^i}+O(n^{-i-1})$;
\item $\delta_{2i+1}=O(n^{-i-1})$;
\item $\alpha_{k,k+2i}=(-1)^i \frac{(2i-1)!!}{n^i}+O(n^{-i-1})$;
\item $\alpha_{k,k+2i+1}=O(n^{-i-1})$.
\end{enumerate}
Hence $\tau_0=\sum_{even \ i=0}^{d-k} \frac{(i-1)!!(i-1)!!}{i!} + O(n^{-1})$ and $\tau_{2l}=O(n^{-l}).$ Follow the same analysis in Section \ref{eigenvalues}, we know the eigenvalue of $V_k$ is $\tau_0 \pm O(\tau_{2})=\sum_{even \ i=0}^{d-k} \frac{(i-1)!!(i-1)!!}{i!} + O(n^{-1})$.

From all discussion above, the eigenvalues of $\E_F[f^2]$ is at most $[\frac{d}{2}]+1 \le d$.

\section{Parameterized algorithm for CSPs above average with global cardinality constraints}\label{sec_FPT_global}
We show that CSPs above average with the global cardinality constraint $\sum_i x_i=(1-2p)n$ are fixed-parameter tractable for any $p \in [p_0,1-p_0]$ with an integer $pn$. We still use $D_p$ to denote the uniform distribution on all assignments in $\{\pm 1\}^n$ complying with $\sum_i x_i=(1-2p)n$.

Without loss of generality, we assume $p < 1/2$ and $(1-2p)n$ is an integer. We choose the standard basis $\{\chi_S\}$ in this section instead of $\{\phi_S\}$, because the Fourier coefficients in $\{\phi_S\}$ can be arbitrary small for some $p \in (0,1)$. 
\begin{theorem}\label{CSP_global}
For any constant $p_0 \in (0,1)$ and $d$, given an instance of $d$-ary CSP, a parameter $t$, and a parameter $p \in [p_0,1-p_0]$, there exists an algorithm with running time $n^{O(d)}$ that either finds a kernel on at most $C \cdot t^2$ variables or certifies that $OPT \ge AVG+t$ under the global cardinality constraint $\sum_i x_i=(1-2p)n$ for a constant $C=160 d^2 \cdot 30^d \left((\frac{1-p_0}{p_0})^2+(\frac{p_0}{1-p_0})^2\right)^d \cdot (d!)^{3d^2} \cdot (1/2p_0)^{4d}$.
\end{theorem}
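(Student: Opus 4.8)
The plan is to reduce the general cardinality constraint $\sum_i x_i = (1-2p)n$ to the already-solved bisection case (Theorem~\ref{CSP_bisection}) via random restriction, following the overview in the introduction. Since $p < 1/2$, a valid assignment has $(1-p)n$ ones and $pn$ negative ones; the surplus of ones over a balanced assignment is $(1-2p)n > 0$. So I would pick a uniformly random subset $Q \subseteq [n]$ of size $(1-2p)n$ and fix $x_i = 1$ for $i \in Q$. For any assignment complying with $\sum_i x_i = (1-2p)n$ whose ones include all of $Q$, the remaining $|\overline{Q}| = 2pn$ variables satisfy the bisection constraint $\sum_{i \in \overline Q} x_i = 0$. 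Let $f_Q(x_{\overline Q}) = f_{\calI}(x_{\overline Q}, x_Q = \vec 1)$; this is a degree-at-most-$d$ multilinear polynomial on $2pn$ variables whose coefficients in the $\{\chi_S\}$ basis are still multiples of $2^{-d}$ (restriction only merges monomials, so coefficients stay multiples of $2^{-d}$ after collecting — actually they become integer combinations of $2^{-d}$, hence multiples of $2^{-d}$).

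The key probabilistic observation is that sampling $x \sim D_p$ is equivalent to sampling $Q$ uniformly then sampling $x_{\overline Q} \sim D$ (the uniform distribution on bisections of $\overline Q$) and setting $x_Q = \vec 1$. Hence $\E_Q \E_D[f_Q] = \E_{D_p}[f_{\calI}] = AVG$ and, by the law of total variance, $\E_Q \Var_D[f_Q] \le \Var_{D_p}[f_{\calI}]$. I would split into two cases on $V := \Var_{D_p}(f_{\calI})$.

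\emph{Case 1: $V = \Omega(t^2)$ large.} Here I invoke the $2\to4$ hypercontractive inequality under $D_p$ (Corollary~\ref{hypercontractivity_global_2nd}): with $g = f_{\calI} - \E_{D_p}[f_{\calI}]$ we get $\E_{D_p}[g^4] \le 12 d^{3/2}\big(256((\tfrac{1-p}{p})^2 + (\tfrac{p}{1-p})^2)^2\big)^d \E_{D_p}[g^2]^2$. Choosing the threshold $V \ge 12 d^{3/2}(256(\cdots)^2)^d \cdot 4 t^2$ and applying the fourth moment method (Lemma~\ref{4th_moment_method}) gives $\Pr_{D_p}[g \ge t] > 0$, i.e.\ $OPT \ge AVG + t$; the algorithm certifies this. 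The constant $C$ in the theorem statement absorbs $p_0$ by monotonicity: $(\tfrac{1-p}{p})^2 + (\tfrac{p}{1-p})^2 \le (\tfrac{1-p_0}{p_0})^2 + (\tfrac{p_0}{1-p_0})^2$ for $p \in [p_0, 1-p_0]$.

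\emph{Case 2: $V = O(t^2)$ small.} Then $\E_Q \Var_D[f_Q] \le V = O(t^2)$, so by Markov at least half of the choices of $Q$ satisfy $\Var_D[f_Q] \le 2V = O(t^2)$. For each such $Q$, Theorem~\ref{CSP_bisection} applied to $f_Q$ on its $2pn$ variables either certifies $OPT(f_Q) \ge AVG(f_Q) + t$ or produces a kernel $K_Q \subseteq \overline Q$ of size $\le C_d \cdot (2V) = O(t^2)$ such that, modulo $\sum_{i \in \overline Q} x_i = 0$, $f_Q$ depends only on variables in $K_Q$ — concretely, there is a degree-$\le d{-}1$ polynomial $h_Q$ on $\overline Q$ with integral-ish coefficients so that $f_Q - (\sum_{i\in\overline Q} x_i) h_Q$ has $O(t^2)$ variables. (If the certifying branch ever fires we are done: lifting a good assignment for $f_Q$ back by setting $x_Q = \vec 1$ beats $AVG$, since $\E_Q[AVG(f_Q)] = AVG$ and we can pick the best $Q$.)

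The final and most delicate step — which I expect to be the main obstacle — is to deduce from ``$f_Q$ has a small kernel for most $Q$'' that $f_{\calI}$ \emph{itself} has a small kernel under $D_p$, i.e.\ that there is a degree-$\le d{-}1$ polynomial $h$ on $[n]$ with coefficients multiples of some $1/\Gamma$ ($\Gamma$ depending only on $d, p_0$) such that $f_{\calI} - (\sum_i x_i - (1-2p)n)h$ depends on only $O(t^2)$ variables. The plan: a variable $x_j$ is ``relevant'' for $f_{\calI}$ (mod the cardinality constraint) iff it appears in $f_{\calI} - (\sum_i x_i - (1-2p)n)h_{f_{\calI}}$ where $h_{f_{\calI}}$ is the $D_p$-projection onto the null space. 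From Corollary~\ref{variance_L2_norm} the $L_2^2$ of this projected-out polynomial is $\Theta(V) = O(t^2)$. I would argue by contradiction: if the projected-out polynomial $f_{\calI} - (1-2p)$-shifted-$(\sum x_i)h$ genuinely involved more than $C't^2$ variables, then for a random $Q$ the restriction $f_Q$ — whose own projected-out polynomial is the restriction of $f_{\calI}$'s — would, with probability bounded below, still involve more than $C_d(2V)$ variables of $\overline Q$, contradicting what Theorem~\ref{CSP_bisection} gave for most $Q$. Making this rigorous requires (i) showing that restriction commutes appropriately with the null-space projection up to bounded $L_2$-distortion — here I would use that the null space under $D_p$ restricted to $\overline Q$ with $x_Q = \vec 1$ maps into the null space under $D$ (since $\sum_i x_i = (1-2p)n$ and $x_Q = \vec 1$ force $\sum_{i \in \overline Q} x_i = 0$), combined with the spectral bounds of Section~\ref{eigenvalues} (eigenvalues in $[0.5, d]$ on the orthogonal complement) to control how much a relevant variable of $f_{\calI}$ can be ``hidden'' after restriction; and (ii) a counting/averaging argument: if $x_j$ has Fourier mass $\ge \mu$ on sets not killed by the null space, a random $Q$ avoids $j$ with probability $2p \ge 2p_0$, and conditioned on $j \in \overline Q$, $x_j$ retains comparable relevant mass in $f_Q$ in expectation. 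Finally, having extracted the relevant variable set $K$ of size $O(t^2)$, I round the $D_p$-projection $h_{f_{\calI}}$ to integral coefficients (the $\{\chi_S\}$ basis keeps coefficients of $f_{\calI}$ multiples of $2^{-d}$, avoiding the irrationality obstruction that the $\{\phi_S\}$ basis would introduce) using a rounding procedure analogous to Theorem~\ref{rounding_h_dCSP} but adapted to the operator $f \mapsto (\sum_i x_i - (1-2p)n)h$; the extra factors $(d!)^{3d^2}$ and $(1/2p_0)^{4d}$ in $C$ come from iterating this rounding across $d$ levels and from the $2p_0$-probability losses in the averaging. Once $h$ is integral with denominator $\Gamma$ and $\|f_{\calI} - (\sum_i x_i - (1-2p)n)h\|_2^2 = O(t^2)$, the number of nonzero Fourier coefficients is $O(\Gamma^2 t^2)$, each touching $\le d$ variables, yielding the claimed kernel of $\le C t^2$ variables; brute-forcing the kernel finishes the algorithm in time $n^{O(d)} + 2^{O(t^2)}$.
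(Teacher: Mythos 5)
Your overall architecture matches the paper's: Case 1 (large $\Var_{D_p}$) via Corollary~\ref{hypercontractivity_global_2nd} and Lemma~\ref{4th_moment_method} is exactly what the paper does and is correct, and Case 2 is indeed handled by randomly restricting $Q \in {[n] \choose (1-2p)n}$ to $\vec{1}$ and invoking the bisection machinery, with $\E_Q \Var_{D_Q}(f_Q) \le \Var_{D_p}(f)$ as you state. The gap is in the step you yourself flag as the main obstacle: passing from ``$f_Q$ has a small kernel for most $Q$'' to a single global witness $h$. Your plan is an $L_2$/Fourier-mass contradiction argument plus ``a rounding procedure analogous to Theorem~\ref{rounding_h_dCSP} but adapted to the operator $f \mapsto (\sum_i x_i - (1-2p)n)h$.'' That last fallback is precisely the route the paper identifies as problematic: the super-constant $(1-2p)n$ on the right-hand side breaks the error analysis of the bisection rounding, so you cannot simply rerun Theorem~\ref{rounding_h_dCSP} against the shifted operator. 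More fundamentally, an averaging argument over $Q$ only tells you \emph{which} variables are relevant; it does not produce the polynomial $h$, because each restriction yields its own $h_Q$ living on $\overline{Q}$ and there is no a priori reason these cohere.

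The paper's missing ingredient is Claim~\ref{unique_h}: if some degree-$\le d{-}1$ polynomial $h$ renders at least $d$ variables inactive in $f - (\sum_i x_i - (1-2p)n)h$, then $h$ is \emph{uniquely determined} by $f$ and any $d$ of those variables (via the linear system $\hf(T) = \sum_{j\in T}\hh(T\setminus j)$ solved as in Lemma~\ref{eqn_d}). This rigidity is what glues the various $h_{Q}$ into one global $h_d$, and it also shows each variable is inactive either for a constant fraction of $Q$'s or for none, which is what makes the counting work. A second point you omit: after subtracting $(\sum_i x_i - (1-2p)n)h_d$, the extra term $(1-2p)n\cdot h_d$ reintroduces the supposedly-killed variables in degrees $< d$, so the paper only claims inactivity in the degree-$d$ homogeneous part and then iterates the whole construction for degrees $d-1, d-2, \dots$; this iteration is where the $(d!)^{2d^2}$-type factors in $C$ come from. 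Finally, a minor but real error: if the bisection algorithm's certifying branch fires for some particular $Q$, you get an assignment beating $\E_{D_Q}[f_Q] + t$, not $AVG + t$, since $\E_{D_Q}[f_Q]$ can be below $AVG$ for that $Q$; the paper sidesteps this by never needing the certifying branch in Case 2 (it restricts attention to the $Q$'s with $\Var_{D_Q}(f_Q) \le 2\Var_{D_p}(f)/(2p)^{2d}$, which occur with probability at least $1 - (2p)^{2d}/2$).
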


\subsection{Rounding}

Let $f$ be a degree $d$ polynomial whose coefficients are multiples of $\gamma$ in the standard basis $\{\chi_S|S \in {[n] \choose \le d}\}$. We show how to find an integral-coefficients polynomial $h$ such that $f - \big(\sum_i x_i - (1-2p) n\big)h$ only depends on $O(\Var_{D_p}(f))$ variables. We use the rounding algorithm in Section \ref{rounding_bisection} as a black box, which provides a polynomial $h$ such that $f-\big(\sum_i x_i \big)h$ only depends on $O(\Var_D(f))$ variables (where $D$ is the distribution conditioned on the bisection constraint). Without loss of generality, we assume $\hf(\emptyset)=0$ because $\Var_{D_p}(f)$ is independent with $\hf(\emptyset)$. 

Before proving that $f$ depends on at most $O(\Var_{D_p}(f))$ variables, we first define the inactivity of a variable $x_i$ in $f$.
\begin{definition}
A variable $x_i$ for $i \in [n]$ is inactive in $f=\sum_S \hf(S) \chi_S$ if $\hf(S)=0$ for all $S$ containing $x_i$. \\ A variable $x_i$ is inactive in $f$ under the global cardinality constraint $\sum_i x_i = (1-2p) n$ if there exists a polynomial $h$ such that $x_i$ is inactive in $f - \big(\sum_i x_i - (1-2p) n\big)h$. 
\end{definition}
In general, there are multiple ways to choose $h$ to turn a variable into inactive. However, if we know a subset $S$ of $d$ variables and the existence of some $h$ to turn $S$ into inactive in $f - \big(\sum_i x_i - (1-2p) n\big)h$, we show that $h$ is uniquely determined by $S$. Intuitively, for any subset $S_1$ with $d-1$ variables, there are ${S_1 \cup S \choose d}={2d-1 \choose d}$ ways to choose a subset of size $d$. Any $d$-subset $T$ in $S_1 \cup S$ contains at least one inactive variable such that $\hf(T) - \sum_{j \in T} \hh(T \setminus j)=0$ from the assumption. At the same time, there are at most ${2d-1 \choose d-1}$ coefficients of $\hh$ in $S_1 \cup S$. So there is only one solution of coefficients in $\hh$ to satisfy these ${2d-1 \choose d}$ equations. 
\begin{claim}\label{unique_h}
Given $f=\sum_{T \in {[n] \choose \le d}}\hf(T)\chi_T$ and $p \in [p_0,1-p_0]$, let $S$ be a subset with at least $d$ variables such that there exists a degree $\le d-1$ multilinear polynomial $h$ turning $S$ into inactive in $f - \big(\sum_i x_i - (1-2p) n\big)h$. Then $h$ is uniquely determined by any $d$ elements in $S$ and $f$.
\end{claim}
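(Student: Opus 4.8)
The plan is to reduce the claim to the following uniqueness statement and then prove that: if $h_1,h_2$ are degree-at-most-$(d-1)$ multilinear polynomials and every variable of a fixed $d$-element set $S_0\subseteq S$ is inactive both in $f-\big(\sum_i x_i-(1-2p)n\big)h_1$ and in $f-\big(\sum_i x_i-(1-2p)n\big)h_2$, then $h_1=h_2$. This is enough: the $h$ in the statement makes \emph{all} of $S$, hence all of $S_0$, inactive, so it is the unique solution of the rational linear system ``every variable of $S_0$ is inactive in $f-\big(\sum_i x_i-(1-2p)n\big)h$'' in the Fourier coefficients of $h$; the system is consistent by hypothesis, and therefore $h$ is determined by (and, since the system has size $n^{O(d)}$, efficiently computable from) $f$ and any $d$ of the variables in $S$.

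Write $P=\sum_{i\in[n]}x_i-(1-2p)n$ and $g=h_1-h_2$, a multilinear polynomial of degree at most $d-1$. Using $x_i\chi_T=\chi_{T\setdiff\{i\}}$ one computes, for every $T$,
\[
\widehat{Pg}(T)=(2p-1)n\cdot\hg(T)+\sum_{i\in T}\hg(T\setminus\{i\})+\sum_{i\notin T}\hg(T\cup\{i\}).
\]
Because $S_0\subseteq S$ and each variable of $S_0$ is inactive in $f-Ph_1$ and in $f-Ph_2$, for every $T$ with $T\cap S_0\neq\emptyset$ we have $\widehat{Ph_1}(T)=\hf(T)=\widehat{Ph_2}(T)$, so $\widehat{Pg}(T)=0$. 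It remains to conclude that $g=0$.

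Suppose $g\neq0$ and let $k=\deg g\le d-1$. For $|T|=k+1$ the terms $(2p-1)n\,\hg(T)$ and $\sum_{i\notin T}\hg(T\cup\{i\})$ vanish, since they involve only Fourier coefficients of weight $k+1$ and $k+2$, above $\deg g$; hence
\[
\sum_{i\in T}\hg(T\setminus\{i\})=0\qquad\text{for every }(k+1)\text{-subset }T\text{ with }T\cap S_0\neq\emptyset.
\]
Fix any $k$-subset $U$. Assuming $n\ge 2d-1$, pick an auxiliary ground set $W$ with $U\subseteq W$, $|W|=2k+1$, and $|W\cap S_0|=k+1$ (possible because $|S_0|=d\ge k+1$: add $k+1-|U\cap S_0|$ further elements of $S_0$ to $U$, then pad with elements outside $S_0$). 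Since $|W\setminus S_0|=k<k+1$, every $(k+1)$-subset of $W$ meets $S_0$, so all of the equations above with $T\subseteq W$ hold; their coefficient matrix is exactly the inclusion matrix of the $k$-subsets against the $(k+1)$-subsets of $W$. As $|W|=k+(k+1)$, this matrix has full column rank $\binom{2k+1}{k}$ (Gottlieb's theorem on inclusion matrices; equivalently, the injectivity of the ``raising'' operator of the Johnson scheme, which also follows from the eigenspace description of Section~\ref{eigenvalues}), so the homogeneous system forces $\hg\equiv0$ on all $k$-subsets of $W$, and in particular $\hg(U)=0$. Thus the weight-$k$ part of $g$ vanishes, contradicting $k=\deg g$. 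Hence $g=0$, i.e.\ $h_1=h_2$.

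The step I expect to be the crux is this last full-rank assertion: the heuristic ``$\binom{2d-1}{d}$ equations, $\binom{2d-1}{d-1}$ unknowns, hence one solution'' preceding the statement is rigorous only once the incidence system is shown nonsingular, which is precisely the full-column-rank of the $k$-versus-$(k+1)$ inclusion matrix used above; this can be quoted (Gottlieb's theorem), derived from the Johnson-scheme machinery of Section~\ref{eigenvalues}, or obtained by the same kind of explicit alternating combination of inclusion relations that appears in the proof of Lemma~\ref{eqn_d}. A smaller point needing care is the choice of $W$: it must contain $U$, be small enough that all its $(k+1)$-subsets meet $S_0$, and be large enough for full column rank, and the inequality $k\le d-1$ (that is, $|S_0|=d$) is exactly what makes such a $W$ exist. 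The argument never divides by $(2p-1)n$, so it applies for every $p$; the degenerate case $n<2d-1$ concerns a constant-size instance and is handled directly.
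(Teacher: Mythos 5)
Your proof is correct (under the standing assumption $n\ge 2d-1$, which the paper's own argument also needs when it picks $d$ fresh variables disjoint from $S_1$), but it is organized quite differently from the paper's. The paper solves for $h$ explicitly and top-down: it first recovers the weight-$(d-1)$ coefficients $\hh(S_1)$ from the weight-$d$ equations $\hf(T)-\sum_{j\in T}\hh(T\setminus j)=0$ over $T\in{S\cup S_1 \choose d}$, using the same alternating combination $\beta_{d-i,i}$ as in Lemma~\ref{eqn_d} to relate $\hh(S_1)$ to $\hh(S_2)$ for $S_2\in{S\choose d-1}$ and then eliminating via $\hf(S)-\sum_{S_2}\hh(S_2)=0$, and it iterates down the weights. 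You instead prove uniqueness abstractly: setting $g=h_1-h_2$ and isolating the top-degree part of $g$ kills the $(2p-1)n\,\hg(T)$ term and the raising term in one stroke, so you never have to track the already-determined higher-weight coefficients or the $(1-2p)n$ correction at lower weights, which the paper's iteration handles implicitly; the surviving homogeneous system is the $k$-versus-$(k+1)$ inclusion system on a $(2k+1)$-set, nonsingular by Gottlieb's theorem, and existence plus computability then come for free from consistency of the linear system. The two arguments rest on the same underlying linear-algebra fact --- your inclusion-matrix rank assertion is exactly what the paper's explicit $\beta$-elimination verifies by hand --- but yours is cleaner as a pure uniqueness proof and correctly identifies the full-rank step as the crux. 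What your route gives up is the explicit formula: the paper's construction immediately yields that the coefficients of $h$ are multiples of $\gamma/d!$ when those of $f$ are multiples of $\gamma$ (the Remark following the Claim), and this integrality is used downstream in Theorem~\ref{rounding_global} to control the granularity $\gamma_{d-1}=\gamma/d!$ of $f_d$; with your argument that denominator bound would have to be extracted separately, e.g.\ via Cramer's rule on your linear system or by running the paper's elimination once uniqueness is known.
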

\begin{proof}
Without lose of generality, we assume $S=\{1,\cdots,d\}$ and determine $\hh(S_1)$ for $S_1=\{i_1,\cdots,i_{d-1}\}$. 

For simplicity, we first consider the case $S \cap S_1=\emptyset$. From the definition, we know that for any $T \in {S \cup S_1 \choose d}$, $T$ contains at least one inactive variable, which indicates $\hf(T) - \sum_{j \in T} \hh(T \setminus j)=0.$ Hence we can repeat the argument in Lemma \ref{eqn_d} to determine $\hh(S_1)$ from $\hf(T)$ over all $T \in {S \cup S_1 \choose d}$.

Let $\beta_{d-1,1 }=(d-2)!$ and $\beta_{d-i-1, i+1}=\frac{-i}{d-i-1}\beta_{d-i,i}$ for any $i \in \{1,\cdots,d-2\}$ be the parameters define in Lemma \ref{eqn_d}. For any $S_2 \in {S \choose d-1}$, by the same calculation, $$\sum_{i=1}^{d-1}\beta_{d-i,i} \sum_{T_1 \in {S_1 \choose d-i}, T_2 \in {S_2 \choose i}} \left(\hf(T_1 \cup T_2) - \sum_{j \in T_1 \cup T_2} \hh(T_1 \cup T_2 \setminus j)\right)=0$$ indicates that (all $\hh(S)$ not $S_1$ or $S_2$ cancel with each other) $$(d-1)! \cdot \hh(S_1) + (-1)^d (d-1)! \cdot \hh(S_2) = \sum_{i=1}^{d-1}\beta_{d-i,i} \sum_{T_1 \in {S_1 \choose d-i}, T_2 \in {S_2 \choose i}} \hf(T_1 \cup T_2).$$
Hence $\hh(S_2)=(-1)^{d-1} \hh(S_1) + (-1)^{d} \sum_{i=1}^{d-1} \frac{\beta_{d-i,i}}{(d-1)!} \big(\sum_{T_1 \in {S_1 \choose d-i}, T_2 \in {S_2 \choose i}} \hf(T_1 \cup T_2)\big)$ for any $S_2 \in {S \choose d-1}$. Replacing all $\hh(S_2)$ in the equation $\hf(S) - \sum_{S_2 \in {S \choose d-1}}\hh(S_2)=0$, we obtain $\hh(S_1)$ in terms of $f$ and $S$. 

If $S \cap S_1 \neq \emptyset$, then we set $S'=S \setminus S_1$. Next we add arbitrary $|S \cap S_1|$ more variables into $S'$ such that $|S'|=d$ and $S' \cap S_1 =\emptyset$. Observe that $S' \cup S_1$ contains at least $d$ inactive variables. Repeat the above argument, we could determine $\hh(S_1)$.

After determining $\hh(S_1)$ for all $S_1 \in {[n] \choose d-1}$,  we repeat this argument for $S_1 \in {[n] \choose d-2}$ and so on. Therefore we could determine $\hh(S_1)$ for all $S_1 \in {[n] \choose \le d-1}$ from $S$ and the coefficients in $f$.
\end{proof}
\begin{remark}
The coefficients of $h$ are multiples of $\gamma/d!$ if the coefficients of $f$ are multiples of $\gamma$.
\end{remark}
Let $h_1$ and $h_2$ be two polynomials such that at least $d$ variables are inactive in both $f - \big(\sum_i x_i - (1-2p) n\big)h_1$ and $f - \big(\sum_i x_i - (1-2p) n\big)h_2$. We know that $h_1=h_2$ from the above claim. Furthermore, it implies that any variable that is inactive in $f - \big(\sum_i x_i - (1-2p) n\big)h_1$ is inactive in $f - \big(\sum_i x_i - (1-2p) n\big)h_2$ from the definition, and vice versa.

Based on this observation, we show how to find a degree $d-1$ function $h$ such that there are fews active variables left in $f - \big(\sum_i x_i - (1-2p) n\big)h$. The high level is to random sample a subset $Q$ of $(1-2p)n$ variables and restrict all variables in $Q$ to 1. Thus the rest variables constitutes the bisection constraint on $2pn$ variables such that we could use the rounding process in Section~\ref{rounding_bisection}. Let $k$ be a large number, $Q_1,\cdots,Q_k$ be $k$ random subsets and $h_1,\cdots,h_k$ be the $k$ functions after rounding in Section~\ref{rounding_bisection}. Intuitively, the number of active variables in $f-(\sum_{i \notin Q_1} x_i)h_1,\cdots,f-(\sum_{i \notin Q_i} x_k)h_k$ are small with high probability such that $h_1,\cdots,h_k$ share at least $d$ inactive variables. We can use one function $h$ to represent $h_1,\cdots,h_k$ from the above claim such that the union of inactive variables in $f-(\sum_{i \notin Q_j} x_i)h_j$ over all $j \in [k]$ are inactive in $f - (\sum_i x_i)h$ from the definition. Therefore there are a few active variables in $f - (\sum_i x_i)h$.

Let us move to $f - \big(\sum_i x_i - (1-2p) n\big)h$. Because $h$ is a degree-at-most $d-1$ function, $(1-2p)n \cdot h$ is a degree $\le d-1$ function. Thus we know that the number of active variables among degree $d$ terms in $f - \big(\sum_i x_i - (1-2p) n\big)h$ is upper bounded by the number of active variables in $f - (\sum_i x_i)h$. For the degree $<d$ terms left in $f - \big(\sum_i x_i - (1-2p) n\big)h$, we repeat the above process again.
\begin{theorem}\label{rounding_global}
Given  a global cardinality constraint $\sum_i x_i = (1-2p)n$ and a degree $d$ function $f=\sum_{S \in {[n] \choose \le d}}\hf(S)$ with $\Var_{D_p}(f)<n^{0.5}$ and coefficients of multiples of $\gamma$ , there is an efficient algorithm running in time $O(d n^{2d})$ to find a polynomial $h$ such that there are at most $\frac{C'_{p,d} \cdot \Var_{D_p}(f) }{\gamma^2  }$ active variables in $f - \big(\sum_i x_i - (1-2p) n\big)h$ for $C'_{p,d}=\frac{20 d^2 7^d \cdot (d!)^{2d^2}}{(2p)^{4d}}$. 
\end{theorem}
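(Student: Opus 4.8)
The plan is to reduce Theorem~\ref{rounding_global} to the bisection rounding of Section~\ref{rounding_bisection} by a random restriction, and then to assemble the restricted roundings into a single global polynomial $h$ using the uniqueness statement of Claim~\ref{unique_h}.

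\textbf{Step 1: random restriction.} I would sample a uniformly random set $Q \subseteq [n]$ with $|Q| = (1-2p)n$ and set $x_i = 1$ for $i \in Q$, obtaining the restricted polynomial $f_Q(x_{\overline{Q}}) = f(x_{\overline{Q}}, x_Q = \vec{1})$ on the $2pn$ variables of $\overline{Q}$. Since $(1-2p)n - |Q| = 0$, every assignment complying with $\sum_i x_i = (1-2p)n$ restricts to one complying with the bisection constraint $\sum_{i \in \overline{Q}} x_i = 0$, and sampling $x \sim D_p$ is the same as sampling $Q$, then $x_{\overline{Q}} \sim D$ on $\overline{Q}$, then setting $x_Q = \vec{1}$. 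By the law of total variance, $\E_Q[\Var_D(f_Q)] \le \Var_{D_p}(f) < n^{0.5}$, so by Markov's inequality all but a $o(1)$ fraction of the sets $Q$ satisfy $\Var_D(f_Q) = O(\Var_{D_p}(f)) = o((2pn)^{0.6})$; call such a $Q$ \emph{good}. For a good $Q$ I would apply Corollary~\ref{variance_L2_norm} and then Theorem~\ref{rounding_h_dCSP} (with the mean $\E_D[f_Q]$ subtracted, and with parameter $\gamma$), producing a degree-$(d-1)$ polynomial $h_Q$ on $\overline{Q}$ whose Fourier coefficients are multiples of $\gamma' := \gamma/(d!(d-1)!\cdots 2!)$ and for which $\|f_Q - \E_D[f_Q] - (\sum_{i \in \overline{Q}} x_i)h_Q\|_2^2 \le 7^d \cdot 2\,\Var_D(f_Q) = O(\Var_{D_p}(f))$. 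Because all coefficients of this polynomial are multiples of $\gamma'$, it has at most $O(\Var_{D_p}(f)/\gamma'^2)$ nonzero Fourier coefficients and hence an active set $A_Q \subseteq \overline{Q}$ of size at most $d \cdot O(\Var_{D_p}(f)/\gamma'^2)$.

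\textbf{Step 2: gluing the restricted roundings.} The structural input is Claim~\ref{unique_h}: whenever $\ge d$ variables can be made inactive in $f - (\sum_i x_i - (1-2p)n)h$, the polynomial $h$ is uniquely determined by $f$ together with \emph{any} $d$ of those variables (and the bisection analogue holds for each $f_Q$). I would take $k$ random restrictions $Q_1, \dots, Q_k$ ($k$ a large number depending only on $d$ and $p$), all good, producing $h_1, \dots, h_k$ and active sets $A_1, \dots, A_k$. First I would argue that the minimal active set $A^\ast$ of $f$ under the global constraint satisfies $A^\ast \subseteq \bigcup_j A_j$: a polynomial lies in the ideal generated by $\sum_i x_i - (1-2p)n$ precisely when all its restrictions $x_Q = \vec{1}$ lie in the corresponding bisection ideal, so a variable that cannot be removed from $A^\ast$ must be active in $f_Q$ for some good $Q$, and once $k$ is large enough the good restrictions avoiding that variable cover the whole support of $D_p$, forcing it into some $A_j$. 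Then, since $|\bigcup_j A_j| + \max_j |Q_j| < n$, I would pick $d$ anchor variables outside $\bigcup_j A_j$, reconstruct $h$ from $f$ and these anchors via the constructive version of Claim~\ref{unique_h}, and verify through uniqueness that restricting $h$ by $x_{Q_j} = \vec{1}$ reproduces $h_j$, so that $f - (\sum_i x_i - (1-2p)n)h$ has active set inside $\bigcup_j A_j$, of size $O(k \cdot \Var_{D_p}(f)/\gamma'^2)$. The $k$, the $(d!(d-1)!\cdots 2!)^2$ from $\gamma'$, and an extra $(2p)^{-O(d)}$ factor (from the smallest nonzero eigenvalue $(2p)^d$ of $\Var_{D_p}$ in the $\chi$-basis, needed to relate $\Var_{D_p}(f)$ to the $\chi$-coefficients of $f$) combine into the stated constant $C'_{p,d}$.

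\textbf{Step 3: iterating over the degree, and running time.} Since $(1-2p)n \cdot h$ has degree $\le d-1$, the degree-$d$ homogeneous part of $f - (\sum_i x_i - (1-2p)n)h$ already has a small active set; the lower-degree remainder is again a polynomial with coefficients of a known denominator and variance at most $\Var_{D_p}(f)$, so I would recurse, handling weight $d$, then $d-1$, and so on for $d$ rounds. Each round multiplies the coefficient denominator by a further $d!(d-1)!\cdots 2!$ factor and incurs another $(2p)^{-O(d)}$ factor, which is why the final constant carries $(d!)^{\Theta(d^2)}$ and $(2p)^{-\Theta(d)}$; each round runs the Section~\ref{rounding_bisection} machinery in time $O(n^{2d})$, for a total of $O(d\,n^{2d})$.

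\textbf{Main obstacle.} The delicate part is Step 2: showing that the independently rounded polynomials $h_j$ on the various $\overline{Q_j}$ genuinely arise from one global $h$ with a small active set. This is not purely algebraic — it combines the uniqueness of $h$ (Claim~\ref{unique_h}) with a probabilistic covering argument ensuring that, with high probability over the random restrictions, every variable active under the global constraint is exposed as active in one of the \emph{good} restrictions, while the number of restrictions needed stays a constant depending only on $d$ and $p$ so that $\bigcup_j A_j$ does not blow up. Tracking the constants so they match $C'_{p,d}$, and making sure the eigenvalue bounds of Section~\ref{eigenvalues} are invoked in the $\chi$-basis rather than the $\phi$-basis (whose coefficients can be super-constant in $1/p$), is where most of the remaining effort goes.
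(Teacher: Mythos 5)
Your skeleton --- restricting a random $Q$ of size $(1-2p)n$ to $\vec 1$ to reduce to the bisection rounding of Theorem~\ref{rounding_h_dCSP}, gluing the restricted roundings into one global $h$ via Claim~\ref{unique_h}, and iterating degree by degree to absorb the lower-order terms introduced by $(1-2p)n\cdot h$ --- is exactly the paper's. Steps 1 and 3 are essentially right. The gap is in Step 2, and it is the heart of the proof.

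Your claim that the global active set satisfies $A^\ast \subseteq \bigcup_{j=1}^k A_{Q_j}$ for a constant number $k$ of sampled restrictions does not hold up. First, ``the good restrictions avoiding that variable cover the whole support of $D_p$'' is false: each restriction $x_Q=\vec 1$ covers only a ${2pn \choose pn}/{n \choose pn}$ --- exponentially small --- fraction of $supp(D_p)$, so no constant (or even polynomial) number of restrictions covers the support. Second, even granting that a globally active variable must be active in $f_Q$ for \emph{some} good $Q$, a variable whose activation probability $\E_Q[X_Q(i)]$ is positive but tiny will with high probability appear in none of your $k$ samples, so $\bigcup_j A_{Q_j}$ can miss it; pushing $k$ up to make a union bound over all candidate variables work forces $k=\Omega(\log n)$ (or a dependence on $t$), which destroys the $O(\Var_{D_p}(f)/\gamma^2)$ bound. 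The paper does not sample several restrictions at all. It defines $X_Q(i)$ as the indicator that $x_i$ is active in the bisection rounding of $f_Q$, uses $\E_Q[\sum_i X_Q(i)]\le 2C'_d\Var_{D_p}(f)$ to bound the number of variables with $\E_Q[X_Q(i)]>\frac{(2p)^{2d}}{5d}$, and then proves that \emph{every} variable with $\E_Q[X_Q(i)]\le\frac{(2p)^{2d}}{5d}$ is inactive under one common $h_d$. The mechanism is: fix $d$ such low-activation anchors and \emph{one} further such variable (padded to $2d$ variables in total); the probability that a single random $Q$ avoids all $2d$ of them, leaves all of them inactive in $f_Q$, and has $\Var_{D_Q}(f_Q)$ controlled is at least $0.99(2p)^{2d}-2d\cdot\frac{(2p)^{2d}}{5d}-\frac{(2p)^{2d}}{2}>0$, so such a $Q$ \emph{exists}; Claim~\ref{unique_h} then forces the restricted rounding to agree with the unique global $h_d$ determined by the anchors, so the extra variable is inactive in $f-(\sum_i x_i)h_d$ as well. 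Ranging over the extra variable kills all low-activation variables at once, with a union bound over only $2d$ events rather than over $n$ variables; the algorithm is then derandomized by enumerating all ${n \choose d}$ anchor sets. This dichotomy is the missing idea, and without it Step 2 does not deliver the stated bound. (A minor further point: the $(2p)^{-4d}$ in $C'_{p,d}$ comes from this avoidance probability together with a Markov bound on $\Var_{D_Q}(f_Q)$, not from the smallest nonzero eigenvalue of $\Var_{D_p}$ in the $\chi$-basis, which the paper does not use in this proof.)
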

\begin{proof}
For any subset $Q \in {[n] \choose (1-2p) n}$, we consider the assignments conditioned on $x_Q=\vec{1}$ and use $f_Q$ to denote the restricted function $f$ on $x_Q=\vec{1}$. Conditioned on $x_Q=\vec{1}$, the global cardinality constraint on the rest variables is $\sum_{i \notin Q} x_i=0$. We use $D_Q$ denote the distribution on assignments of $\{x_i|i \notin Q\}$ satisfying $\sum_{i \notin Q} x_i=0$, i.e., the distribution of $\{x_i|i \notin \bar{Q}\}$ under the bisection constraint.

Let $X_Q(i) \in \{0,1\}$ denote whether $x_i$ is active in $f_Q$ under the bisection constraint of $\bar{Q}$ or not after the bisection rounding in Theorem \ref{rounding_h_dCSP}. From Theorem \ref{rounding_h_dCSP}, we get an upper bound on the number of active variables in $f_Q$, i.e., $$\sum_{i}X_Q(i)\le 2 C'_d \cdot \Var_{D_Q}(f_Q)$$ for $C'_d= \frac{7^d (d! (d-1)! \cdots 2!)^2}{\gamma^2}$ and any $Q$ with $\Var_{D_Q}(f_Q)=O(n^{0.6})$.

We claim that $$\E_Q[\Var_{D_Q}(f_Q)]\le \Var_{D_p}(f).$$ From the definition, $\E_Q[\Var_{D_Q}(f_Q)]=\E_Q \E_{y \sim D_Q} [f_Q(y)^2] - \E_Q \big[\E_{y \sim D_Q} [f_Q(y)]\big]^2$. At the same time, we observe that $\E_Q \E_{y \sim D_Q} [f_Q(y)^2]= \E_D [f^2]$ and $\E_Q [\E_{y \sim D_Q} f_Q(y)]^2 \ge \E_{D_p}[f]^2$. Therefore $\E_Q[\Var_{D_Q}(f_Q)] \le \Var_{D_p}[f]$. One observation is that $\Pr_Q[\Var_{D_Q}\ge n^{0.6}]<n^{-0.1}$ from the assumption $\Var_{D_p}(f) <n^{0.5}$, which is very small such that we can neglect it in the rest of proof. From the discussion above, we have $\E_Q[\sum_{i}X_Q(i)] \le 2C'_d \cdot \Var_{D_p}(f)$ with high probability.

Now we consider the number of $i$'s with $\E_Q[X_Q(i)] \le \frac{(2p)^{2d}}{5d}$. Without loss of generality, we use $m$ to denote the number of $i$'s with $\E_Q[X_Q(i)] \le \frac{(2p)^{2d}}{5d}$ and further assume these variables are $\{1,2,\cdots,m\}$ for convenience. Hence for any $i>m$, $\E_Q[X_Q(i)] > \frac{(2p)^{2d}}{5d}$. We know the probability $\Var_{D_Q}(f)\le \frac{2 \Var_{D_p}(f)}{(2p)^{2d}}$ is at least $1 - \frac{(2p)^{2d}}{2}$, which implies $$n-m \le \frac{2C'_d \cdot \Var_{D_Q}(f)}{\frac{(2p)^{2d}}{5d}}\le \frac{20d \cdot C'_d \Var_{D_p(f)}}{(2p)^{4d}}.$$ We are going to show that $\E_Q[X_Q(i)]$ is either 0 or at least $\frac{(2p)^{2d}}{5d}$, which means that only $x_{m+1},\cdots,x_n$ are active in $f$ under $\sum_i x_i = 0$. Then we discuss how to find out a polynomial $h_d$ such that $x_1,\cdots,x_m$ are inactive in the degree $d$ terms of $f- \big(\sum_i x_i - (1-2p) n\big)h_d$.

We fix $d$ variables $x_1,\cdots,x_d$ and pick $d$ arbitrary variables $x_{j_1},\cdots,x_{j_d}$ from $\{d+1,\cdots,n\}$. We focus on $\{x_1,x_2,\cdots,x_d,x_{j_1},\cdots,x_{j_d}\}$ now. With probability at least $(2p)^{2d}-o(1) \ge 0.99 (2p)^{2d}$ over random sampling $Q$, none of these $2d$ variables is in $Q$. At the same time, with probability at least $1-2d \cdot  \frac{(2p)^{2d}}{5d}$, all variables in the intersction $\{x_1,\dots,x_m\}\cap \{x_1,x_2,\cdots,x_d,x_{j_1},\cdots,x_{j_d}\}$ are inactive in $f_Q$ under the bisection constraint on $\bar{Q}$ ($2d$ is for $x_{j_1},\cdots,x_{j_d}$ if necessary). Therefore, with probability at least $0.99 (2p)^{2d}- 2d \cdot  \frac{(2p)^{2d}}{5d} - \frac{(2p)^{2d}}{2}\ge 0.09 (2p)^{2d}$, $x_1,x_2,\cdots,x_d$ are inactive in $f_Q$ under $\sum_{i \notin Q}x_i=0$ and $n-m$ is small. Namely there exists a polynomial $h_{x_{j_1},\cdots,x_{j_d}}$ such that the variables in $\{x_1,\dots,x_m\}\cap \{x_1,x_2,\cdots,x_d,x_{j_1},\cdots,x_{j_d}\}$ are inactive in $f_Q - (\sum_{i \notin Q}x_i)h_{x_{j_1},\cdots,x_{j_d}}$. 

Now we apply Claim \ref{unique_h} on $S=\{1,\cdots,d\}$ in $f$ to obtain the unique polynomial $h_d$, which is the combination of $h_{x_{j_1},\cdots,x_{j_d}}$ over all choices of $j_1,\cdots,j_d$, and consider $f-(\sum_{i} x_i)h_d$. Because of the arbitrary choices of $x_{j_1},\cdots,x_{j_d}$, it implies that $x_1,\cdots,x_d$ are inactive in $f-(\sum_i x_i)h_d$. For example, we fix any $j_1,\cdots,j_d$ and $T=\{1,j_1,\cdots,j_{d-1}\}$. we know $\hf(T)-\sum_{j \in T}\hh_{x_{j_1},\cdots,x_{j_d}}(T \setminus j)=0$ from the definition of $h_{x_{j_1},\cdots,x_{j_d}}$. Because $h_d$ agrees with $h_{x_{j_1},\cdots,x_{j_d}}$ on the Fourier coefficients from Claim \ref{unique_h}, we have $\hf(T)-\sum_{j \in T}\hh_{d}(T \setminus j)=0$.

Furthermore, it implies that $x_{d+1},\cdots,x_m$ are also inactive in $f-(\sum_i x_i)h_d$. For example, we fix $j_1 \in \{d+1,\cdots,m\}$ and choose $j_2,\cdots,j_d$ arbitrarily. Then $x_{1},\cdots,x_{d},$ and $x_{j_1}$ are inactive in $f_Q - (\sum_{i \notin Q} x_i)h_{x_{j_1},\cdots,x_{j_d}}$ for some $Q$ from the discussion above, which indicates that $x_{j_1}$ are inactive in $f - (\sum_i x_i)h_d$ by Claim \ref{unique_h}.  

To find $h_d$ in time $O(n^{2d})$, we enumerate all possible choices of $d$ variables in $[n]$ as $S$. Then we apply Claim \ref{unique_h} to find the polynomial $h_S$ corresponding to $S$ and check $f - (\sum_i x_i)h_S$. If there are more than $m$ inactive variables in $f - (\sum_i x_i)h_S$, then we set $h_d=h_S$. Therefore the running time of this process is ${n \choose d} \cdot O(n^d)=O(n^{2d})$.

Hence, we can find a polynomial $h_d$ efficiently such that at least $m$ variables are inactive in $f - (\sum_i x_i)h_d$. Let us return to the original global cardinality constraint $\sum_i x_i = (1-2p) n$. Let $$f_d=f - \big(\sum_i x_i - (1-2p) n\big)h_d.$$ $x_1,\cdots,x_m$ are no longer inactive in $f_d$ because of the extra term $(1-2p) n \cdot h$. However, $x_1,\cdots,x_m$ are at least independent with the degree $d$ terms in $f_d$. Let $A_d$ denote the set for active variables in the degree $d$ terms of $f_d$, which is less than $\frac{20d \cdot C'_d  \Var_{D_p(f)}}{(2p)^{4d}}$ from the upper bound of $n-m$. 

For $f_d$, observe that $\Var_{D_p}(f_d)=\Var_{D_p}(f)$ and all coefficients of $f_d$ are multiples of $\gamma_{d-1}=\gamma/d!$ from Claim~\ref{unique_h}. For $f_d$, we neglect its degree $d$ terms in $A_d$ and treat it as a degree $d-1$ function from now on. Then we could repeat the above process again for the degree $d-1$ terms in $f_d$ to obtain a degree $d-2$ polynomial $h_{d-1}$ such that the active set $A_{d-1}$ in the degree $d-1$ terms of $f_{d-1}=f_{d} - \big(\sum_i x_i - (1-2p) n\big)h_{d-1}$ contains at most $\frac{20d \cdot C'_{d-1} \Var_{D_p(f)}}{(2p)^{4d}}$ variables for $C'_{d-1}=\frac{((d-1)^2 \cdots 2!)^2}{\gamma_{d-1}^2}$. At the same time, observe that the degree of $\big(\sum_i x_i - (1-2p) n\big)h_{d-1}$ is at most $d-1$ such that it will not introduce degree $d$ terms to $f_{d-1}$. Then we repeat it again for terms of degree $d-2, d-3,$ and so on.

To summarize, we can find a polynomial $h$ such that $A_d \cup A_{d-1} \cdots \cup A_1$ is the active set in $f - \big(\sum_i x_i - (1-2p) n\big)h$. At the same time, $|A_d \cup A_{d-1} \cdots \cup A_1|\le \sum_i |A_i| \le \frac{20 d^2 7^d \cdot \Var_{D_p}(f) \cdot (d!)^{2d^2}}{\gamma^2 \cdot (2p)^{4d}}$.
\end{proof}

\subsection{Proof of Theoreom \ref{CSP_global}}
In this section, we prove Theorem \ref{CSP_global}. Let $f=f_{\cal I}$ be the degree $d$ function associated with the instance $\cal I$ and $g=f-\E_{D_p}[f]$ for convenience. We discuss $\Var_{D_p}[f]$ in two cases.

If $\Var_{D_p}[f]=\E_{D_p}[g^2] \ge 8 \left(16 \cdot \big((\frac{1-p}{p})^2+(\frac{p}{1-p})^2\big) \cdot d^{3}  \right)^d \cdot t^2$, we have $$\E_{D_p}[g^4] \le 12 \cdot \left(256 \cdot \big((\frac{1-p}{p})^2+(\frac{p}{1-p})^2\big)^2 \cdot d^{6}  \right)^d \E_{D_p}[g^2]^2$$ from the $2\to4$ hypercontractivity in Theorem \ref{hypercontractivity_global_2nd}. By Lemma \ref{4th_moment_method}, we know $$\Pr_{D_p}\left[g \ge \frac{\sqrt{\E_{D_p}[g^2]}}{2 \sqrt{12 \cdot \left(256 \cdot \big((\frac{1-p}{p})^2+(\frac{p}{1-p})^2\big)^2 \cdot d^{6}  \right)^d }}\right]>0.$$ Thus $\Pr_{D_p}[g \ge t]>0$, which demonstrates that $\Pr_{D_p}[f \ge \E_{D_p}[g]+t]>0$.

Otherwise we know $\Var_{D_p}[f]\le 8 \left(16 \cdot \big((\frac{1-p}{p})^2+(\frac{p}{1-p})^2\big) \cdot d^{3}  \right)^d \cdot t^2$. We set $\gamma=2^{-d}$. From Theorem \ref{rounding_global}, we could find a degree $d-1$ function $h$ in time $O(n^{2d})$ such that $f - \big(\sum_i x_i - (1-2p)n\big)h$ contains at most $\frac{ C'_{p,d} \cdot \Var_{D_p}(f)}{\gamma^2}$ variables. We further observe that $f(\alpha)=f(\alpha) - \big(\sum_i \alpha_i - (1-2p)n\big)h(\alpha)$ for any $\alpha$ in the support of $D_p$. Then we know the kernel of $f$ and $\cal I$ is at most 
\begin{multline*}
8 \left(16 \cdot \big((\frac{1-p}{p})^2+(\frac{p}{1-p})^2\big) \cdot d^{3}  \right)^d \cdot t^2 \cdot \frac{C'_{p,d}}{\gamma^2}\\< 8 \left(16 \cdot \big((\frac{1-p}{p})^2+(\frac{p}{1-p})^2\big) \cdot d^{3}  \right)^d  \cdot t^2 \cdot \frac{20 d^2 7^d \cdot \Var_{D_p}(f) \cdot (d!)^{2d^2} \cdot 2^{2d}}{\gamma^2 \cdot (2p)^{4d}} <C \cdot t^2.
\end{multline*}

The running time of this algorithm is $O(d n^{2d})$.

\section*{Acknowledgement}
We would like to thank Ryan O'Donnell and Yu Zhao for useful discussion on the hypercontractive inequalities. The first author is grateful to David Zuckerman for his constant support and encouragement, as well as for many fruitful discussions.

\bibliographystyle{plain}
\bibliography{FPT}

\end{document}